\definecolor{red}{rgb}{1,0,0}
\definecolor{noir}{rgb}{0,0,0}
\theoremstyle{theoreme}
\newtheorem{thm}{Theorem}[section]
\newtheorem{lm}[thm]{Lemma}
\newtheorem{prop}[thm]{Proposition}
\newtheorem{cor}[thm]{Corollary}
\newtheorem{hyp}{A\hspace{-.05in}}
\newtheorem{dfn}[thm]{Definition}
\def\N{\mathbb{N}}
\def\R{\mathbb{R}}
\def\C{\mathbb{C}}
\def\P{\mathbb{P}}
\def\Tr{\mathrm{Tr}}
\def\Id{\mathrm{Id}}
\def\Im{\mathrm{Im}}
\def\Real{\mathrm{Re}}
\def\Re{\text{Re}}
\begin{document}
\title{Mean field limit for Bosons with compact kernels interactions
  by Wigner measures transportation.}

\author{Boris Pawilowski\footnote{boris.pawilowski@univ-rennes1.fr,
  IRMAR, Universit\'e de Rennes I, campus de Beaulieu, 35042 Rennes
  Cedex, France.} \;, Quentin Liard\footnote{quentin.liard@univ-rennes1.fr, IRMAR, Universit\'e de Rennes I, campus de Beaulieu, 35042 Rennes
  Cedex, France.}}

\maketitle

\begin{abstract}
We consider a class of many-body Hamiltonians composed of a free (kinetic) part and a multi-particle (potential) interaction with a compactness assumption on the latter part.
We investigate the mean field limit of such quantum systems following the Wigner measures
approach. We prove the propagation of these  measures along the flow of
a nonlinear (Hartree) field equation. This enhances and complements some
previous results of the same type shown in \cite{AmNi2,AmNi3,FGS}.
\end{abstract}
{
\footnotesize {{\it Keywords}: mean field limit, second quantization,
  Wigner measures, continuity equation. 2010 Mathematics subject
  classification: 81S05, 81T10, 35Q55, 28A33}
}

\section{Introduction}

In the present paper, we consider the mean field problem for a
system of many quantum particles described by a $N$-body Schr\"odinger  Hamiltonian which
is typically a sum of a kinetic energy and a multi-particle interaction.
It is well-known that the mean field theory  provides a reliable approximation
of the many-body Schr\"odinger dynamics by a one particle (nonlinear) dynamics.
Such approximation is fundamental in physics in modelling of
Bose-Einstein condensates and superfluidity.
Mathematically, it is an interesting question with some subtlety.

There are several approaches to the derivation of the mean field limit.  For instance the method of coherent states \cite{GiVe1,Hep,RoSc} or  the BBGKY hierarchy of reduced density matrices (\cite{BGM,ChPa,ErYa,ESY1,KlMa, KnPi}), to mention only few. There is a newer approach
inspired by ideas from the analysis of oscillation phenomena in finite dimension and relying  on Wigner measures  (see Definition \ref{de.wigmeas} and \cite{AmNi1, AmNi2, AmNi3, AmNi4}).
With this method the propagation of fairly general quantum states can
be understood  in the mean field limit (\cite{AmNi2,AmNi3}). These
quantum states are constrained to a certain compactness property,
called the (PI) condition \eqref{PI},  essential for the
convergence. One expects that loss of compactness would be in fact an obstacle to the achievement of the limit. In \cite{AmNi3} the case of a bounded interaction was proven with the specific (PI) condition  \eqref{PI}.

Our main purpose is to show that under a compactness assumption  on
the interaction, one can describe the propagation of a wider class of
quantum states. The only assumption needed now is an uniform trace estimate reflecting a finite density of particles on these states but not the  (PI) condition \eqref{PI}. 

The main tool used here is the Wigner measure introduced
in \cite{AmNi1} in an infinite dimensional setting. These measures
reflect the ideas of space-phase analysis. It was developed, in finite
dimension, in the work of P.~G\'erard, P.A~Markowich, N.~Mauser, F.~Poupaud
\cite{MMGP}, B.~Helffer , A.~Martinez and D.~Robert \cite{HMR}, P.L Lions and T.~Paul \cite{LiPa} and L.Tartar \cite{Tar}. Our strategy is based on the work of Z.~Ammari and F.~Nier
\cite{AmNi4} and L. Ambrosio, N. Gigli and G. Savare in \cite{AGS} and it requires the study of continuity equations in infinite dimensional spaces.

\bigskip
\bigskip
\noindent
We work in the bosonic Fock space
$$
\Gamma_{s}(\mathcal{Z})=\bigoplus_{n=0}^{\infty}\bigvee^{n}
\mathcal{Z}=\bigoplus_{n=0}^{\infty}\mathcal{S}_{n}\mathcal{Z}^{\otimes n}\,,
$$
modelled on a one particle separable complex Hilbert space
$\mathcal{Z}$\,, where $\mathcal{S}_{n}$ is the symmetrization
projection defined on $\mathcal{Z}^{\otimes{n}}$ by
$$
\mathcal{S}_{n}(\varphi_{1}\otimes \cdots\otimes
\varphi_{n})=\frac{1}{n!}\sum_{\sigma\in
  \varSigma_{n}}\varphi_{\sigma(1)}\otimes\cdots\otimes \varphi_{\sigma(n)}\,,
$$ with the sum running over all permutations of $n$ elements. If not specified tensor products and orthogonal direct sums are
considered in their  Hilbert completed version.
We are interested in the mean field dynamics of
the many-body Hamiltonian with multi-particle interaction
\begin{equation}
\label{eq.Hneps}
H^{(n)}_{\varepsilon}=H_{\varepsilon}^{0,(n)}
+
\sum_{\ell=2}^{r}\varepsilon^{\ell}\frac{n!}{(n-\ell)!}\mathcal{S}_{n}(\tilde{Q}_{\ell}\otimes
\mathrm{Id}_{\bigvee^{n-\ell}\mathcal{Z}})\mathcal{S}_{n}\,,\quad n\geq
2r\,,
\end{equation}
in the asymptotic regime $\varepsilon\to 0$\,, $n\varepsilon\to 1$. Here
the $\tilde Q_\ell$'s are bounded symmetric operators on $\bigvee^{\ell}\mathcal{Z}$ and
\begin{equation}
  \label{eq.H0neps}
H_{\varepsilon}^{0,(n)}=\varepsilon
\sum_{i=1}^{n}\mathrm{Id}\otimes\cdots\otimes \mathrm{Id}\otimes
\underbrace{A}_{i}\otimes\mathrm{Id}\otimes \cdots\otimes \mathrm{Id}\,,
\end{equation}
where $A$ is a given self-adjoint operator.

Within the second quantization (see Appendix~\ref{se.wickq}), the operator
$H_{\varepsilon}^{(n)}$ (resp. $H_{\varepsilon}^{0,(n)}$) can be
written as a restriction to the subspace $\bigvee^{n}\mathcal Z$ of
the operator $H_{\varepsilon}$ (resp. $H_{\varepsilon}^{0}$)
defined on the Fock space and given by:
\begin{eqnarray}
\label{eq.HepsWick}
  && H_{\varepsilon}=H_{\varepsilon}^{0}+ Q^{Wick}\quad,\quad
  Q(z)=\sum_{\ell=2}^{r}\langle
z^{\otimes \ell}\,,\, \tilde{Q}_{\ell}z^{\otimes
  \ell}\rangle,\\
\label{eq.Heps0Wick}
&& H_{\varepsilon}^{0}=\mathrm{d}\Gamma(A)=\langle z\,,\, A z\rangle^{Wick}\,.
\end{eqnarray}
The mean field energy functional is
\begin{equation}
h(z,\bar z)=\langle z,Az\rangle + \; Q(z)\,,
\end{equation}
so that the mean field dynamics are given by the non linear equation

\begin{equation}
\label{eq.hartree0}
i \partial_{t} z_{t} = \partial_{\bar z}h(z_{t},\bar z_{t})=Az_{t}+\partial_{\bar z}Q(z_{t}).
\end{equation}
In our framework, the annihilation and creation operators,
$a(z_{1})$ and $a^{*}(z_{2})$\,, with $z_{1},z_{2}$ in $\mathcal{Z}$\,,
 satisfy the $\varepsilon$-dependent Canonical Commutation
Relations (CCR):
$$
\left[a(z_{1}), a^{*}(z_{2})\right]=\varepsilon \langle z_{1}\,,\,
z_{2}\rangle \;\Id\,.
$$
Recall that the Weyl operator, for $\xi\in \mathcal{Z}$, is defined by
$$
W(\xi)=e^{i\frac{a(\xi)+a^{*}(\xi)}{\sqrt{2}}},
$$
and the number operator $\mathbf{N}$ is
$$
\mathbf{N}=\mathrm{d}\Gamma(\Id)\,.
$$
We refer the reader to Appendix~\ref{se.main} for a brief review of
the second quantization and these related operators.

Our approach in the derivation of the mean field dynamics uses  Wigner
measures. For reader convenience, we recall the definition below.
\begin{dfn}
\label{de.wigmeas}
Let $\mathcal{E}$ be an infinite subset of
$(0,+\infty)$ such that $0\in \overline{\mathcal{E}}$\,.
Let $(\varrho_{\varepsilon})_{\varepsilon\in
  \mathcal{E}}$ be a family of normal states on
$\Gamma_{s}(\mathcal{Z})$\,
($\varrho_{\varepsilon}\geq 0$ and
$\Tr\left[\varrho_{\varepsilon}\right]=1$) such that:
$$
\exists \delta>0\,, \exists C_{\delta}>0\,,  \forall \varepsilon\in
\mathcal{E}\,,\quad
\Tr[\varrho_{\varepsilon}\mathbf{N}^{\delta}] \leq C_{\delta}<\infty\,.
$$
The set $\mathcal{M}(\varrho_{\varepsilon}, \varepsilon \in
\mathcal{E})$ of Wigner measures associated with $(\varrho_{\varepsilon})_{\varepsilon\in
\mathcal{E}}$ is the set of Borel probability
measures on $\mathcal{Z}$\,, $\mu$\,, such that there exists an
infinite subset $\mathcal{E}'\subset \mathcal{E}$ with
$0\in \overline{\mathcal{E}'}$ and
$$
\forall \xi\in \mathcal{Z}\,,\quad
\lim_{\mathcal{E}'\ni
    \varepsilon \to 0
}
\Tr\left[\varrho_{\varepsilon}W(\sqrt{2}\pi\xi)\right]=
\int_{\mathcal{Z}}e^{2i\pi\Real\langle \xi,z\rangle}d\mu(z)\,.
$$
\end{dfn}
By some diagonal extraction of subsequences,
it was proved in \cite{AmNi1} that $\mathcal{M}(\varrho_{\varepsilon},
\varepsilon\in \mathcal{E})$ is never empty
 (see Theorem~\ref{th.Wigdef}).

Our main result will be proved under the following assumptions:\\
\begin{hyp}
\label{hyp.A}
The operator $A$ with the domain $D(A)$ is self-adjoint in $\mathcal{Z}$\,.
\end{hyp}
\begin{hyp}
\label{hyp.Q}
For all $\ell\in \left\{2,\ldots,r\right\}$\,, the operator
$\tilde{Q}_{\ell}$ is compact and self-adjoint in
$\bigvee^{\ell}\mathcal{Z}$\,.\\
\end{hyp}
We will prove the following result.
\begin{thm}
\label{thm.main}
Let $(\varrho_{\varepsilon})_{\varepsilon \in (0,\bar{\varepsilon})}$ be
a family of normal states on $\Gamma_{s}(\mathcal Z)$
such that
\begin{equation}
\label{eq.di}
\exists \delta>0\,, \exists C_{\delta}>0\,,  \forall \varepsilon\in
(0,\bar\varepsilon)\,,\quad
\Tr[\varrho_{\varepsilon}\mathbf{N}^{\delta}] \leq C_{\delta}<\infty \;,
\end{equation}
and which admits a unique  Wigner measure $\mu_{0}$\,.
Under the Assumptions {\bf{(A1)}}-{\bf{(A2)}}, the family
$(e^{-i\frac{t}{\varepsilon}H_{\varepsilon}}\varrho_{\varepsilon}e^{i\frac{t}{\varepsilon}H_{\varepsilon}})_{\varepsilon
  \in (0,\bar{\varepsilon})}$ admits for every $t\in \R$
a unique Wigner measure $\mu_{t}$\,, which is the push-forward
$\Phi(t,0)_{*}\mu_{0}$ of the initial measure $\mu_{0}$ by the
 flow associated with
\begin{equation}
\label{eq.Cauchy}
\left\{
      \begin{aligned}
      i  \partial_{t}z_{t}&=Az_{t}+\partial_{\bar z}Q(z_{t}),&\\
        z_{t=0}&=z_{0}.\\
         \end{aligned}
    \right.
\end{equation}
\end{thm}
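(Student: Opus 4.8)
The plan is to characterise each limiting Wigner measure $\mu_t$ as the unique solution of a continuity (Liouville) equation driven by the Hartree vector field, and then to identify that solution with the push-forward $\Phi(t,0)_{*}\mu_0$. On the classical side, since $Q(z)=\sum_{\ell=2}^{r}\langle z^{\otimes\ell},\tilde Q_\ell z^{\otimes\ell}\rangle$ is a polynomial with bounded coefficients, the nonlinearity $\partial_{\bar z}Q$ is Lipschitz on balls of $\mathcal Z$; together with the unitary group $e^{-itA}$ this gives, by a Duhamel fixed point, a local mild solution of \eqref{eq.Cauchy}, and conservation of the mass $\norm{z_t}=\norm{z_0}$ (the phase symmetry of the Hamiltonian flow) upgrades it to a global continuous flow $\Phi(t,0)$ preserving each ball $\{\norm{z}\le R\}$.

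On the quantum side, the key structural remark is that both $H_\varepsilon^{0}=\mathrm d\Gamma(A)$ and $Q^{Wick}$ preserve the particle number, so that $[\mathbf N,H_\varepsilon]=0$ and the estimate \eqref{eq.di} is propagated \emph{exactly}: $\Tr[\varrho_\varepsilon^{t}\mathbf N^{\delta}]=\Tr[\varrho_\varepsilon\mathbf N^{\delta}]\le C_\delta$ for every $t$, where $\varrho_\varepsilon^{t}=e^{-i\frac t\varepsilon H_\varepsilon}\varrho_\varepsilon e^{i\frac t\varepsilon H_\varepsilon}$. Hence for each fixed $t$ the family $(\varrho_\varepsilon^{t})$ satisfies the hypotheses of Definition~\ref{de.wigmeas}, so by Theorem~\ref{th.Wigdef} it admits Wigner measures along subsequences. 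To deal with the possibly unbounded $A$, I would pass to the interaction representation $\tilde\varrho_\varepsilon^{t}=e^{i\frac t\varepsilon H_\varepsilon^{0}}\varrho_\varepsilon^{t}e^{-i\frac t\varepsilon H_\varepsilon^{0}}$, which removes the free part and leaves only the bounded, time-dependent interaction $Q_t^{Wick}$, whose symbol $Q_t$ is the free transport of $Q$, to be handled.

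The heart of the argument is the derivation of the transport equation. For a Wick observable $b^{Wick}$ with compact polynomial symbol $b$, the Heisenberg equation gives $\frac{d}{dt}\Tr[\tilde\varrho_\varepsilon^{t}b^{Wick}]=\frac i\varepsilon\Tr[\tilde\varrho_\varepsilon^{t}[Q_t^{Wick},b^{Wick}]]$, and the Wick symbolic calculus expands the right-hand side as $\Tr[\tilde\varrho_\varepsilon^{t}\{Q_t,b\}^{Wick}]+O(\varepsilon)$, the remainder being controlled uniformly in $\varepsilon$ by the propagated number bound. Letting $\varepsilon\to0$ along the chosen subsequence then yields the weak formulation $\frac{d}{dt}\int_{\mathcal Z}b\,d\mu_t=\int_{\mathcal Z}\{Q_t,b\}\,d\mu_t$ of the continuity equation associated with the bounded Hartree field $v_t=-i\partial_{\bar z}Q_t$. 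The delicate point, and what I expect to be the main obstacle, is precisely the passage to the limit in this nonlinear term: the symbol $z\mapsto\langle z^{\otimes\ell},\tilde Q_\ell z^{\otimes\ell}\rangle$ is in general \emph{not} weakly continuous, which is exactly why the earlier bounded-interaction result needed the (PI) condition \eqref{PI}. Here Assumption~\ref{hyp.Q} rescues the argument: compactness of $\tilde Q_\ell$ renders these symbols sequentially weak-to-strong continuous, so the products converge against the limiting measure with no projection-independence hypothesis. One must also justify exchanging limit and time integration via an equicontinuity-in-$t$ estimate for $t\mapsto\Tr[\tilde\varrho_\varepsilon^{t}b^{Wick}]$, and approximate general test functions by compact polynomial symbols using \eqref{eq.di}.

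It remains to identify the limit and conclude. The field $v_t=-i\partial_{\bar z}Q_t$ is bounded and Lipschitz on balls and generates the interaction-picture Hartree flow; invoking the uniqueness theorem for continuity equations in Hilbert space of Ammari and Nier \cite{AmNi4}, itself built on Ambrosio, Gigli and Savar\'e \cite{AGS}, the only Borel probability measure solving this equation with datum $\mu_0$ and finite moments is the push-forward by that flow. Undoing the interaction representation reinserts the free evolution, so the unique solution is $\Phi(t,0)_{*}\mu_0$. Since every subsequential Wigner limit of $(\varrho_\varepsilon^{t})$ must satisfy the same continuity equation, they all coincide with $\Phi(t,0)_{*}\mu_0$; therefore $(\varrho_\varepsilon^{t})_\varepsilon$ admits the single Wigner measure $\mu_t=\Phi(t,0)_{*}\mu_0$ for every $t\in\R$, which is exactly the assertion of Theorem~\ref{thm.main}.
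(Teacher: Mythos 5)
Your outline follows the same route as the paper: interaction representation, Duhamel/Heisenberg expansion in powers of $\varepsilon$ with the Poisson bracket as leading term, compactness of the $\tilde Q_\ell$'s to pass to the limit in the nonlinear term, equicontinuity and diagonal extraction to get $\tilde\mu_t$ for all times, and the Ambrosio--Gigli--Savar\'e uniqueness theory for the continuity equation to identify $\tilde\mu_t$ with the push-forward by the interaction-picture flow. (The paper tests against Weyl operators $W(\sqrt2\pi\xi)$ and conjugates $Q_s^{Wick}$ through them rather than working directly with Wick observables, but that is a cosmetic difference.)

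There is, however, one genuine gap. Theorem \ref{thm.main} assumes only \eqref{eq.di}, i.e.\ a \emph{single} moment bound $\Tr[\varrho_\varepsilon\mathbf{N}^{\delta}]\le C_\delta$ for \emph{some} $\delta>0$, possibly $\delta<1$. Essentially every quantitative step of your argument needs much more: the $O(\varepsilon)$ remainder in the Wick expansion is controlled by $\langle\mathbf{N}\rangle^{r}$-weighted traces, so you need $\Tr[\varrho_\varepsilon\mathbf{N}^{k}]\le C_k$ for all $k$ (or at least up to order $r$); the bound $\int_{\mathcal Z}|z|^{2k}\,d\tilde\mu_t\le C_k$ that makes $|v_t|_{L^2(\mathcal Z,\tilde\mu_t)}$ lie in $L^1([-T,T])$ --- a hypothesis of the AGS uniqueness result you invoke --- requires moments up to order $2(2r-1)$ since $|v(t,z)|\lesssim\sum_{j=2}^{r}|z|^{2j-1}$. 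Your phrase ``the remainder being controlled uniformly in $\varepsilon$ by the propagated number bound'' silently upgrades \eqref{eq.di} to an all-moments bound. The paper closes this gap with a separate truncation step: it first proves the theorem for states with all moments uniformly bounded, then for general $\varrho_\varepsilon$ sets $\varrho_\varepsilon^{(m)}=\chi_m(\mathbf{N})\varrho_\varepsilon\chi_m(\mathbf{N})/\Tr[\chi_m(\mathbf{N})\varrho_\varepsilon\chi_m(\mathbf{N})]$, uses \eqref{eq.di} to show $\sup_\varepsilon\|\varrho_\varepsilon-\varrho_\varepsilon^{(m)}\|_{\mathcal L^1}\to0$ as $m\to\infty$, and transfers the conclusion by a triangle inequality in total variation. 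Without some such reduction your proof only establishes the theorem under the stronger hypothesis $\forall k,\ \Tr[\varrho_\varepsilon\mathbf{N}^{k}]\le C_k$, not as stated.
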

As we have mentioned  previously, many authors use the BBGKY hierarchy
method to justify the mean field limit. Here the analysis is different
but it is still possible to formulate our results by using the vocabulary
of reduced density matrices. For a family of normal states
$(\varrho_{\varepsilon})_{\varepsilon \in (0,\bar{\varepsilon})}$ on
$\Gamma_{s}(\mathcal Z)$ and $p \in \N$, the reduced density
matrices $\gamma_{\varepsilon}^{(p)}$ such that
$\Tr[\gamma_{\varepsilon}^{(p)}]<+\infty$ are defined according
to $$\Tr[\gamma_{\varepsilon}^{(p)} \tilde b]=\frac{1}{\Tr[\varrho_{\varepsilon}(|z|^{2p})^{Wick}]}\Tr[\varrho_{\varepsilon}b^{Wick}],
\; \forall \tilde b \in \mathcal{L}(\bigvee^{p}\mathcal Z),$$
with the convention that the right-hand side is $0$ when
$$\Tr[\varrho_{\varepsilon}(|z|^{2p})^{Wick}]=0 \;,$$ and $p>0$ (see
\cite{AmNi3} for more details).
The notations $b^{Wick}$ and $\mathcal P_{p,q} (\mathcal Z)$ are
explained in Appendix \ref{se.wickq}. 
\begin{thm}
Let $(\varrho_{\varepsilon})_{\varepsilon \in (0,\bar{\varepsilon})}$
be a family of normal states on $\Gamma_{s}(\mathcal Z)$ with a single Wigner measure
$\mu_{0}$ and satisfying the (PI) condition, i.e.:
\begin{equation}
\label{PI}
\lim_{\varepsilon\to 0} \Tr[\varrho_\varepsilon \mathbf{N}^{p}]=\int_{\mathcal
  Z} |z|^{2p} \, d\mu_{0}(z)<\infty\,,\quad \forall p\in\mathbb{N}\,.
\end{equation}
 Then for all $t \in \R$ and $\varrho_{\varepsilon}(t)=e^{-i\frac{t}{\varepsilon}H_{\varepsilon}}
\varrho_{\varepsilon}e^{i\frac{t}{\varepsilon}H_{\varepsilon}}$,
\begin{equation}
\label{eq.bbgky}
\lim_{\varepsilon \to 0}\Tr[\varrho_\varepsilon(t) b^{Wick}]=\int_{\mathcal
  Z}b(\Phi(t,0)z)d\mu_{0}(z)=\int_{\mathcal Z}b(z)d\mu_{t}(z)\,,
\end{equation}
for any $b \in \mathcal{P}_{alg}(\mathcal Z)=\bigoplus_{p,q \in
  \N}\mathcal{P}_{p,q}(\mathcal Z)$, with $\mu_{t}=\Phi(t,0)*\mu_{0}$
and $\Phi(t,0)$ the flow associated with the equation \eqref{eq.hartree0}. Finally the
convergence of the reduced density matrices $\gamma_{\varepsilon}^{(p)}(t)$
associated with $\varrho_{\varepsilon}(t)$ holds in
$\mathcal{L}^{1}(\bigvee^{p}\mathcal Z),$ the space of trace class operators, with
$$\lim_{\varepsilon \to
  0}\left\|\gamma_{\varepsilon}^{(p)}(t)-\frac{1}{\int_{\mathcal
    Z}|z|^{2p}d\mu_{0}(z)}\int_{\mathcal Z}|z^{{\otimes p}}\rangle
\langle z^{{\otimes p}}|
d\mu_{t}(z)\right\|_{\mathcal{L}^{1}}=0,$$
for all $p \in \N.$
\end{thm}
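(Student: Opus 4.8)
The plan is to deduce this statement from Theorem~\ref{thm.main} together with the propagation of the (PI) condition and the comparison results of \cite{AmNi3} relating convergence of Wigner functions to convergence of Wick observables. The dynamical core --- namely that $\varrho_\varepsilon(t)$ admits the single Wigner measure $\mu_t=\Phi(t,0)_*\mu_0$ --- is already furnished by Theorem~\ref{thm.main}; what remains is to promote the weak convergence of the Wigner measures to convergence against the polynomial symbols $b\in\mathcal{P}_{alg}(\mathcal Z)$, and then to translate this into trace-norm convergence of the reduced density matrices.

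First I would verify that the evolved family $(\varrho_\varepsilon(t))_\varepsilon$ still satisfies the (PI) condition, now relative to $\mu_t$. The key observation is that the number operator $\mathbf{N}=\mathrm{d}\Gamma(\Id)$ commutes with $H_\varepsilon$: both $H_\varepsilon^0=\mathrm{d}\Gamma(A)$ and $Q^{Wick}$ are built from symbols lying in $\bigoplus_\ell \mathcal{P}_{\ell,\ell}(\mathcal Z)$, hence preserve the particle number, so $e^{i\frac{t}{\varepsilon}H_\varepsilon}\mathbf{N}^p e^{-i\frac{t}{\varepsilon}H_\varepsilon}=\mathbf{N}^p$ and therefore $\Tr[\varrho_\varepsilon(t)\mathbf{N}^p]=\Tr[\varrho_\varepsilon\mathbf{N}^p]$ for every $p$. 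On the classical side, the energy functional $h$ is gauge invariant, so the Hartree flow conserves the mass, i.e. $\|\Phi(t,0)z\|=\|z\|$. Consequently $\int_{\mathcal Z}|z|^{2p}\,d\mu_t(z)=\int_{\mathcal Z}|\Phi(t,0)z|^{2p}\,d\mu_0(z)=\int_{\mathcal Z}|z|^{2p}\,d\mu_0(z)$, and combining this with the (PI) hypothesis on $\varrho_\varepsilon$ yields $\lim_{\varepsilon\to0}\Tr[\varrho_\varepsilon(t)\mathbf{N}^p]=\int_{\mathcal Z}|z|^{2p}\,d\mu_t(z)$ for all $p$, that is, (PI) holds for $\varrho_\varepsilon(t)$.

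With the single Wigner measure $\mu_t$ provided by Theorem~\ref{thm.main} and the propagated (PI) condition in hand, I would invoke the comparison results of \cite{AmNi3}: under (PI) the absence of a compactness defect forces the Wigner functions to converge not only against bounded symbols but against all polynomial symbols, giving $\lim_{\varepsilon\to0}\Tr[\varrho_\varepsilon(t)b^{Wick}]=\int_{\mathcal Z}b\,d\mu_t$ for every $b\in\mathcal{P}_{alg}(\mathcal Z)$. Rewriting $\int_{\mathcal Z}b\,d\mu_t=\int_{\mathcal Z}b(\Phi(t,0)z)\,d\mu_0(z)$ through $\mu_t=\Phi(t,0)_*\mu_0$ then establishes \eqref{eq.bbgky}.

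Finally I would deduce the trace-norm convergence of $\gamma_\varepsilon^{(p)}(t)$. Testing \eqref{eq.bbgky} against a symbol $b\in\mathcal{P}_{p,p}(\mathcal Z)$ associated with $\tilde b\in\mathcal{L}(\bigvee^p\mathcal Z)$ gives $\Tr[\varrho_\varepsilon(t)b^{Wick}]\to\int_{\mathcal Z}\langle z^{\otimes p},\tilde b\,z^{\otimes p}\rangle\,d\mu_t(z)=\Tr[\tilde b\int_{\mathcal Z}|z^{\otimes p}\rangle\langle z^{\otimes p}|\,d\mu_t(z)]$, while the normalising denominator $\Tr[\varrho_\varepsilon(t)(|z|^{2p})^{Wick}]$ converges to $\int_{\mathcal Z}|z|^{2p}\,d\mu_0(z)$ by the previous step. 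Dividing, and using the stated convention when this integral vanishes, shows that $\gamma_\varepsilon^{(p)}(t)$ converges in the weak-$*$ topology of $\mathcal{L}^1(\bigvee^p\mathcal Z)$ to $\gamma^{(p)}(t)=\frac{1}{\int_{\mathcal Z}|z|^{2p}d\mu_0}\int_{\mathcal Z}|z^{\otimes p}\rangle\langle z^{\otimes p}|\,d\mu_t$. The main obstacle is the upgrade from this weak-$*$ convergence to convergence in trace norm: for this I would use that both $\gamma_\varepsilon^{(p)}(t)$ and the limit are non-negative with unit trace (take $\tilde b=\Id$ above), so that weak-$*$ convergence together with convergence of the traces forces $\|\gamma_\varepsilon^{(p)}(t)-\gamma^{(p)}(t)\|_{\mathcal L^1}\to0$, by the standard fact that on the cone of positive trace-class operators the weak-$*$ and trace-norm topologies agree along a level set of fixed trace.
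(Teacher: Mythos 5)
Your proposal is correct and follows essentially the same route as the paper: invoke Theorem \ref{thm.main} for the Wigner measure $\mu_t=\Phi(t,0)_*\mu_0$, propagate the (PI) condition via $[H_\varepsilon,\mathbf{N}]=0$ (together with mass conservation of the Hartree flow, which you rightly make explicit), and then pass to Wick observables and reduced density matrices. The only difference is that the paper delegates the last two steps entirely to Propositions 2.11 and 2.12 of \cite{AmNi3}, whereas you unpack them --- in particular the upgrade from weak-$*$ to trace-norm convergence via positivity and convergence of traces --- which is exactly the content of those cited results.
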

\begin{proof}
According to Theorem \ref{thm.main}, the family of normal states
$(\varrho_{\varepsilon}(t))_{\varepsilon \in (0,\bar{\varepsilon})}$
admits a single Wigner measure $\mu_{t}$ equal to
$\Phi(t,0)*\mu_{0}$. Besides owing to
$[H_{\varepsilon},\mathbf{N}]=H_{\varepsilon}\mathbf{N}-\mathbf{N}H_{\varepsilon}=0$
the family $(\varrho_{\varepsilon}(t))_{\varepsilon \in
  (0,\bar{\varepsilon})}$ is satisfying the (PI) condition \eqref{PI}
for any time $t \in \R.$ Then Propositions 2.11 and 2.12 in
\cite{AmNi3} allow to obtain the claimed results.

\end{proof}
The proof of Theorem \ref{thm.main} requires few steps.
The operator $H_{\varepsilon}$ with a suitable domain is proved to
be self-adjoint in Proposition~\ref{pr.selfAdj}.
Proposition~\ref{pr.flow} ensures that the Cauchy problem
\eqref{eq.Cauchy} defines a global flow on $\mathcal{Z}$\,.\\
Beside this, the proof consists in several steps that we briefly
sketch here. For the first four points of this proof below we consider
more regular states $\varrho_{\varepsilon}$. \\
\begin{enumerate}
\item By setting
  \begin{eqnarray}
\label{eq.defrho}
&&\varrho_{\varepsilon}(t)=e^{-i\frac{t}{\varepsilon}H_{\varepsilon}}\varrho_{\varepsilon}e^{i\frac{t}{\varepsilon}H_{\varepsilon}}\,,\\
\label{eq.deftilderho}
\text{and}
&&
\tilde \varrho_{\varepsilon}(t)=e^{i\frac{t}{\varepsilon}H_{\varepsilon}^{0}}\varrho_{\varepsilon}(t)e^{-i\frac{t}{\varepsilon}H_{\varepsilon}^{0}}\,,
\end{eqnarray}
we write
\begin{multline}
\label{eq.ipp0}
\Tr[\tilde{\varrho_{\varepsilon}}(t) W(\sqrt{2}\pi\xi)] =
\Tr[\varrho_{\varepsilon}W(\sqrt{2}\pi\xi)]
\\
+ i\int_{0}^{t}{\Tr[\tilde{\varrho_{\varepsilon}}(s)W(\sqrt{2}\pi\xi)
\sum_{j=1}^{r}\varepsilon^{j-1}\mathcal{O}_{j}(s,\xi)] ds}\,,
\end{multline}
where the $\mathcal{O}_{j}(s,\xi)$'s are Wick quantized observables
which satisfy some uniform estimates.
\item The number estimates given in Proposition \ref{number}
 provide equicontinuity   properties of the quantity
  $\Tr\left[\tilde{\varrho}_{\varepsilon}(t)W(\sqrt{2}\pi\xi)\right]$
  w.r.t $(\xi,t)\in \mathcal{Z}\times \R$\,. So that a subsequence
  $(\varepsilon_{k})_{k\in\N}$ converging to $0$ can be extracted such
  that for all times $t\in \R$\,,
$$
\mathcal{M}(\tilde{\varrho}_{\varepsilon}(t), \varepsilon\in \mathcal{E})=\left\{\tilde{\mu}_{t}\right\}
$$
with $\mathcal{E}=\left\{\varepsilon_{k}\,, k\in \N\right\}$\,.
\item With the number estimates, we get rid of the terms for $j\geq 2$ as $\varepsilon\to 0$ in
\eqref{eq.ipp0}\,.
The compactness  assumption {\bf{(A2)}} is used in
Proposition~\ref{pr.compact}
when we take the limit in all
  the remaining terms of \eqref{eq.ipp0} for general initial data
  $\tilde{\varrho}_{\varepsilon}$\,. Subsequently, the measure
  $\tilde{\mu}_{t}$ is a weak solution of the Liouville equation
\begin{equation}
\label{eq.Liouville}
i \partial_{t} \tilde{\mu}_{t} + \{ Q_{t}(z),\tilde{\mu}_{t} \} =0\,,
\end{equation}
with $Q_{t}(z)=Q(e^{-itA}z)$\,.
\item Finally, we follow the same lines as in \cite{AmNi4} and refer
 to measure transportation
  tools developed in \cite{AGS}
in order to prove
$(e^{-itA})_{*}\tilde{\mu}_{t}=\Phi(t,0)_{*}\mu_{0}$\,, and hence we
get $$\mathcal{M}(\varrho_{\varepsilon}(t),\varepsilon \in
(0,\bar{\varepsilon}))=\{\mu_{t}\}.$$
\item
The last point of this proof is a truncation scheme used in
\cite{AmNi4} for more general states.
\end{enumerate}

\section{Quantum and mean-field dynamics}
\label{se.prelim}
In this section we show that the quantum and the classical dynamics are both well defined for all times.
\subsection{Self-adjoint realization}
\begin{prop}
\label{pr.selfAdj}\ \\
(i)  For any $ n \in \N$, the operator $H_{\varepsilon}^{(n)}$ given by \eqref{eq.Hneps} with
 domain $D(\mathrm{d}\Gamma(A)) \cap \bigvee^{n}\mathcal Z$ is a self-adjoint operator in $\bigvee^{n}\mathcal Z$.\\
(ii) The operator $H_{\varepsilon}$, given by \eqref{eq.HepsWick}, is self-adjoint in $ \Gamma_{s}(\mathcal Z)$ with the domain defined by
$$
\left(\Psi\in
D(H_{\varepsilon})\right)
\Leftrightarrow
\left(
\begin{array}[c]{l}
\Psi \in \Gamma_{s}(\mathcal Z), \\
  \forall n \in \N\,,\;\Psi^{(n)} \in
D(H_{\varepsilon}^{(n)})\,,\\
\sum_{n=0}^{\infty} \|H_{\varepsilon}^{(n)}\Psi^{(n)}\|^{2}<+\infty
\end{array}
 \right)\,.
$$
\end{prop}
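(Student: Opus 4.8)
The plan is to split $H_\varepsilon^{(n)}$ into a self-adjoint free part plus a \emph{bounded} symmetric interaction for part (i), and then to recognize $H_\varepsilon$ as an orthogonal direct sum over the particle number for part (ii).

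For (i), write $A_i=\Id\otimes\cdots\otimes A\otimes\cdots\otimes\Id$ for the copy of $A$ acting on the $i$-th factor of $\mathcal Z^{\otimes n}$, so that $H_\varepsilon^{0,(n)}=\varepsilon\sum_{i=1}^{n}A_i$ by \eqref{eq.H0neps}. Each $A_i$ is self-adjoint on its natural domain $\mathcal Z^{\otimes(i-1)}\otimes D(A)\otimes\mathcal Z^{\otimes(n-i)}$, and the family $(A_i)_{1\le i\le n}$ strongly commutes because the $A_i$ act on distinct tensor slots: their joint spectral measure $E$ is simply the $n$-fold product of the spectral measure of $A$. By the joint functional calculus the operator $\int(\lambda_1+\cdots+\lambda_n)\,dE$ is self-adjoint, it extends $\varepsilon\sum_i A_i$, and its domain is exactly $D(\mathrm{d}\Gamma(A))\cap\mathcal Z^{\otimes n}$. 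Since permuting the tensor slots only permutes the identical summands $A_i$, this operator commutes with the symmetrizer $\mathcal S_n$; hence $\bigvee^{n}\mathcal Z=\mathcal S_n\mathcal Z^{\otimes n}$ is a reducing subspace and $H_\varepsilon^{0,(n)}$ is self-adjoint on $D(\mathrm{d}\Gamma(A))\cap\bigvee^{n}\mathcal Z$.

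Next I would check that the interaction term in \eqref{eq.Hneps} is a bounded self-adjoint operator on $\bigvee^{n}\mathcal Z$. By Assumption (A2) each $\tilde Q_\ell$ is compact, hence bounded and self-adjoint, so $\tilde Q_\ell\otimes\Id_{\bigvee^{n-\ell}\mathcal Z}$ is bounded self-adjoint on $\mathcal Z^{\otimes n}$ with norm $\|\tilde Q_\ell\|$; conjugating by the orthogonal projection $\mathcal S_n$ preserves both boundedness and self-adjointness. Thus $V_\varepsilon^{(n)}:=\sum_{\ell=2}^{r}\varepsilon^{\ell}\frac{n!}{(n-\ell)!}\mathcal S_n(\tilde Q_\ell\otimes\Id)\mathcal S_n$ is bounded and symmetric, and the Kato--Rellich theorem (a bounded symmetric perturbation, of relative bound $0$) shows that $H_\varepsilon^{(n)}=H_\varepsilon^{0,(n)}+V_\varepsilon^{(n)}$ is self-adjoint on the unchanged domain $D(\mathrm{d}\Gamma(A))\cap\bigvee^{n}\mathcal Z$, which proves (i). For (ii), the identification \eqref{eq.HepsWick}--\eqref{eq.Heps0Wick} of $H_\varepsilon$ on $\Gamma_s(\mathcal Z)=\bigoplus_{n}\bigvee^{n}\mathcal Z$ amounts to $H_\varepsilon=\bigoplus_{n}H_\varepsilon^{(n)}$, the orthogonal direct sum of the self-adjoint operators from (i). By the standard self-adjointness criterion for direct sums, $\bigoplus_n H_\varepsilon^{(n)}$ is self-adjoint precisely on $\{\Psi : \Psi^{(n)}\in D(H_\varepsilon^{(n)})\ \forall n,\ \sum_n\|H_\varepsilon^{(n)}\Psi^{(n)}\|^2<\infty\}$, which is the domain stated in the proposition; this gives (ii).

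The step I expect to be the main obstacle is the self-adjointness --- not merely essential self-adjointness --- of the free part on the precise domain: one must invoke strong commutativity of the $(A_i)$ and identify the functional-calculus (graph) domain of $\varepsilon\sum_i A_i$ with $D(\mathrm{d}\Gamma(A))\cap\mathcal Z^{\otimes n}$, and separately verify that $\mathcal S_n$ reduces the operator so that the restriction stays self-adjoint. Once the free part is pinned down, the bounded perturbation and the direct-sum assembly are routine.
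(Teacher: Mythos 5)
Your proof is correct and follows essentially the same route as the paper: part (i) is obtained by viewing the interaction $V_\varepsilon^{(n)}$ as a bounded self-adjoint perturbation of the free part $H_\varepsilon^{0,(n)}$ (the paper bounds $\|V_\varepsilon^{(n)}\phi^{(n)}\|$ by $\sum_\ell \varepsilon^\ell \frac{n!}{(n-\ell)!}\|\tilde Q_\ell\|\,\|\phi^{(n)}\|$ and invokes the same bounded-perturbation argument), and part (ii) is the direct-sum criterion, which the paper delegates to Proposition A.1 of \cite{AmNi4} with $A_n=H_\varepsilon^{0,(n)}+V_\varepsilon^{(n)}$. The only difference is that you spell out the self-adjointness of the free part via the joint spectral calculus of the commuting $A_i$'s, a standard fact the paper takes for granted.
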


\begin{proof}
\noindent (i)
For $n\in\N$ and according to \eqref{eq.HepsWick}-\eqref{eq.Heps0Wick}
the operator $H_{\varepsilon}^{(n)}$ equals
\begin{equation}
H_{\varepsilon}^{(n)}=H_{\varepsilon}^{0,(n)}+V_{\varepsilon}^{(n)},
\end{equation}
with  $V_{\varepsilon}^{(n)}=\sum_{\ell=2}^{r}\{ Q_{\ell}(z)
\}_{|\vee^{n}\mathcal Z}^{Wick}$ with $Q_{\ell}(z)=\langle z^{{\otimes
    \ell}},\tilde{Q}_{\ell}z^{{\otimes \ell}}\rangle.$\\
For $\phi^{(n)} \in D(\mathrm{d}\Gamma(A))\cap \bigvee^{n}\mathcal{Z}$\,, a simple computation gives
\begin{align*}
\|V_{\varepsilon}^{(n)}\phi^{(n) }\| &\leq
\sum_{\ell=2}^{r}\varepsilon^{\ell}\frac{n!}{(n-\ell)!}
\|\mathcal{S}_{n}(\tilde{Q_{\ell}}\otimes\Id_{\bigvee^{n-\ell}\mathcal{Z}})\phi^{(n) }\|, &\\
&\leq
\sum_{\ell=2}^{r}\varepsilon^{\ell}\frac{n!}{(n-\ell)!}\|\tilde{Q}_{\ell}\|
\;
 \|\phi^{(n)}\|_{\bigvee^{n}\mathcal Z},&\\
&\leq C_{r,\varepsilon,n} \|\phi^{(n)} \|_{\bigvee^{n}\mathcal{Z}}.\\
\end{align*}
So $V_{\varepsilon}^{(n)}$ is a bounded self-adjoint perturbation of
$H_{\varepsilon}^{0,(n)}$ and therefore $H_{\varepsilon}^{(n)}$ is self-adjoint on $D(H_{\varepsilon}^{0,(n)})$.
\\
\noindent (ii) Proposition A.1 in \cite{AmNi4} is applied here, with $A_{n}=H^{0,(n)}_{\varepsilon}+V_{\varepsilon}^{(n)}$
yields the self-adjointness of $H_{\varepsilon}$\,.
\end{proof}

\noindent
Once we have defined the quantum dynamics, we can then write an integral formula giving the
propagation of normal states. However, instead of considering
\begin{equation*}
\varrho_{\varepsilon}(t)=e^{-i\frac{t}{\varepsilon}H_{\varepsilon}}\varrho_{\varepsilon}e^{i\frac{t}{\varepsilon}H_{\varepsilon}},
\end{equation*}
we will rather work with
\begin{equation*}
\tilde{\varrho_{\varepsilon}}(t)=e^{i\frac{t}{\varepsilon}H_{\varepsilon}^{0}}\varrho_{\varepsilon}(t)e^{-i\frac{t}{\varepsilon}H_{\varepsilon}^{0}}.
\end{equation*}

With the convention of Appendix \ref{se.wickq}, $\mathrm{D}^{j}b$ denotes the j-th differential of $b$ with respect to
$(z,\bar{z})$:
\begin{equation*}
\mathrm{D}^{j}[b(z)] [\xi]=\sum_{|\alpha|+|\beta|=j}\frac{j!}{\alpha!\beta!}\langle
\xi^{\otimes \beta}\; , \;\partial_{z}^{\alpha}\partial_{\bar
z}^{\beta}b(z){\xi}^{\otimes \alpha}\rangle.
\end{equation*}

\begin{prop}
Let $(\varrho_{\varepsilon})_{\varepsilon \in (0,\bar \varepsilon)}$ be a
family of normal states on $\Gamma_{s}(\mathcal Z)$. Assume that
\begin{equation}
\forall k \in \N, \;\exists  C_k>0 \; ,\forall \varepsilon \in (0,\bar \varepsilon), \; \Tr[\varrho_{\varepsilon}\mathbf{N}^{k}] \leq C_{k}.
\end{equation}
Then for all $\xi \in \mathcal Z$,\; the function \;$t \mapsto
\Tr[\tilde{\varrho_{\varepsilon}}(t)W(\sqrt{2}\pi\xi)]$ belongs to $C^{1}(\R)$ and the following formula holds:
\begin{equation}
\label{eq.ipp}
\Tr[\tilde{\varrho_{\varepsilon}}(t)W(\sqrt{2}\pi\xi)] =
\Tr[\varrho_{\varepsilon}W(\sqrt{2}\pi\xi)] +\\
\end{equation}
\begin{equation*}
 i\int_{0}^{t}{\Tr[\tilde{\varrho_{\varepsilon}}(s)W(\sqrt{2}\pi\xi) \{ \sum_{j=1}^{r}\varepsilon^{j-1}\frac{(i\pi)^{j}}{j!}\mathrm{D}^{j}[Q(e^{-isA}z)][\xi]\}^{Wick}] ds}.
\end{equation*}
\end{prop}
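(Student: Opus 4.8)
The plan is to establish the integral formula by differentiating the map $t \mapsto \Tr[\tilde{\varrho_{\varepsilon}}(t)W(\sqrt{2}\pi\xi)]$ and recognizing the derivative as the Wick-quantized observable appearing inside the integral. First I would recall that $\varrho_{\varepsilon}(t)$ satisfies the Heisenberg-von Neumann equation $i\varepsilon \partial_t \varrho_{\varepsilon}(t) = [H_{\varepsilon}, \varrho_{\varepsilon}(t)]$, so that after conjugating by the free evolution the interaction-picture state $\tilde{\varrho_{\varepsilon}}(t)$ obeys
\begin{equation*}
i\varepsilon \partial_t \tilde{\varrho_{\varepsilon}}(t) = [\tilde{Q}_{\varepsilon}(t), \tilde{\varrho_{\varepsilon}}(t)]\,,
\end{equation*}
where $\tilde{Q}_{\varepsilon}(t) = e^{i\frac{t}{\varepsilon}H_{\varepsilon}^{0}} Q^{Wick} e^{-i\frac{t}{\varepsilon}H_{\varepsilon}^{0}}$. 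The key structural fact is that conjugating a Wick monomial $\langle z^{\otimes \ell}, \tilde{Q}_\ell z^{\otimes \ell}\rangle^{Wick}$ by $e^{i\frac{t}{\varepsilon}\mathrm{d}\Gamma(A)}$ simply replaces the symbol $Q(z)$ by $Q(e^{-itA}z)=Q_t(z)$, since $e^{i\frac{t}{\varepsilon}\mathrm{d}\Gamma(A)}$ acts on creation/annihilation operators as $a^{*}(z)\mapsto a^{*}(e^{itA}z)$ and the Wick calculus is covariant under such Bogoliubov transformations. This identifies $\tilde{Q}_{\varepsilon}(t) = Q(e^{-itA}z)^{Wick}$.

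Next I would compute the trace derivative explicitly. Writing $f(t)=\Tr[\tilde{\varrho_{\varepsilon}}(t)W(\sqrt{2}\pi\xi)]$ and using the evolution equation together with cyclicity of the trace yields
\begin{equation*}
f'(t) = \frac{i}{\varepsilon}\Tr\!\big[\tilde{\varrho_{\varepsilon}}(t)\,[W(\sqrt{2}\pi\xi),\, Q_t(z)^{Wick}]\big]\,.
\end{equation*}
The core computation is then to evaluate the commutator $[W(\sqrt{2}\pi\xi), Q_t(z)^{Wick}]$, or equivalently to expand $W(-\sqrt{2}\pi\xi)\,Q_t(z)^{Wick}\,W(\sqrt{2}\pi\xi)$. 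Since conjugation by the Weyl operator is a phase-space translation, it shifts the Wick symbol argument: $W(\sqrt{2}\pi\xi)^{*} b^{Wick} W(\sqrt{2}\pi\xi) = b(z + i\pi\varepsilon\xi \text{-type shift})^{Wick}$, and Taylor-expanding this shifted symbol in powers of the translation parameter produces exactly the finite sum $\sum_{j=1}^{r}\varepsilon^{j-1}\frac{(i\pi)^j}{j!}\mathrm{D}^{j}[Q_t(z)][\xi]$. The expansion terminates at $j=r$ because $Q_t$ is a polynomial of degree $2r$ in $(z,\bar z)$ whose directional derivatives vanish beyond that order in the relevant paired-degree counting, and the factor $\varepsilon^{j-1}$ is tracked by the $\varepsilon$-dependent CCR. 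Once this algebraic identity is in hand, $f'(t)$ equals the integrand of \eqref{eq.ipp}, and integrating from $0$ to $t$ gives the stated formula.

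The remaining point is to justify the $C^1$ regularity and the legitimacy of these manipulations, which is where the moment hypothesis $\Tr[\varrho_{\varepsilon}\mathbf{N}^{k}]\leq C_k$ enters. I would use this bound, propagated in time through the commutation $[H_\varepsilon, \mathbf{N}]=0$ (so that $\Tr[\tilde{\varrho_{\varepsilon}}(t)\mathbf{N}^k] = \Tr[\varrho_{\varepsilon}\mathbf{N}^k]\leq C_k$), to control the Wick observables $\mathrm{D}^{j}[Q_t(z)][\xi]^{Wick}$: these are polynomials of degree $2\ell - j$ in the fields, hence dominated by suitable powers of $\mathbf{N}+\varepsilon$ with $\xi$-dependent constants, so that all the traces are finite and the integrand is bounded uniformly on compact time intervals. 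This integrability makes $f$ locally Lipschitz with the claimed derivative, establishing $f\in C^1(\R)$ and validating the exchange of differentiation and trace. The main obstacle I anticipate is precisely the domain/boundedness bookkeeping needed to differentiate the trace rigorously — the Weyl operator and $Q^{Wick}$ are unbounded and do not commute with the evolution in a naive way, so making the formal commutator computation rigorous requires approximating $\tilde{\varrho_{\varepsilon}}(t)$ by states in the form domain and using the number estimates to pass to the limit, rather than any single hard algebraic identity.
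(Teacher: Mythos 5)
Your proposal follows essentially the same route as the paper's proof: the interaction-picture Duhamel/Heisenberg equation, the identification of the conjugated interaction as $Q_t^{Wick}$ via \eqref{eq.toto7}, the Weyl translation formula \eqref{eq.toto11} combined with the Taylor expansion \eqref{eq.toto10} of the polynomial symbol to produce the finite sum in $\varepsilon^{j-1}$, and the number estimates to legitimize differentiating the trace --- which is exactly the bookkeeping the paper carries out by inserting $\langle\mathbf{N}\rangle^{\pm r}$ into the difference quotients and passing to strong limits. The only slip is a sign: cyclicity of the trace gives $f'(t)=\frac{i}{\varepsilon}\Tr\big[\tilde{\varrho}_{\varepsilon}(t)\,[Q_t^{Wick},W(\sqrt{2}\pi\xi)]\big]$, i.e.\ the commutator taken in the opposite order from the one you wrote, and this order is what yields the $+i$ in front of the integral in \eqref{eq.ipp}.
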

\begin{proof}
We denote $\langle \mathbf N \rangle=(1+\mathbf{N}^{2})^{\frac{1}{2}}$. \\
The quantity
$\Tr[(\tilde{\varrho_{\varepsilon}}(t)-\tilde{\varrho_{\varepsilon}}(s))W(\sqrt{2}\pi\xi)]$
 is actually equal to
\begin{equation}
\label{eq.l1}
\Tr[\varrho_{\varepsilon}\langle\mathbf N \rangle^{r} (e^{i\frac{t}{\varepsilon}H_{\varepsilon}}e^{-i\frac{t}{\varepsilon}H_{\varepsilon}^{0}}-e^{i\frac{s}{\varepsilon}H_{\varepsilon}}e^{-i\frac{s}{\varepsilon}H_{\varepsilon}^{0}})
\langle \mathbf N \rangle ^{-r}W(\sqrt{2}\pi\xi)
e^{i\frac{t}{\varepsilon}H_{\varepsilon}^{0}}e^{-i\frac{t}{\varepsilon}H_{\varepsilon}}]
\end{equation}
\begin{equation}
\label{eq.l2}
+\Tr[\varrho_{\varepsilon}\langle \mathbf N
\rangle^{r}e^{i\frac{s}{\varepsilon}H_{\varepsilon}}e^{-i\frac{s}{\varepsilon}H_{\varepsilon}^{0}}\langle
\mathbf N\rangle^{-r}W(\sqrt{2}\pi\xi)\langle\mathbf N\rangle^{r}
(e^{i\frac{t}{\varepsilon}H_{\varepsilon}^{0}}e^{-i\frac{t}{\varepsilon}H_{\varepsilon}}-e^{i\frac{s}{\varepsilon}H_{\varepsilon}^{0}}e^{-i\frac{s}{\varepsilon}H_{\varepsilon}})\langle\mathbf
N \rangle^{-r}]. \\
\end{equation}
By differentiating first for $u \in D(H_{\varepsilon}^{0}) \cap
\bigvee^{n}\mathcal Z$ and then extending the result by continuity, we
get for $u \in \bigvee^{n}\mathcal Z$:
\begin{equation*}
e^{i\frac{s}{\varepsilon}H_{\varepsilon}}e^{-i\frac{s}{\varepsilon}H_{\varepsilon}^{0}}u=
e^{i\frac{t}{\varepsilon}H_{\varepsilon}}e^{-i\frac{t}{\varepsilon}H_{\varepsilon}^{0}}u+
\frac{i}{\varepsilon}\int_{t}^{s}e^{i\frac{\sigma}{\varepsilon}H_{\varepsilon}}
Q^{Wick}e^{-i\frac{\sigma}{\varepsilon}H_{\varepsilon}^{0}}u
\;d\sigma\,.
\end{equation*}
The number estimate in Proposition \ref{number} combined with
$e^{i\frac{s}{\varepsilon}H_{\varepsilon}}e^{-i\frac{s}{\varepsilon}H_{\varepsilon}^{0}}=
\bigoplus_{n=0}^{\infty}e^{i\frac{s}{\varepsilon}H_{\varepsilon}^{(n)}}e^{-i\frac{s}{\varepsilon}H_{\varepsilon}^{0,(n)}},$
implies for all $u \in \Gamma_{s}(\mathcal Z)$:
\begin{equation*}
\langle \mathbf N \rangle
^{-r}e^{i\frac{s}{\varepsilon}H_{\varepsilon}}e^{-i\frac{s}{\varepsilon}H_{\varepsilon}^{0}}u=\langle
\mathbf N \rangle
^{-r}e^{i\frac{t}{\varepsilon}H_{\varepsilon}}e^{-i\frac{t}{\varepsilon}H_{\varepsilon}^{0}}u+
\frac{i}{\varepsilon}\int_{t}^{s}e^{i\frac{\sigma}{\varepsilon}H_{\varepsilon}}\langle
\mathbf N \rangle
^{-r}Q^{Wick}e^{-i\frac{\sigma}{\varepsilon}H_{\varepsilon}^{0}}u \;d\sigma\,.
\end{equation*}
The integrand is continuous in
$\Gamma_{s}(\mathcal Z)$ w.r.t $\sigma$ for any $u \in
\Gamma_{s}(\mathcal Z)$. Taking the limit as $s \to t$ leads to
\begin{equation}
\label{eq.toto8}
s-\lim\limits_{s \to t} \frac{1}{t-s}\langle\mathbf N \rangle^{-r}(e^{i\frac{t}{\varepsilon}H_{\varepsilon}}e^{-i\frac{t}{\varepsilon}H_{\varepsilon}^{0}}-e^{i\frac{s}{\varepsilon}H_{\varepsilon}}e^{-i\frac{s}{\varepsilon}H_{\varepsilon}^{0}})= \\
\frac{i}{\varepsilon}\langle\mathbf N\rangle^{-r}e^{i\frac{t}{\varepsilon}H_{\varepsilon}}Q^{Wick}e^{-i\frac{t}{\varepsilon}H_{\varepsilon}^{0}}.
\end{equation}
Similarly (by exchanging $H_{\varepsilon}^{0}$ and $H_{\varepsilon}$)
we get:
\begin{equation}
\label{eq.toto9}
s-\lim\limits_{s \to t}  \frac{1}{t-s}\langle\mathbf N\rangle^{-r}(e^{i\frac{t}{\varepsilon}H_{\varepsilon}^{0}}e^{-i\frac{t}{\varepsilon}H_{\varepsilon}}-e^{i\frac{s}{\varepsilon}H_{\varepsilon}^{0}}e^{-i\frac{s}{\varepsilon}H_{\varepsilon}}) =-\frac{i}{\varepsilon}\langle\mathbf N\rangle^{-r}e^{i\frac{t}{\varepsilon}H_{\varepsilon}^{0}}Q^{Wick}e^{-i\frac{t}{\varepsilon}H_{\varepsilon}}.
\end{equation}
Notice that
\begin{equation*}\langle\mathbf N \rangle^{-r}e^{i\frac{t}{\varepsilon}H_{\varepsilon}}Q^{Wick}e^{-i\frac{t}{\varepsilon}H_{\varepsilon}^{0}}
 \in \mathcal{L}(\Gamma_{s}(\mathcal Z)),
\quad
\Tr[\varrho_{\varepsilon}\langle\mathbf N \rangle^{r}] < C_{r} <+\infty,
\end{equation*} and
\begin{equation*}
W(\sqrt{2}\pi\xi)e^{i\frac{t}{\varepsilon}H_{\varepsilon}^{0}}e^{-i\frac{t}{\varepsilon}H_{\varepsilon}}
\in \mathcal{L}(\Gamma_{s}(\mathcal Z)).
\end{equation*}
 Thus the trace \eqref{eq.l1} divided by $t-s$ is well defined and converges as $s
 \to t$ thanks to  \eqref{eq.toto8}.
In equation \eqref{eq.toto9}, remark that
\begin{equation*}
\langle \mathbf N \rangle^{-r}e^{i\frac{t}{\varepsilon}H_{\varepsilon}^{0}}Q^{Wick}e^{-i\frac{t}{\varepsilon}H_{\varepsilon}}
  \in \mathcal{L}(\Gamma_{s}(\mathcal Z)),
\end{equation*}
and
\begin{equation*}
\forall \xi \in \mathcal Z, \; \langle \mathbf N \rangle
^{-r}W(\sqrt{2}\pi\xi)\langle \mathbf{N}\rangle^{r} \,
\in \mathcal{L}(\Gamma_{s}(\mathcal Z)),
\end{equation*}
owing to the Lemma 6.2 in \cite{AmBr}. Since for all $u \in \Gamma_{s}(\mathcal Z) $
\begin{equation*}
s \mapsto
e^{i\frac{s}{\varepsilon}H_{\varepsilon}}e^{-i\frac{s}{\varepsilon}H_{\varepsilon}^{0}}u
\in
\mathcal{C}(\R,\Gamma_{s}(\mathcal Z)),
\end{equation*}
the trace \eqref{eq.l2} divided by $t-s$ is well defined
and converges as $s \to t$ thanks to \eqref{eq.toto9}. Therefore the following integral formula holds true, with the help of \eqref{eq.toto7},
\begin{multline}
\Tr[\tilde{\varrho_{\varepsilon}}(t)W(\sqrt{2}\pi\xi)] =
\Tr[\varrho_{\varepsilon}W(\sqrt{2}\pi\xi)] + \\
\frac{i}{\varepsilon}\int_{0}^{t}{\Tr[\tilde{\varrho_{\varepsilon}}(s)[Q_{s}^{Wick}W(\sqrt{2}\pi\xi)-W(\sqrt{2}\pi\xi)Q_{s}^{Wick}]] ds},
\end{multline}
with $Q_{s}(z)=Q(e^{-isA}z)$. We conclude by using \eqref{eq.toto10} and
\eqref{eq.toto11} in Appendix B.
\end{proof}

\subsection{The nonlinear (Hartree) equation}
In this section we shall prove the global well posedness of the mean field dynamics. So we consider the Cauchy problem in $\mathcal Z$:
\begin{equation}
\label{eq.hartree}
\left\{
      \begin{aligned}
      i  \partial_{t}z_{t}&=Az_{t}+\partial_{\bar z}Q(z_{t}),&\\
        z_{t=0}&=z_{0}.\\
         \end{aligned}
    \right.
\end{equation}
\begin{prop}
\label{pr.flow}
Under the assumptions {\bf{(A1)}} and {\bf{(A2)}}, for all $z_{0} \in \mathcal Z$, the
previous Cauchy problem admits a unique mild solution $z_{t}$  in
$\mathcal{C}^{0}(\R,\mathcal Z)\cap{\mathcal{C}^{{1}}(\R,D(A)^{{'}})}$. \\ Furthermore the Cauchy problem:
\begin{equation}
\label{eq.hartree2}
\left\{
      \begin{aligned}
       \partial_{t}\tilde z_{t}&=v(t,\tilde z_{t})=-ie^{itA}[\partial_{\bar z}Q](e^{-itA}\tilde z_{t}),&\\
       \tilde z _{t=0}&=z_{0}.\\
         \end{aligned}
    \right.
\end{equation}
is equivalent with the initial problem and admits a unique solution $\tilde{z_{t}} \in  \mathcal{C}^{1}(\R,\mathcal Z)$. \\ Furthermore this equation implies:
\begin{itemize}
\item
$\forall t \in \R , \; |z_{t}|_{\mathcal Z}=|z_{0}|_{\mathcal
  Z}=|\tilde{z_{t}}|_{\mathcal Z}$,
 \item
 The velocity field $v(t,z)=-ie^{itA}[\partial_{\bar z}Q](e^{-itA}z)$, satisfies
 \begin{equation}
\label{eq.velocity}
 \forall t \in \R, \; |v(t,z)| \leq{r \; M (\sum_{j=2}^{r}|z|^{2j-1})         },
 \end{equation}
  with $M=\max_{j=2,...,r}\|\tilde{Q_{j}}\|$.
  \end{itemize}
\end{prop}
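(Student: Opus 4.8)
The plan is to work primarily with the interaction-picture problem \eqref{eq.hartree2}, whose right-hand side $v(t,z)$ has no unbounded part, and then transfer everything back to \eqref{eq.hartree} through the substitution $z_t=e^{-itA}\tilde z_t$. First I would record the structure of the nonlinearity: differentiating $Q_\ell(z)=\langle z^{\otimes\ell},\tilde Q_\ell z^{\otimes\ell}\rangle$ with respect to $\bar z$ and using the symmetry of $\tilde Q_\ell$ shows that $\partial_{\bar z}Q_\ell$ is a homogeneous polynomial map of bidegree $(\ell,\ell-1)$, with $|\partial_{\bar z}Q_\ell(z)|\le \ell\,\|\tilde Q_\ell\|\,|z|^{2\ell-1}$. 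Since $A$ is self-adjoint by {\bf{(A1)}}, the operators $e^{\pm itA}$ are unitary, so $|v(t,z)|=|\partial_{\bar z}Q(e^{-itA}z)|\le\sum_{\ell=2}^r \ell\|\tilde Q_\ell\|\,|z|^{2\ell-1}\le rM\sum_{\ell=2}^r|z|^{2\ell-1}$, which is precisely \eqref{eq.velocity}.

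Next, to apply the Cauchy--Lipschitz theorem in the Hilbert space $\mathcal Z$, I would check that $v$ is continuous in $t$ and locally Lipschitz in $z$. Continuity in $t$ follows from the strong continuity of the group $(e^{itA})_{t\in\R}$ together with the continuity of the polynomial map $\partial_{\bar z}Q$. The local Lipschitz bound comes from the fact that $\partial_{\bar z}Q$ is a polynomial (using here the boundedness of the $\tilde Q_\ell$): its differential is bounded on every ball $\{|z|\le R\}$, so the mean value inequality gives $|v(t,z_1)-v(t,z_2)|\le C(R)\,|z_1-z_2|$ for $|z_1|,|z_2|\le R$, the unitarity of $e^{\pm itA}$ leaving this estimate unchanged. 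Cauchy--Lipschitz then yields a unique maximal $\mathcal C^1$ solution $\tilde z_t$ of \eqref{eq.hartree2}.

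The passage from local to global existence rests on conservation of the norm. Along a solution of \eqref{eq.hartree2} one computes $\tfrac{d}{dt}|\tilde z_t|^2=2\Real\langle v(t,\tilde z_t),\tilde z_t\rangle$; moving $e^{itA}$ across the inner product by unitarity reduces this to $2\Real\big(i\langle\partial_{\bar z}Q(u),u\rangle\big)$ with $u=e^{-itA}\tilde z_t$, and the Euler identity $\langle\partial_{\bar z}Q_\ell(z),z\rangle=\ell\,Q_\ell(z)\in\R$ (a consequence of the gauge invariance $Q_\ell(e^{i\theta}z)=Q_\ell(z)$) makes this inner product real, so the derivative vanishes. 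Hence $|\tilde z_t|=|z_0|$ on the maximal interval of existence; since the solution remains confined to a fixed sphere and the local existence time provided by Cauchy--Lipschitz depends only on the norm, the standard continuation argument produces a global solution $\tilde z_t\in\mathcal C^1(\R,\mathcal Z)$.

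Finally I would establish the equivalence and the stated regularity. Setting $z_t=e^{-itA}\tilde z_t$, one checks that the Duhamel formula $z_t=e^{-itA}z_0-i\int_0^t e^{-i(t-s)A}\partial_{\bar z}Q(z_s)\,ds$ holds, so $z_t$ is the mild solution of \eqref{eq.hartree}; it lies in $\mathcal C^0(\R,\mathcal Z)$ because the group is strongly continuous and $\tilde z_t$ is continuous, and in $\mathcal C^1(\R,D(A)')$ because in the dual of $D(A)$ the identity $i\partial_t z_t=Az_t+\partial_{\bar z}Q(z_t)$ makes sense with $Az_t$ read weakly and $\partial_{\bar z}Q(z_t)\in\mathcal Z\hookrightarrow D(A)'$. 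Uniqueness for \eqref{eq.hartree} then transfers from uniqueness for \eqref{eq.hartree2} through the same bijection, and $|z_t|=|\tilde z_t|=|z_0|$ by unitarity. I expect the main obstacle to be not the ODE existence, which is routine once the local Lipschitz bound is in hand, but the careful verification of the equivalence together with the weak $D(A)'$ interpretation of the original equation; the conservation law is what upgrades local to global existence, so getting the Euler/gauge identity right is the other point deserving care.
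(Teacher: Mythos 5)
Your proposal is correct and follows essentially the same route as the paper: pass to the interaction-picture equation, establish a local Lipschitz bound on the polynomial velocity field to invoke Cauchy--Lipschitz, use the Euler/gauge identity $\langle z,\partial_{\bar z}Q_{\ell}(z)\rangle=\ell\,Q_{\ell}(z)\in\R$ to conserve the norm and hence globalize, and return to the original problem via $z_{t}=e^{-itA}\tilde z_{t}$ and the Duhamel formula. The only cosmetic difference is that the paper obtains the velocity bound \eqref{eq.velocity} by setting $y=0$ in its Lipschitz computation rather than estimating $\partial_{\bar z}Q_{\ell}$ directly, which changes nothing of substance.
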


\begin{proof}
It is enough to consider only positive times $t>0$.
We will prove that $z \rightarrow v(t,z)=-ie^{itA}[\partial_{\bar
  z}Q](e^{-itA}z)$ is locally Lipschitz in $\mathcal Z$ which will give
the local existence and uniqueness on a time interval $[0,T^{*}[$ for
the equation \eqref{eq.hartree2}. Then we can recover solutions of the
original  equation \eqref{eq.hartree0} by setting $z_{t}=e^{-itA}\tilde z_{t}$.\\
Let $z,y$ be in $\mathcal Z$,
\begin{align*}
|v(t,z)-v(t,y)| & \leq{|[\partial_{\bar z}Q](e^{-itA}z)-[\partial_{\bar z}Q](e^{-itA}y)|} ,&\\
&\leq{\sum_{j=2}^{r}} j |\; (\langle z_{t}^{\otimes j-1 }| \vee \Id_{\mathcal Z})
\tilde{Q_{j}}(z_{t}^{\otimes j})-\langle y_{t}^{\otimes j-1 }| \vee
\Id_{\mathcal Z}) \tilde{Q_{j}}(y_{t}^{\otimes j}) \;|\,.
\end{align*}
Thus, by setting $M=\max_{j=2,...,r}\|\tilde{Q_{j}}\|$, for all
$z,y \in B(0,R)$,\; there exists a non negative constant $C_{R} > 0$
such that:
\begin{equation*}
|v(t,z)-v(t,y)|\leq{r \; C_{R} \; M |z-y|}.
\end{equation*}
 Thus the Cauchy-Lipschitz theorem gives a unique solution
 $\tilde{z_{t}}$ in  $\mathcal{C}^{1}([0,T^{*}[,\mathcal Z)$.
The previous calculus with $y=0$ gives, for $t \in [0,T^{*}[$, the estimate:
\begin{equation*}
|v(t,z)| \leq{M \; r \sum_{j=2}^{r}|z|^{2j-1}}.
\end{equation*}
It remains to prove $|z_{t}|=|z_{0}|$ for all $ t \in  [0,T^{*}[$
which ensures that $T^*=+\infty$. In fact
\begin{align*}
\partial_{t}|\tilde z_{t}|^{2}=2 \Real \langle \tilde
z_{t},\partial_{t}\tilde z_{t}\rangle
&= -2 \Real \langle \tilde z_{t}, ie^{itA}[\partial_{\bar z}Q](e^{-itA}\tilde z_{t})\rangle\\
&=-2 \Real \; i \langle e^{-itA}\tilde z_{t},[\partial_{\bar z}Q](e^{-itA}\tilde z_{t})\rangle\\
&=-2 \sum_{\ell=2}^{r}\Real \; i \langle e^{-itA}\tilde
z_{t},[\partial_{\bar z}Q_{\ell}](e^{-itA}\tilde z_{t})\rangle\\
&= -2 \sum_{\ell=2}^{r}\Real \;[i\ell
\underbrace{Q_{\ell}(e^{-itA}\tilde z_{t})}_{\in \R}]=0.
\end{align*}
So this shows that $|\tilde{z}_{t}|=|z_{t}|=|z_{0}|$ and the mass conservation is
proved. By setting $z_{t}=e^{-itA}\tilde{z}_{t}$ and using the fact that the solution
$\tilde{z}_{t}$ satisfies
\begin{equation*}
\tilde{z_{t}}=z_{0}-i\int_{0}^{t}e^{isA}[\partial_{\bar z}Q](e^{-isA}\tilde{z}_{s})ds,
\end{equation*}
we obtain
\begin{equation*}
z_{t}=e^{-itA} z_{0}-i\int_{0}^{t}e^{i(s-t)A}[\partial_{\bar z}Q](z_{s})ds.
\end{equation*}
Hence the function $t \mapsto z_{t}$ belongs to
$\mathcal{C}^0(\R,\mathcal Z)\cap{\mathcal{C}^{{1}}(\R,D(A)^{{'}})}$
and it is a mild solution of \eqref{eq.hartree0}.
\end{proof}

\section{Propagation of Wigner measures}

\subsection{The main convergence arguments}
The following proposition will be useful in the derivation of the transport equation. It is mainly due to the compactness of the $\tilde{Q_{j}}$'s.
\begin{prop}
\label{pr.compact}
Let $(\varrho_{\varepsilon})_{\varepsilon \in (0,\bar \varepsilon)}$  be
a family of normal states on $\Gamma_{s}(\mathcal Z)$. Assume the
Assumptions {\bf{(A1)}} and {\bf{(A2)}} are satisfied and
\begin{equation*}
\forall k \in \N, \; \exists C_{k}>0, \; \forall \varepsilon \in (0,\bar \varepsilon), \; \Tr[\varrho_{\varepsilon}\mathbf{N}^{k}] \leq C_{k}.
\end{equation*}
Assume furthermore that:
\begin{equation*}
\mathcal M (\varrho_{\varepsilon}, \varepsilon \in (0,\bar \varepsilon))=\{ \mu \}.
\end{equation*}
Then for all $\xi \in \mathcal Z$ and all $t \in \R$:
\begin{equation}
\label{eq.toto1}
\lim_{\varepsilon \longrightarrow
  0}\Tr[\varrho_{\varepsilon}W(\sqrt{2}\pi\xi) \{ \mathrm{D}[Q(e^{-itA}z)][\xi]
\}^{Wick}]=\int_{\mathcal Z}e^{2i \pi \Real \langle\xi,z\rangle}\mathrm{D}[Q(e^{-itA}z)][\xi] d\mu(z),
\end{equation}
where $\mathrm{D}[Q(e^{-itA}z)][\xi]=\langle[\partial_{\bar
  z}Q](e^{-itA}z),e^{-itA}\xi\rangle+\langle e^{-itA}\xi,[\partial_{\bar z}Q](e^{-itA}z)\rangle.$
\end{prop}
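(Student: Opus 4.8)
The plan is to recognize $b(z):=\mathrm{D}[Q(e^{-itA}z)][\xi]$ as a Wick polynomial with compact kernels, and then to establish the stated ``Weyl times Wick'' convergence by a finite-rank truncation that reduces everything to the characteristic functions of the states. Writing $[\partial_{\bar z}Q](w)=\sum_{\ell=2}^{r}\ell\,(\langle w^{\otimes(\ell-1)}|\vee\Id_{\mathcal Z})\tilde Q_{\ell}\,w^{\otimes\ell}$, one sees from the expression given for $\mathrm{D}[Q(e^{-itA}z)][\xi]$ that $b$ is a finite sum of monomials lying in $\mathcal P_{\ell,\ell-1}(\mathcal Z)\oplus\mathcal P_{\ell-1,\ell}(\mathcal Z)$, whose kernels are obtained by composing $\tilde Q_{\ell}$ with the unitaries $e^{\pm itA}$ and contracting against $\xi$; since each $\tilde Q_{\ell}$ is compact by Assumption {\bf{(A2)}}, these kernels are compact operators. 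In particular $b$ has polynomial growth of degree at most $2r-1$, and the uniform bounds $\Tr[\varrho_{\varepsilon}\mathbf N^{k}]\le C_{k}$ force every Wigner measure to satisfy $\int_{\mathcal Z}|z|^{2k}\,d\mu<\infty$, so the right-hand side of \eqref{eq.toto1} is well defined.

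First I would reduce to finite rank. For each $\ell$ choose finite-rank self-adjoint operators $\tilde Q_{\ell}^{(m)}\to\tilde Q_{\ell}$ in operator norm (possible precisely because of compactness), producing a truncated symbol $b_{m}$. The number estimate of Proposition~\ref{number}, combined with Lemma~6.2 of \cite{AmBr} to commute $W(\sqrt2\pi\xi)$ through powers of $\langle\mathbf N\rangle$, yields
\[
\bigl|\Tr[\varrho_{\varepsilon}W(\sqrt2\pi\xi)(b-b_{m})^{Wick}]\bigr|\le C\,\Tr[\varrho_{\varepsilon}\langle\mathbf N\rangle^{2r-1}]\,\max_{\ell}\|\tilde Q_{\ell}-\tilde Q_{\ell}^{(m)}\|,
\]
which is bounded by $C'\max_{\ell}\|\tilde Q_{\ell}-\tilde Q_{\ell}^{(m)}\|$ uniformly in $\varepsilon$. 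On the measure side, $|e^{2i\pi\Real\langle\xi,z\rangle}(b-b_{m})(z)|\le C(1+|z|^{2r-1})\max_{\ell}\|\tilde Q_{\ell}-\tilde Q_{\ell}^{(m)}\|$ is dominated uniformly in $m$ and tends to $0$ pointwise, so the integrals converge by dominated convergence. It therefore suffices to prove \eqref{eq.toto1} with $b$ replaced by each finite-rank $b_{m}$.

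For a finite-rank symbol the product can be read off from characteristic functions alone. Each field $a(f)$, $a^{*}(f)$ is a linear combination of $s$-derivatives at $s=0$ of $W(sf)$, so $b_{m}^{Wick}$, a Wick-ordered polynomial in finitely many $a^{\#}(e_{i})$ (the $e_{i}$ spanning the ranges of the $\tilde Q_{\ell}^{(m)}$ together with $e^{\pm itA}\xi$), equals a fixed constant-coefficient differential operator $P(\partial_{s})$ in auxiliary parameters $s=(s_{i})$ applied to $\prod_{i}W(\sqrt2\pi s_{i}e_{i})$ at $s=0$, modulo terms carrying an explicit factor $\varepsilon$ produced by reordering through the $\varepsilon$-CCR. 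The Weyl relations $W(\eta_{1})W(\eta_{2})=e^{-\frac{i\varepsilon}{2}\Im\langle\eta_{1},\eta_{2}\rangle}W(\eta_{1}+\eta_{2})$ turn $W(\sqrt2\pi\xi)\prod_{i}W(\sqrt2\pi s_{i}e_{i})$ into a scalar phase times the single Weyl operator $W\bigl(\sqrt2\pi(\xi+\sum_{i}s_{i}e_{i})\bigr)$. By Definition~\ref{de.wigmeas}, $\Tr[\varrho_{\varepsilon}W(\sqrt2\pi(\xi+\sum_{i}s_{i}e_{i}))]$ converges as $\varepsilon\to0$ to $\int_{\mathcal Z}e^{2i\pi\Real\langle\xi+\sum_{i}s_{i}e_{i},z\rangle}\,d\mu(z)$, while the $\varepsilon$-phases tend to $1$; applying $P(\partial_{s})$ and evaluating at $s=0$ reproduces exactly $\int_{\mathcal Z}e^{2i\pi\Real\langle\xi,z\rangle}b_{m}(z)\,d\mu(z)$, since the factor $e^{2i\pi\Real\langle\xi,z\rangle}$ is independent of $s$ and $P(\partial_{s})e^{2i\pi\Real\langle\sum_{i}s_{i}e_{i},z\rangle}|_{s=0}=b_{m}(z)$.

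The delicate point, and the place where the hypotheses really enter, is the interchange of the limit $\varepsilon\to0$ with the $s$-derivatives: one must check that $(s_{i})\mapsto\Tr[\varrho_{\varepsilon}W(\sqrt2\pi(\xi+\sum_{i}s_{i}e_{i}))]$ converges together with all its derivatives, locally uniformly in $s$. This is guaranteed by the uniform bounds $\Tr[\varrho_{\varepsilon}\mathbf N^{k}]\le C_{k}$, which control all the $a^{\#}$-moments, keep the characteristic functions bounded in $\mathcal{C}^{\infty}$ on compact $s$-sets, and make the $\varepsilon$-corrections coming from the $\varepsilon$-CCR and from the Weyl phases vanish in the limit. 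The other essential input is the compactness Assumption {\bf{(A2)}}: without it the truncation step fails, since a merely bounded $\tilde Q_{\ell}$ need not be norm-approximable by finite-rank operators, and mass of the states could escape to infinity in phase space, breaking the convergence $\Tr[\varrho_{\varepsilon}b^{Wick}]\to\int b\,d\mu$. I expect this uniform-in-$\varepsilon$ differentiability to be the main technical obstacle, whereas the algebraic identities (Weyl relations and Wick reordering) are routine.
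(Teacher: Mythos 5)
Your first step coincides exactly with the paper's: you rewrite $\mathrm{D}[Q(e^{-itA}z)][\xi]$ as a finite sum of monomials in $\mathcal P_{\ell,\ell-1}\oplus\mathcal P_{\ell-1,\ell}$ whose kernels are the compact operators obtained by contracting $\tilde Q_{\ell}$ against $e^{-itA}\xi$ (the paper's $B_{j}(e^{-itA}\xi)$ and $B_{j}^{*}(e^{-itA}\xi)$). Where you diverge is in what happens next: the paper stops there and invokes Lemma~\ref{eq.wigcompact} (the ``a priori estimate'' extension of Wigner-measure convergence to Wick symbols with compact kernels, imported from \cite{AmNi1}), whereas you re-derive that convergence from scratch by norm-approximating each $\tilde Q_{\ell}$ by finite-rank operators and then reducing the finite-rank case to derivatives of characteristic functions via the Weyl relations. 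Your truncation step is sound and correctly isolates where compactness is indispensable; moreover, your treatment is in one respect more explicit than the paper's, since Lemma~\ref{eq.wigcompact} as stated only covers $\Tr[\varrho_{\varepsilon}b^{Wick}]$ and the paper leaves implicit how the extra factor $W(\sqrt2\pi\xi)$ is absorbed, while you handle it directly through the Weyl commutation relations. The one soft spot is that you explicitly defer the key analytic point of your finite-rank argument --- the locally uniform convergence of $s\mapsto\Tr[\varrho_{\varepsilon}W(\sqrt2\pi(\xi+\sum_{i}s_{i}e_{i}))]$ together with all its $s$-derivatives --- calling it ``the main technical obstacle'' without establishing it; this is precisely the content of the equicontinuity/Ascoli argument behind Proposition~6.15 of \cite{AmNi1}, so your outline matches what a complete proof requires, but as written that step is asserted rather than proved. (A minor additional detail: to recover both $a$ and $a^{*}$ separately from Weyl derivatives you need parameters along $e_{i}$ and $ie_{i}$, i.e.\ effectively complex $s_{i}$; your notation glosses over this, though it is routine to fix.)
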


\begin{proof}
For $j \in \{2,\ldots r\}$ and $\xi \in \mathcal Z$, let $B_{j}(\xi)$ denote the operator
\begin{equation}
B_{j}(\xi)=\tilde{Q_{j}}(\Id_{\bigvee^{j-1}\mathcal Z}\otimes |\xi \rangle) ,
\end{equation} and
\begin{equation*}
B_{j}^{*}(\xi)=(\Id_{\bigvee^{j-1}\mathcal Z} \otimes\langle \xi| )\tilde{Q_{j}}.
\end{equation*}
Both operators are compact respectively from $\bigvee^{j-1} \mathcal
Z$ to $\bigvee^{j} \mathcal Z$ and from $\bigvee^{j} \mathcal Z$ to
$\bigvee^{j-1} \mathcal Z$ owing to the assumption {\bf{(A2)}}.
Now, let us check that $\mathrm{D}[Q(e^{-itA}z)][\xi]$ is the sum of symbols with
compact kernels. Actually, $Q(e^{-itA}z)=\sum_{j=2}^{r}\langle
z^{\otimes j}, (e^{itA})^{\otimes j}\tilde{Q}_{j}(e^{-itA})^{\otimes
  j}z^{\otimes j}\rangle$ with $\tilde{Q}_{j}^{*}=\tilde{Q}_{j}.$ In
particular with $\overline{Q(z)}=Q(z),$ we obtain
\begin{align*}
\mathrm{D}[Q(e^{-itA}z)][\xi]&=\langle [\partial_{\bar z}Q](e^{-itA}z),e^{-itA}\xi \rangle +\langle e^{-itA}\xi,[\partial_{\bar z}Q](e^{-itA}z)\rangle,&\\
&=\sum_{j=2}^{r}j[\langle z^{\otimes j},\tilde{Q_{j}}(e^{-itA}\xi \vee z^{\otimes j-1})\rangle +\langle\tilde{Q_{j}}(e^{-itA}\xi \vee z^{\otimes j-1}),z^{\otimes j}\rangle],&\\
&=\sum_{j=2}^{r}j[\langle z^{\otimes j},B_{j}(e^{-itA}\xi)z^{\otimes j-1}\rangle+\langle z^{\otimes j-1},B_{j}^{*}(e^{-itA}\xi)z^{\otimes j}\rangle],&
\end{align*}
and all the terms involve compact operators. We refer to Lemma
\ref{eq.wigcompact} in order to compute the limit of $\Tr[\varrho_{\varepsilon}W(\sqrt{2}\pi\xi) \{ \mathrm{D}[Q(e^{-itA}z)][\xi]
\}^{Wick}]$ and obtain \eqref{eq.toto1}.

\end{proof}

In order to understand the asymptotic behaviour of
$\Tr[\rho_{\varepsilon}(t)W(\sqrt{2}\pi\xi)]$, when $\varepsilon$ goes to
$0$, we shall prove the following proposition.
\begin{prop}
\label{pr.compact2}
Let $(\varrho_{\varepsilon})_{\varepsilon \in (0,\bar \varepsilon)}$  be
a family of normal states on $\Gamma_{s}(\mathcal Z)$. Assume the
assumptions {\bf{(A1)}} and {\bf{(A2)}} are satisfied and
\begin{equation*}
\forall k \in \N, \; \exists C_{k}>0, \; \forall \varepsilon \in (0,\bar \varepsilon), \; \Tr[\varrho_{\varepsilon}\mathbf{N}^{k}] \leq C_{k}.
\end{equation*}
Assume furthermore that
\begin{equation*}
\mathcal M (\varrho_{\varepsilon}, \varepsilon \in (0,\bar \varepsilon))=\{ \mu \}.
\end{equation*}
Then for all $\xi \in \mathcal Z$ and all $t \in \R$,
\begin{equation*}
\lim_{\varepsilon \to 0} \int_{0}^{t}{\Tr[\varrho_{\varepsilon}W(\sqrt{2}\pi\xi) \{ \sum_{j=2}^{r}\varepsilon^{j-1}\frac{(i\pi)^{j}}{j!}\mathrm{D}^{j}[Q(e^{-isA}z)][\xi] \}^{Wick}] ds}\,=0.
\end{equation*}
\end{prop}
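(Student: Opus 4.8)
The plan is to exploit the explicit powers $\varepsilon^{j-1}$ with $j\ge 2$ together with the uniform moment bounds: each summand is a Wick-quantized \emph{polynomial} symbol whose trace against $\varrho_{\varepsilon}W(\sqrt{2}\pi\xi)$ stays bounded as $\varepsilon\to 0$, so the prefactor $\varepsilon^{j-1}\to 0$ forces the whole expression to vanish. Note that the uniqueness assumption $\mathcal M(\varrho_{\varepsilon})=\{\mu\}$ is not actually needed for this particular statement; only the bounds $\Tr[\varrho_{\varepsilon}\mathbf{N}^{k}]\le C_{k}$ enter.

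First I would record the algebraic structure of the symbols. Since $Q(e^{-isA}z)=\sum_{\ell=2}^{r}\langle (e^{-isA}z)^{\otimes \ell},\tilde Q_{\ell}(e^{-isA}z)^{\otimes \ell}\rangle$ is, for each $\ell$, homogeneous of bidegree $(\ell,\ell)$ in $(z,\bar z)$, the differential $\mathrm{D}^{j}[Q(e^{-isA}z)][\xi]$ is a finite sum of homogeneous symbols, each lying in some $\mathcal{P}_{p,q}(\mathcal Z)$ with $p+q=2\ell-j\le 2r-2$ and $p,q\le r$. Because $e^{-isA}$ is unitary, the operator norms of the associated kernels are bounded by constants depending only on $\|\tilde Q_{\ell}\|$, $|\xi|$, $j$ and $r$, and \textbf{not} on $s$.

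Next I would establish the key uniform bound. Setting $b_{j,s}(z)=\frac{(i\pi)^{j}}{j!}\mathrm{D}^{j}[Q(e^{-isA}z)][\xi]$, I claim
$$\sup_{\varepsilon\in(0,\bar\varepsilon)}\ \sup_{s\in[0,t]}\ \bigl|\Tr[\varrho_{\varepsilon}W(\sqrt{2}\pi\xi)\,b_{j,s}^{Wick}]\bigr|=:K_{j,\xi}<\infty.$$
To get this I would distribute powers of $\langle \mathbf{N}\rangle$: by the number estimates of Proposition~\ref{number}, a symbol $b\in\mathcal{P}_{p,q}(\mathcal Z)$ satisfies $\langle \mathbf{N}\rangle^{-p/2}b^{Wick}\langle \mathbf{N}\rangle^{-q/2}\in\mathcal{L}(\Gamma_{s}(\mathcal Z))$ uniformly in $\varepsilon$, while Lemma~6.2 in \cite{AmBr} (already used above) makes $\langle \mathbf{N}\rangle^{-r}W(\sqrt{2}\pi\xi)\langle \mathbf{N}\rangle^{r}$ bounded. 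Writing $b_{j,s}^{Wick}=\langle \mathbf{N}\rangle^{p/2}T_{1}\langle \mathbf{N}\rangle^{q/2}$ with $\|T_{1}\|\le C$, moving the weights by cyclicity of the trace and commuting one factor past the Weyl operator, and then applying the Cauchy--Schwarz inequality $\|XY\|_{\mathcal L^{1}}\le\|X\|_{\mathcal L^{2}}\|Y\|_{\mathcal L^{2}}$ to $X=\langle \mathbf{N}\rangle^{q/2}\varrho_{\varepsilon}^{1/2}$ and $Y=\varrho_{\varepsilon}^{1/2}\langle \mathbf{N}\rangle^{p/2}$, the trace is controlled by $C_{j,\xi}\,\Tr[\varrho_{\varepsilon}\langle \mathbf{N}\rangle^{2r}]\le C_{j,\xi}'$, uniformly in $\varepsilon$ and $s$ by the hypothesis. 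This rearrangement --- distributing the number weights so that the Wick operator and the conjugated Weyl operator both become bounded, and justifying the cyclicity on the unbounded factors --- is the only genuinely technical point, and it is exactly where \cite{AmBr} is invoked.

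Finally I would conclude by a crude estimate: with $K_{j,\xi}<\infty$, the integrand is bounded in absolute value by $\varepsilon^{j-1}K_{j,\xi}$ uniformly in $s\in[0,t]$, hence
$$\Bigl|\int_{0}^{t}\Tr[\varrho_{\varepsilon}W(\sqrt{2}\pi\xi)\{\textstyle\sum_{j=2}^{r}\varepsilon^{j-1}\tfrac{(i\pi)^{j}}{j!}\mathrm{D}^{j}[Q(e^{-isA}z)][\xi]\}^{Wick}]\,ds\Bigr|\le |t|\sum_{j=2}^{r}\varepsilon^{j-1}K_{j,\xi}.$$
Since the sum is finite and every exponent $j-1\ge 1$, the right-hand side tends to $0$ as $\varepsilon\to 0$, which is the claim. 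No dominated-convergence argument is needed, because the bound is already uniform in $s$.
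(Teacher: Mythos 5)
Your proposal is correct and follows essentially the same route as the paper: bound each summand's trace uniformly in $s$ via the number estimates of Proposition~\ref{number} together with the moment hypothesis $\Tr[\varrho_{\varepsilon}\mathbf{N}^{k}]\le C_{k}$, and let the prefactors $\varepsilon^{j-1}$ with $j\ge 2$ force the limit to zero. The only (cosmetic) difference is that the paper invokes dominated convergence at the end, while you correctly observe that the uniform-in-$s$ bound makes this unnecessary, and you likewise rightly note that the uniqueness of the Wigner measure plays no role here.
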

\begin{proof}
By using Proposition \eqref{number}, a simple estimate of  the integrand
yields for all $s \in [0,t]$
\begin{align*}
&|\Tr[\varrho_{\varepsilon}W(\sqrt{2}\pi\xi) \{
\sum_{j=2}^{r}\varepsilon^{j-1}\frac{(i\pi)^{j}}{j!}\mathrm{D}^{j}[Q(e^{-isA}z)][\xi]
\}^{Wick}]|\\
 &\leq C_{r}  \sum_{j=2}^{r}
\varepsilon^{j-1}\frac{\pi^{j}}{j!}\|\langle\mathbf N \rangle^{-r}\{\mathrm{D}^{j}[Q(e^{-isA}z)][\xi] \}^{Wick}\|\\
&\leq \sum_{j=2}^{r}
\varepsilon^{j-1}\frac{\pi^{j}}{j!}\tilde{C_{r}}\langle \xi\rangle^{j},
\end{align*}
with $\langle u\rangle=(1+|u|^{2})^{\frac{1}{2}}$. We conclude
therefore by the dominated convergence theorem.
\end{proof}

\subsection{Existence of Wigner measures for all times}
Remember the definition of
\begin{equation*}
\varrho_{\varepsilon}(t)=e^{-i\frac{t}{\varepsilon}H_{\varepsilon}}\varrho_{\varepsilon}e^{i\frac{t}{\varepsilon}H_{\varepsilon}},
\end{equation*}
and
\begin{equation*}
\tilde{\varrho_{\varepsilon}}(t)=e^{i\frac{t}{\varepsilon}H_{\varepsilon}^{0}}\varrho_{\varepsilon}(t)e^{-i\frac{t}{\varepsilon}H_{\varepsilon}^{0}}.
\end{equation*}
\begin{prop}
\label{prop.existence}
Let $(\varrho_{\varepsilon})_{\varepsilon \in (0,\bar \varepsilon)}$
be a family of normal states on $\Gamma_s(\mathcal{Z})$. Assume the
assumptions {\bf{(A1)}}-{\bf{(A2)}} are satisfied and
\begin{equation}
\forall k \in \N, \exists C_{k} > 0, \; \forall \varepsilon \in (0,\bar \varepsilon), \; \Tr[\varrho_{\varepsilon}\mathbf{N}^{k}] \leq C_{k}\,.
\end{equation}
For all sequence $(\varepsilon_{n})_{n \in \N}$ in $(0,\bar
\varepsilon)$ such that $\lim_{n \to
  +\infty}\varepsilon_{n}=0$, there exist a subsequence
$(\varepsilon_{n_{k}})_{k \in \N}$ with $\lim_{k \to
  +\infty}\varepsilon_{n_{k}}=0$ and a family of Borel probability
measures $\{\tilde{\mu_{t}}, t \in \R\}$ such that:
\begin{equation}
\forall t \in \R, \; \mathcal M ( \tilde{\varrho}_{\varepsilon_{k}}(t), k \in \N) = \{  \tilde{\mu_{t}}  \}.
\end{equation}
Furthermore,
\begin{equation}
\label{eq.kmasse}
\int_{\mathcal Z}|z|^{2k}\; d\tilde{\mu_{t}}(z) \leq C_{k}, \; \forall k \in \N,
\end{equation}
and $\tilde{\mu}_{t}$ solves the integral equation, for all $\xi \in
\mathcal Z$,
\begin{equation*}
\tilde{\mu_{t}}(e^{2i\pi \Real\langle\xi,.\rangle})  = \tilde{\mu_{0}}(e^{2i \pi
  \Real\langle\xi,.\rangle} )-\pi  \int_0^t \tilde{\mu_{s}} (e^{2i \pi
  \Real\langle\xi,.\rangle}\mathrm{D}[Q(e^{-isA}.)][\xi])ds,
\end{equation*}
\begin{equation}
\label{eq.mesureliouville}
= \tilde{\mu_{0}} (e^{2i \pi \Real\langle\xi,.\rangle})+i \int_0^t \tilde{\mu_{s}}(
\{ \tilde{Q_{s}}\; , \;e^{2i \pi \Real\langle\xi,.\rangle}(z) \}) ds,
\end{equation}
by setting for $b_{1}$,$b_{2} \in \mathcal{P}_{p,q}(\mathcal Z)$
\begin{equation*}
\{ b_{1},b_{2} \} (z) = \partial_{z}b_{1}(z).
\partial_{\bar z} b_{2}(z)-\partial_{z}b_{2}(z).\partial_{\bar z}b_{1}(z).
\end{equation*}
\end{prop}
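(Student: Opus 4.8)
The plan is to build a single subsequence valid for all times by combining the equicontinuity encoded in the number estimates with a diagonal extraction, and then to pass to the limit in the integral identity \eqref{eq.ipp}. First I would record that the moments of the number operator are conserved along the evolution: since $\mathbf N$ commutes with both $H_\varepsilon$ and $H_\varepsilon^0$, the unitaries defining $\tilde\varrho_\varepsilon(t)$ commute with $\mathbf N$, so that $\Tr[\tilde\varrho_\varepsilon(t)\mathbf N^k]=\Tr[\varrho_\varepsilon\mathbf N^k]\le C_k$ for every $t\in\R$ and every $k\in\N$. In particular, for each fixed $t$ the family $(\tilde\varrho_\varepsilon(t))_\varepsilon$ satisfies the hypotheses of Theorem~\ref{th.Wigdef}, so Wigner measures exist along suitable subsequences. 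I set $F_\varepsilon(t,\xi)=\Tr[\tilde\varrho_\varepsilon(t)W(\sqrt2\pi\xi)]$, a family of functions bounded by $1$ in modulus.

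The heart of the argument, and the main obstacle, is to produce one subsequence $(\varepsilon_{n_k})$ along which the Wigner measure exists \emph{simultaneously} for all $t\in\R$. For this I would show that $(F_\varepsilon)_\varepsilon$ is equicontinuous in $(t,\xi)$, uniformly in $\varepsilon$, on compact sets. Equicontinuity in $\xi$ follows from the Weyl relations together with the bound $\Tr[\tilde\varrho_\varepsilon(t)\langle\mathbf N\rangle]\le C_1$, as in the Wigner calculus of Theorem~\ref{th.Wigdef}. Equicontinuity in $t$ follows from \eqref{eq.ipp}: the difference $F_\varepsilon(t,\xi)-F_\varepsilon(s,\xi)$ is the integral over $[s,t]$ of $i\,\Tr[\tilde\varrho_\varepsilon(u)W(\sqrt2\pi\xi)\{\sum_{j}\varepsilon^{j-1}\tfrac{(i\pi)^j}{j!}\mathrm D^j[Q(e^{-iuA}z)][\xi]\}^{Wick}]$, whose integrand is bounded uniformly in $\varepsilon$ and $u$ by a constant depending only on $\xi$, thanks to Proposition~\ref{number} (the estimate used in Proposition~\ref{pr.compact2}); this gives a uniform local Lipschitz bound in $t$. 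Fixing a countable dense subset of $\mathcal Z$ and a countable dense subset of $\R$, a diagonal extraction makes $F_{\varepsilon_{n_k}}$ converge there, and the equicontinuity upgrades this to locally uniform convergence on all of $\R\times\mathcal Z$ towards a continuous limit $F(t,\xi)$. For each fixed $t$, the function $\xi\mapsto F(t,\xi)$ is the limit along $(\varepsilon_{n_k})$ of the Wigner transforms, so by Theorem~\ref{th.Wigdef} it is the characteristic function of a unique Borel probability measure $\tilde\mu_t$, which gives $\mathcal M(\tilde\varrho_{\varepsilon_{n_k}}(t),k\in\N)=\{\tilde\mu_t\}$.

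The moment bounds \eqref{eq.kmasse} I would deduce from the number estimate by a Fatou-type lower semicontinuity property of the Wigner calculus: since $\Tr[\tilde\varrho_{\varepsilon_{n_k}}(t)\mathbf N^k]\le C_k$ uniformly in $k$, one obtains $\int_{\mathcal Z}|z|^{2k}\,d\tilde\mu_t(z)\le\liminf_k\Tr[\tilde\varrho_{\varepsilon_{n_k}}(t)\mathbf N^k]\le C_k$.

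Finally, to get the integral equation I would pass to the limit $k\to\infty$ in \eqref{eq.ipp} written for $\tilde\varrho_{\varepsilon_{n_k}}$, with $\xi$ and $t$ fixed. The left-hand side converges to $\tilde\mu_t(e^{2i\pi\Real\langle\xi,\cdot\rangle})$ and the boundary term to $\tilde\mu_0(e^{2i\pi\Real\langle\xi,\cdot\rangle})$. In the time integral I split the sum over $j$: the terms $j\ge2$ vanish in the limit by Proposition~\ref{pr.compact2}, while for $j=1$ the integrand converges pointwise in $s$, by Proposition~\ref{pr.compact} applied to $\tilde\varrho_{\varepsilon_{n_k}}(s)$ (whose Wigner measure is exactly $\tilde\mu_s$ because of the single extraction), to $\int_{\mathcal Z}e^{2i\pi\Real\langle\xi,z\rangle}\mathrm D[Q(e^{-isA}z)][\xi]\,d\tilde\mu_s(z)$. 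Since the integrands are uniformly bounded in $k$ and $s$ by Proposition~\ref{number}, dominated convergence lets me interchange limit and integral; collecting the prefactor $i\cdot\frac{(i\pi)}{1!}=-\pi$ yields the first line of \eqref{eq.mesureliouville}, the Poisson-bracket form being a rewriting of the same expression. The two delicate points are precisely that the single extraction is what guarantees $\mathcal M(\tilde\varrho_{\varepsilon_{n_k}}(s))=\{\tilde\mu_s\}$ for \emph{every} $s\in[0,t]$, so that Proposition~\ref{pr.compact} is applicable at each $s$, and that the uniform domination needed for the interchange is supplied by the number estimates.
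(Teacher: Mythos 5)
Your proposal is correct and follows essentially the same route as the paper: conservation of the number moments, a diagonal/Ascoli extraction giving a single subsequence valid for all times (which the paper simply outsources to the argument in \cite{AmNi4}), the moment bound via the a priori estimates of the Wigner calculus (the paper makes this precise with finite-rank projections $\mathrm{d}\Gamma(p_{n})\leq \mathbf{N}$), and passage to the limit in \eqref{eq.ipp} using Propositions \ref{pr.compact2} and \ref{pr.compact} plus dominated convergence. You also correctly identify the key point that the single extraction is what makes Proposition \ref{pr.compact} applicable at every intermediate time $s$.
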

\begin{proof}
The extraction of such subsequence $(\varepsilon_{n_{k}})_{k \in \N}$
and  the existence of  a family of Borel probability measures $\tilde{\mu_{t}}$ have been proved
 in \cite{AmNi4} by a diagonal extraction  process relying on some Ascoli type argument. We skip the proof
of this step since the result in \cite{AmNi4}  applies to our case without modification. \\
Let $p_{n}$ be the projection on $\C e_{1}\oplus \ldots \oplus
\C e_{n}$ with $(e_{i})_{i\in\mathbb{N}}$ an ONB of $\mathcal{Z}$. Since
$\mathrm{d}\Gamma(p_{n}) \leq \mathbf{N}\,$, it follows that
\begin{equation*}
\int_{\mathcal Z}|z|^{2k} d\tilde\mu_{t}(z)=\sup_{n \in \N}\int_{\mathcal
  Z}|p_{n}z|^{2k} d\tilde\mu_{t}(z)=\sup_{n \in \N}\{\liminf_{\varepsilon
  \to 0} \Tr [\tilde\varrho_{\varepsilon}(t){(\mathrm{d}\Gamma(p_{n}))}^{k}]\}
\end{equation*}
\begin{equation*}
 \leq \liminf_{\varepsilon \to 0} \Tr [\varrho_{\varepsilon}\mathbf{N}
^{k}] \leq C_{k}\;.
\end{equation*}
This proves \eqref{eq.kmasse}. For the derivation of the integral
equation \eqref{eq.mesureliouville}, we have  according to \eqref{eq.ipp},
\begin{equation*}
\Tr[\tilde{\varrho_{\varepsilon}}(t)W(\sqrt{2}\pi\xi)] = \Tr[\varrho_{\varepsilon}W(\sqrt{2}\pi\xi)] +
\end{equation*}
\begin{equation*}
  i\int_{0}^{t}{\Tr[\tilde{\varrho_{\varepsilon}}(s)W(\sqrt{2}\pi\xi) \{ \sum_{j=1}^{r}\varepsilon^{j-1}\frac{(i\pi)^{j}}{j!}\mathrm{D}^{j}[Q(e^{-isA}z)][\xi] \}^{Wick}] ds}.
\end{equation*}
The estimate of Proposition \ref{pr.compact2} implies that all the terms $j=2,...,r$  of the sum in the right side go to $0$ as $\varepsilon \rightarrow 0$.
For the last term, we use Proposition \ref{pr.compact} for \
$\tilde{\varrho_{\varepsilon}}(s)$ thanks to the fact that
\begin{equation*}
\Tr[\tilde{\varrho}_{\varepsilon}(t) \mathbf{N}^{k}]=\Tr[\varrho_{\varepsilon}\mathbf{N}^{k}] \leq{C_{k}}\,.
\end{equation*}
Thus taking the limit as $\varepsilon \rightarrow 0$ yields
\begin{equation*}
\tilde{\mu_{t}}(e^{2i \pi \Real\langle\xi,.\rangle})  =
\tilde{\mu_{0}}(e^{2i \pi \Real\langle\xi,.\rangle} )-\pi  \int_0^t \tilde{\mu_{s}}
(e^{2i \pi \Re\langle\xi,.\rangle}\mathrm{D}[Q(e^{-isA}.)][\xi])ds.
\end{equation*}
We conclude with
\begin{align*}
i\{ \tilde{Q_{s}}\,,\,e^{2i\pi\Real\langle\xi,.\rangle}
\}(z)&=i(\langle [\partial_{\bar
  z}Q](e^{-isA}z)\,,\,\partial_{\bar z}e^{2i \pi\Real\langle\xi,z\rangle}\rangle-\langle\partial_{\bar
  z}e^{2i \pi\Real\langle\xi,z\rangle}\,,\,[\partial_{\bar z}Q](e^{-isA}z)\rangle),&\\
&=i\pi(\langle [\partial_{\bar z}Q](e^{-isA}z)\,,\,i\xi\rangle-\langle i
\xi,[\partial_{\bar z}Q](e^{-isA}z)\rangle) \,e^{2i \pi \Real\langle\xi,z\rangle}\,,&\\
&=-\pi \mathrm{D}[Q(e^{-isA}z)][\xi] \,e^{2i \pi \Real\langle\xi,z\rangle}\,.
\end{align*}
\end{proof}

\subsection{The Liouville equation fulfilled by the Wigner measures.}
The previous integral equation \eqref{eq.mesureliouville} can be
interpreted as a continuity equation,  in the infinite dimensional Hilbert
space  $\mathcal Z$, fulfilled by the Wigner measures $(\tilde\mu_{t})$.  \\
We introduce some classes of cylindrical functions on $\mathcal Z$. Denote $\P$ the space of the
finite rank orthogonal projections on $\mathcal Z$. A function is
in the cylindrical Schwarz space $\mathcal S_{cyl}(\mathcal Z)$ (resp.
$C_{0,cyl}^{\infty}(\mathcal Z)$) if
 \begin{equation*}
\exists p \in \P, \; \exists g \in  \mathcal S(p \mathcal
Z)(\text{resp. $C_{0,cyl}^{\infty}(p\mathcal Z)$}),\; \forall z \in \mathcal Z, f(z)=g(pz).
\end{equation*}
The space $C_{0,cyl}^{\infty}(\R \times \mathcal Z)$ which enforces
the compact support in the first variable, will be useful too.
Denote $L_{p}(dz)$ the Lebesgue measure associated with the finite
dimensional subspace $p\mathcal Z$. The Fourier transform is given on $\mathcal S_{cyl}(\mathcal Z)$ by :
\begin{equation*}
\mathcal F[f](\xi)=\int_{p\mathcal Z} f(z) e^{-2i\pi \Real
  \langle z,\xi \rangle_{\mathcal Z}}L_{p}(dz),
\end{equation*}
\begin{equation*}
f(z)=\int_{p \mathcal Z} \mathcal F [f](\xi) e^{2i\pi \Real
  \langle z,\xi \rangle_{\mathcal Z }}L_{p}(d\xi).
\end{equation*}
Then call $Prob_{2}(\mathcal Z)$ the set of Borel probability
measures $\mu$ finite second moment, i.e. $\int_{\mathcal Z} |z|_{\mathcal Z}^{2} \; d\mu(z) <
\infty$. On this space the Wasserstein distance is given by the formula:
\begin{equation}
W_{2}(\mu_{1},\mu_{2})=\sqrt{\inf_{\mu \in \Gamma(\mu_{1},\mu_{2})}
\int_{\mathcal Z} | z_{1} - z_{2} |_{\mathcal Z}^{2} d\mu(z_{1},z_{2})},
\end{equation}
with $\Gamma(\mu_{1},\mu_{2})$ the set of probability measures $\mu$ on
$\mathcal Z\times \mathcal Z$ such that the marginals
$(\Pi_{1})_{*}\mu=\mu_{1}$ and $(\Pi_{2})_{*}\mu=\mu_{2}$.
Let $\mathcal{P}(\mathcal Z)$ be the family of all Borel probability
measures on a Hilbert space $\mathcal Z$. Here $\Pi_{j}, j=1,2$, are
the canonical projections on the first and the second component
respectively.\\

From now, after
introducing a Hilbert basis $(e_{n})_{n \in \N^{*}}$, the space
$\mathcal{Z}$ can be equipped with the
distance $$d_{w}(x_{1}-x_{2})=\sqrt{\sum_{n \in \N^{*}}\frac{|\langle
      x_{1}-x_{2},e_{n}\rangle|^{2}}{n^{2}}}.$$ It induces a topology
  globally weaker than the weak topology. However
  these topology coincide on bounded sets of $\mathcal{Z}.$\\
The norm and $d_{w}$ topology give rise two distinct notions
of narrow convergence  of probability measures. On the one hand, a sequence $(\mu_{n})_{n \in \N}$ is
narrowly convergent to $\mu \in \mathcal P(\mathcal Z)$ if 
\begin{equation}
\label{narrow1}
\lim_{n \to
  +\infty}\int_{\mathcal Z}f(z)d\mu_{n}(z)=\int_{\mathcal
  Z}f(z)d\mu(z),
\end{equation}
for every function $f \in
\mathcal{C}^{0}_{b}(\mathcal Z,|.|)$, the space of continuous and bounded
real functions defined on $\mathcal Z$ with the norm
topology. On the other hand, a sequence $(\mu_{n})_{n \in \N}$ is
weakly narrowly convergent if  the limit \eqref{narrow1} holds for
all $f \in \mathcal{C}^{0}_{b}(\mathcal Z,d_{w})$. Our sequences of
probability measures are assumed to have a uniformly bounded moment $\int_{\mathcal
  Z}|z|^{2k}d\mu_{n}(z)\leq C_{k}$ for some $k\geq 1$. Within this framework, the narrow convergence is equivalent to the convergence with respect
to the Wasserstein distance $W_{2}$ in $Prob_{2}(\mathcal Z)$
according to Proposition 7.1.5 in \cite{AGS}. With the same moment
condition, the weak narrow convergence is equivalent to the
convergence \eqref{narrow1} for all $f \in \mathcal{S}_{cyl}(\mathcal
Z)$ or for all $f \in \mathcal{C}^{\infty}_{0,cyl}(\mathcal Z),$
acccording to Lemma 5.1.12 f) in \cite{AGS}.
\begin{prop}
\label{prop.liouville}
Assume that the family $(\varrho_{\varepsilon})_{\varepsilon \in (0,\bar{\varepsilon})}$ satisfies:
\begin{equation}
\forall \alpha \in \N, \, \exists \;C_{\alpha}>0 \,, \forall \varepsilon \in (0,\bar{\varepsilon}), \, \Tr[\varrho_{\varepsilon} \mathbf{N}^{\alpha}] \leq{C_{\alpha}}.
\end{equation}
Consider a subsequence $(\varepsilon_{k})_{k \in \N}$ such that
$\lim_{k \to +\infty}\varepsilon_{k}=0$,
\begin{equation}
\mathcal{M}(\tilde{\varrho}_{\varepsilon_{k}}(t), k \in \N) = \{ \tilde{\mu_{t}} \}.
\end{equation}
Then the probability measure $\tilde{\mu_{t}}$ defined on $\mathcal Z$ satisfies
\begin{enumerate}
\item
When $(e_{n})_{n \in \N^{*}}$ a Hilbert basis of $\mathcal Z$
and $\mathcal Z$ is endowed with the distance
$d_{w}(z_{1},z_{2})=\sqrt{\sum_{n \in
    \N^{*}}\frac{|<z_{1}-z_{2},e_{n}>| }{n^{2} } } $, the measure
$\tilde{\mu_{t}}$ is weakly narrowly continuous with respect to t.
\item
This is a weak solution to the (continuity) Liouville equation
\begin{equation}
\partial_{t} \tilde {\mu}_{t} + i \{ Q_{t},\tilde{\mu}_{t} \} = 0,
\end{equation}
in the sense that for all $f \in C_{0,cyl}^{\infty}(\R\times\mathcal Z)$
\begin{equation}
\int_{\R}\int_{\mathcal Z} (\partial_{t} f + i \{ Q_{t},f \} ) d\tilde{\mu}_{t}(z) dt =0,
\end{equation}
with $Q_{t}(z)=Q(e^{-itA}z).$
\end{enumerate}
\end{prop}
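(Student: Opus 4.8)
The plan is to reduce both assertions to the integral identity \eqref{eq.mesureliouville} of Proposition \ref{prop.existence}, which controls $\tilde{\mu}_{t}$ tested against the characters $\chi_{\xi}(z):=e^{2i\pi\Real\langle\xi,z\rangle}$, and then to recover arbitrary cylindrical test functions by Fourier synthesis on the finite--dimensional cylinder. Throughout I would lean on the uniform moment bounds \eqref{eq.kmasse}, $\int_{\mathcal Z}|z|^{2k}\,d\tilde{\mu}_{t}(z)\le C_{k}$, to ensure that every quantity below is well defined and that the Fubini interchanges are licit: these bounds are exactly what make the characters a rich enough testing family for the measures $\tilde{\mu}_{t}\in Prob_{2}(\mathcal Z)$.

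For assertion (1), first note that \eqref{eq.mesureliouville} exhibits $t\mapsto\tilde{\mu}_{t}(\chi_{\xi})$ as a constant plus the time integral of $s\mapsto i\,\tilde{\mu}_{s}(\{Q_{s},\chi_{\xi}\})$, whose integrand is bounded uniformly on compact time intervals by \eqref{eq.kmasse} (the bracket $\{Q_{s},\chi_{\xi}\}$ is, up to the bounded factor $\chi_{\xi}$, a polynomial in $(z,\bar z)$ of degree at most $2r-1$). Hence $t\mapsto\tilde{\mu}_{t}(\chi_{\xi})$ is continuous, indeed Lipschitz. Given $f\in\mathcal S_{cyl}(\mathcal Z)$ with $f(z)=g(pz)$, $p\in\P$, $g\in\mathcal S(p\mathcal Z)$, Fourier inversion on $p\mathcal Z$ gives $\int_{\mathcal Z}f\,d\tilde{\mu}_{t}=\int_{p\mathcal Z}\mathcal F[g](\xi)\,\tilde{\mu}_{t}(\chi_{\xi})\,L_{p}(d\xi)$, and since $\mathcal F[g]$ is integrable while $|\tilde{\mu}_{t}(\chi_{\xi})|\le1$, dominated convergence yields continuity of $t\mapsto\int_{\mathcal Z}f\,d\tilde{\mu}_{t}$. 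As $\tilde{\mu}_{t}$ has uniformly bounded second moment by \eqref{eq.kmasse}, testing against $\mathcal S_{cyl}(\mathcal Z)$ determines weak narrow convergence (Lemma 5.1.12 f) in \cite{AGS}), so $t\mapsto\tilde{\mu}_{t}$ is weakly narrowly continuous.

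For assertion (2), fix $f\in C_{0,cyl}^{\infty}(\R\times\mathcal Z)$, written $f(t,z)=g(t,pz)$ with $g(t,\cdot)\in C_{0}^{\infty}(p\mathcal Z)$ and $t$--support compact. Synthesising in the cylinder variable, $f(t,z)=\int_{p\mathcal Z}\hat g(t,\xi)\,\chi_{\xi}(z)\,L_{p}(d\xi)$ with $\hat g(t,\cdot)=\mathcal F[g(t,\cdot)]$ (which is Schwartz by Paley--Wiener), and since $\{Q_{t},\cdot\}$ is a first order differential operator in $(z,\bar z)$ it passes under the $\xi$--integral, giving $\{Q_{t},f\}(t,z)=\int_{p\mathcal Z}\hat g(t,\xi)\,\{Q_{t},\chi_{\xi}\}(z)\,L_{p}(d\xi)$. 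Inserting both expansions into $\int_{\R}\int_{\mathcal Z}(\partial_{t}f+i\{Q_{t},f\})\,d\tilde{\mu}_{t}\,dt$ and applying Fubini, the matter reduces, for each fixed $\xi\in p\mathcal Z$, to
\begin{equation*}
\int_{\R}\Big(\partial_{t}\hat g(t,\xi)\,\tilde{\mu}_{t}(\chi_{\xi})+i\,\hat g(t,\xi)\,\tilde{\mu}_{t}(\{Q_{t},\chi_{\xi}\})\Big)\,dt .
\end{equation*}
Integrating the first summand by parts in $t$ (boundary terms vanish by compact support) and using that \eqref{eq.mesureliouville} makes $t\mapsto\tilde{\mu}_{t}(\chi_{\xi})$ absolutely continuous with $\partial_{t}\tilde{\mu}_{t}(\chi_{\xi})=i\,\tilde{\mu}_{t}(\{Q_{t},\chi_{\xi}\})$, the two summands cancel for every $\xi$, so the full integral vanishes and $\tilde{\mu}_{t}$ is a weak solution.

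The step I expect to be delicate is the double Fubini interchange in assertion (2): one integrates over $t$ (compact range), over $\xi\in p\mathcal Z$ against $\hat g(t,\cdot)$, and over $z$ against $\tilde{\mu}_{t}$, while $\{Q_{t},\chi_{\xi}\}$ grows polynomially both in $z$ (degree $2r-1$) and in $\xi$. Absolute integrability must be secured by pairing the Schwartz decay of $\hat g(t,\cdot)$ against the powers of $\xi$ produced by the bracket, and the moment bounds \eqref{eq.kmasse} against the powers of $z$; the uniformity of the constants $C_{k}$ in $t$ over the compact support of $g$ keeps this estimate clean. Once absolute integrability is established, every interchange above is justified and the per--$\xi$ cancellation closes the argument.
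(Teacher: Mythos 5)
Your proposal is correct and follows essentially the same route as the paper: both parts rest on the integral identity \eqref{eq.mesureliouville} for the characters $e^{2i\pi\Real\langle\xi,\cdot\rangle}$, Fourier synthesis over the finite-dimensional cylinder, the uniform moment bounds \eqref{eq.kmasse}, and Lemma 5.1.12~f) of \cite{AGS} for the weak narrow continuity. The only (harmless) divergence is in part (2), where the paper first treats product test functions $f(t,z)=g(z)\phi(t)$ and then invokes density of $C^{\infty}_{0}(\R)\otimes^{alg}C^{\infty}_{0,cyl}(\mathcal Z)$ in $C^{\infty}_{0,cyl}(\R\times\mathcal Z)$, whereas you handle a general $f(t,z)=g(t,pz)$ directly via a $t$-dependent Fourier synthesis and a per-$\xi$ integration by parts, which sidesteps that density step at the cost of the Fubini bookkeeping you correctly identify and justify.
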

\begin{proof}
\noindent\textbf{a)}
The characteristic function $G$ of the measure $\tilde\mu_{t}$ is given by
$$G(\eta,t)=\tilde{\mu}_{t}(e^{-2i\pi \Re\langle\eta,z\rangle})\,.$$
 The following inequality holds:
\begin{equation}
\label{eq.GG}
|G(\eta,t)-G(\eta',t)|\leq{    2\pi   |\eta-\eta'| \int_{\mathcal Z} |z|    d\tilde{\mu_{t}}(z) .     }
\end{equation}
Since the uniform estimate $\int_{\mathcal Z}1+|z|^{2}_{\mathcal Z}
d\tilde{\mu}_{t}(z) \leq{C_{2}}$  is true for all times, we get for all $\eta,\eta'$ in $\mathcal Z$ and for $t \in \R$,
\begin{equation}
|G(\eta,t)-G(\eta',t)| \leq \pi |\eta-\eta'|C_{2}.
\end{equation}
\noindent\textbf{b)}
According to Proposition \ref{prop.existence} and \eqref{eq.mesureliouville},
\begin{equation*}
\tilde{\mu}_{t'}(e^{2i\pi \Re\langle\xi,.\rangle})-\tilde{\mu}_{t}(e^{2i \pi
  \Re\langle\xi,.\rangle})=-\pi \int_{t}^{t'} \tilde{\mu_{s}} (e^{2i
  \pi\Re \langle\xi,.\rangle}\mathrm{D}[Q(e^{-isA}.)][\xi])ds.
\end{equation*}
We use the estimate \eqref{eq.velocity} and get
\begin{equation*}
|\mathrm{D}[Q(e^{-isA}z)][\xi]| \leq{2 |e^{-isA}\xi| \; |[\partial_{\bar{z}}Q](e^{-isA}z)|}
\leq{2|\xi| M r \sum_{j=2}^{r}|z|^{2j-1}}.
\end{equation*}
Thus for $\xi \in \mathcal Z$
\begin{align*}
|G(\xi,t')-G(\xi,t)|& \leq \Big| \pi \int_{t}^{t'} G(\xi,s)
\mathrm{D}[Q(e^{-isA}z)][\xi] ds \Big| ,&\\
&\leq { 2 \pi  |t-t'| |\xi| M r \sup_{s \in
    [t,t']}\sum_{j=2}^{r}\int_{\mathcal Z}|z|^{2j-1}
  d\tilde{\mu}_{s}(z)},&\\
&\leq C|t-t'||\xi|,&
\end{align*}
since $|z|^{2j-1} \leq \frac{1}{2}(1+|z|^{2(2j-1)}) \in L^{1}(\mathcal
Z,\tilde{\mu}_{t})$ and $$\sum_{j=2}^{r}\int_{\mathcal Z}|z|^{2j-1}
d\tilde{\mu}_{s}(z)\leq C_{r},$$ with a time independent constant  $C_{r}$.
Hence for all $\xi$ in $\mathcal Z$ and for all $t,t'$ in $\R$:
\begin{equation}
\label{lipschitzG}
|G(\xi,t')-G(\xi,t)| \leq 2\pi C_{r} M r |t-t'||\xi|.
\end{equation}
\begin{enumerate}
\item
Take now $g \in \mathcal S_{cyl}(\mathcal Z)$ based on $p \mathcal Z$
and the equality holds:
\begin{equation}
I_{g}(t)=\int_{\mathcal Z} g(z) d\tilde{\mu}_{t}(z)=\int_{p \mathcal Z}\mathcal F[g](\eta)G(\eta,t)dL_{p}(\eta).
\end{equation}
We shall establish the continuity of $I_{g}$ on $\R$. Indeed
\begin{itemize}
\item
$t \longrightarrow \mathcal F[g](\eta)G(\eta,t)$ is continue
owing to \eqref{lipschitzG}
\item
$\eta \longrightarrow \mathcal F[g](\eta)G(\eta,t)$ is bounded
by a $L_{p}(d\eta)$-integrable function thanks to
\eqref{eq.GG} and $\mathcal F[g] \in \mathcal S(p\mathcal Z)$.\\
\end{itemize}
\bigskip
Thus we have the continuity of $I_{g}$ for all $g \in \mathcal
S_{cyl}(\mathcal Z)$. Furthermore the uniform estimate
condition $$\forall \alpha \in \N, \; \int_{\mathcal Z}|z|^{2\alpha}
d\tilde{\mu_{t}}(z) \leq C_{\alpha},$$ with $C_{\alpha}$ time independent allow us to apply lemma 5.1.12-f) in
\cite{AGS} and to assert that the map $t \rightarrow \tilde{\mu}_{t}$
is weakly narrowly continuous.

\item
We integrate the expression \eqref{eq.mesureliouville} with respect to $\mathcal
F[g](\eta)L_{p}(dz)$:
\begin{equation*}
\forall t \in \R ,\, \forall g \in \mathcal S_{cyl}(\mathcal Z), \int_{\mathcal Z}g(z) d\tilde{\mu}_{t}(z)=\int_{\mathcal
  Z}g(z)d\tilde{\mu}_{0}(z)+i\int_{0}^{t}\int_{\mathcal Z}\{ Q_{s},g\} d\tilde{\mu}_{s}(z) ds.
\end{equation*}
Hence $I_{g}$ belongs to $C^{1}(\R)$ and satisfies:
\begin{equation*}
\partial_{t}I_{g}(t)=i\int_{\mathcal Z}\{Q_{t},g\}(z) d\tilde{\mu}_{t}(z).
\end{equation*}
Multiplying this expression by a function $\phi \in
C^{\infty}_{0}(\R)$ and integrating by parts lead to
$$\int_{\R}\partial_{t}I_{g}(t)\phi(t)dt=i\int_{\R \times \mathcal
  Z}\{Q_{t},g\}(z) d\tilde{\mu}_{t}(z)\phi(t)dt .$$
Integrating by parts gives
\begin{align*}
\int_{\R \times \mathcal Z}g(z)
d\tilde{\mu}_{t}(z)\phi'(t)dt+i\int_{\R \times \mathcal
  Z}\{Q_{t},g\}\phi(t) d\tilde{\mu}_{t}(z)dt=0,\\
\mbox{or} \int_{\R \times \mathcal Z}(\partial_{t}f(t,z)+i \{ Q_{t},f\}) d\tilde{\mu}_{t}(z)dt=0,
\end{align*}
with $f(t,z)=g(z)\phi(t)$.
\bigskip
\\
We conclude by using the density of $ C^{\infty}_{0}(\R)
\otimes^{alg} C^{\infty}_{0,cyl}(\mathcal Z)
$ in $C^{\infty}_{0,cyl}(\R \times \mathcal Z).$

\end{enumerate}
\end{proof}

\subsection{Convergence toward the mean field dynamics}

\begin{prop}
\label{pr.uniqueness}
Assume that the family of normal states
$(\varrho_{\varepsilon})_{\varepsilon \in (0,\bar{\varepsilon})}$ on
$\Gamma_{s}(\mathcal Z)$
fulfills the assumptions {\bf{(A1)}}-{\bf{(A2)}}, with the uniform control
\begin{equation*}
\forall \alpha \in \N, \, \exists C_{\alpha}>0 \,, \forall \varepsilon \in (0,\bar{\varepsilon}), \; \Tr[\varrho_{\varepsilon}\mathbf{N}^{\alpha}] \leq{C_{\alpha}},
\end{equation*}
and
\begin{equation*}
\mathcal{M}(\varrho_{\varepsilon}, \varepsilon \in (0,\bar{\varepsilon})) = \{ \mu_{0} \}.
\end{equation*}
Then for any time $t \in \R$ the family
$(\varrho_{\varepsilon}(t)=e^{-i\frac{t}{\varepsilon}H_{\varepsilon}}\varrho_{\varepsilon}e^{i\frac{t}{\varepsilon}H_{\varepsilon}})_{\varepsilon
  \in (0,\bar{\varepsilon})}$ admits a unique Wigner measure $\mu_{t}$
equal to $\Phi(t,0)_{*}\mu_{0}$, where $\Phi$ is the flow associated with the well defined Hartree
equation owing to Proposition \ref{pr.flow}. Moreover, the map
$t\mapsto \mu_{t}\in Prob_{2}(\mathcal Z)$ is  continuous with respect
to  the Wasserstein distance $W_{2}$.
\end{prop}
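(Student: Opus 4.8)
The plan is to combine the preceding three propositions into a uniqueness-and-identification argument, following the strategy of Ammari--Nier \cite{AmNi4} together with the measure transportation machinery of \cite{AGS}. The essential point is that the Liouville (continuity) equation established in Proposition \ref{prop.liouville} has a \emph{unique} weak solution in the class of time-dependent probability measures with uniformly bounded moments, and that this unique solution is precisely the push-forward of the initial measure by the classical flow. First I would take an arbitrary sequence $(\varepsilon_n)_n$ tending to $0$ and, by Proposition \ref{prop.existence}, extract a subsequence $(\varepsilon_{n_k})_k$ along which $\mathcal{M}(\tilde\varrho_{\varepsilon_{n_k}}(t), k\in\N)=\{\tilde\mu_t\}$ for all $t$, with the moment bounds \eqref{eq.kmasse} holding uniformly in $t$. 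By Proposition \ref{prop.liouville}, the family $(\tilde\mu_t)_{t\in\R}$ is weakly narrowly continuous and is a weak solution of $\partial_t\tilde\mu_t + i\{Q_t,\tilde\mu_t\}=0$, with $\tilde\mu_0=\mu_0$ since the initial data are unchanged by the interaction-picture conjugation at $t=0$.

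The next step is to transport this back to the original (Schr\"odinger-picture) measure. Since $\tilde\varrho_\varepsilon(t)=e^{i\frac{t}{\varepsilon}H_\varepsilon^0}\varrho_\varepsilon(t)e^{-i\frac{t}{\varepsilon}H_\varepsilon^0}$ and $e^{i\frac{t}{\varepsilon}H_\varepsilon^0}=\Gamma(e^{itA})$ acts at the Wigner level as the linear symplectomorphism $z\mapsto e^{itA}z$, I would argue that the Wigner measure of $\varrho_\varepsilon(t)$ is $\mu_t=(e^{-itA})_*\tilde\mu_t$. Applying the classical Liouville/continuity theory of \cite{AGS} to the vector field $v(t,z)=-ie^{itA}[\partial_{\bar z}Q](e^{-itA}z)$ of Proposition \ref{pr.flow} — whose well-posedness and the growth bound \eqref{eq.velocity} are already in hand — the unique weak solution $\tilde\mu_t$ of the continuity equation must coincide with $(\tilde\Phi(t,0))_*\mu_0$, where $\tilde\Phi$ is the flow of \eqref{eq.hartree2}; composing with $e^{-itA}$ gives $\mu_t=\Phi(t,0)_*\mu_0$ for the flow $\Phi$ of \eqref{eq.Cauchy}.

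The uniqueness of the Wigner measure of $\varrho_\varepsilon(t)$ then follows from a standard subsequence argument: the limit $\Phi(t,0)_*\mu_0$ is \emph{independent} of the extracted subsequence $(\varepsilon_{n_k})_k$, so every sequence $\varepsilon_n\to 0$ has a further subsequence producing the same Wigner measure, whence $\mathcal{M}(\varrho_\varepsilon(t),\varepsilon\in(0,\bar\varepsilon))=\{\Phi(t,0)_*\mu_0\}$. Finally, the $W_2$-continuity of $t\mapsto\mu_t$ follows from the weak narrow continuity of $\tilde\mu_t$ (Proposition \ref{prop.liouville}) upgraded by the uniform second-moment bound \eqref{eq.kmasse}, since on sets of bounded moment the weak narrow topology and the Wasserstein $W_2$ topology coincide by Proposition 7.1.5 in \cite{AGS}, together with the continuity of the linear map $z\mapsto e^{-itA}z$ in $t$ on $Prob_2(\mathcal Z)$.

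I expect the main obstacle to be the uniqueness of the weak solution to the infinite-dimensional continuity equation: establishing that the measure-valued solution of $\partial_t\tilde\mu_t+i\{Q_t,\tilde\mu_t\}=0$ is forced to be the push-forward by the flow requires the superposition/uniqueness results of \cite{AGS} in the Hilbert-space (weak-topology) setting, and checking that the velocity field $v(t,\cdot)$ satisfies the requisite integrability and regularity hypotheses along the measures $\tilde\mu_t$ — precisely where the compactness Assumption {\bf{(A2)}} and the moment bounds \eqref{eq.kmasse} do the work. The remaining identifications of push-forwards and the subsequence argument are then routine.
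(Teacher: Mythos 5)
Your overall strategy is the paper's: extract a subsequence via Proposition \ref{prop.existence}, feed the Liouville equation of Proposition \ref{prop.liouville} into the measure-transport machinery of \cite{AGS} with the velocity field $v(t,z)=-ie^{itA}[\partial_{\bar z}Q](e^{-itA}z)$ and the moment bounds \eqref{eq.kmasse}, identify $\tilde\mu_t=\tilde\Phi(t,0)_*\mu_0$, undo the interaction-picture conjugation, and conclude by independence of the extracted subsequence. All of that is correct and is exactly how the paper proceeds (via Propositions \ref{prop.C1} and \ref{prop.C4}).

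There is, however, one mis-justified step: you claim that the $W_2$-continuity of $t\mapsto\tilde\mu_t$ follows from weak narrow continuity plus the uniform moment bounds because ``on sets of bounded moment the weak narrow topology and the Wasserstein $W_2$ topology coincide by Proposition 7.1.5 in \cite{AGS}.'' This is false in infinite dimensions: with $(e_n)_n$ an orthonormal basis, $\delta_{e_n}\to\delta_0$ weakly narrowly with all moments equal to $1$, yet $W_2(\delta_{e_n},\delta_0)=1$. Proposition 7.1.5 of \cite{AGS} relates $W_2$-convergence to narrow convergence for the \emph{norm} topology on $\mathcal Z$, not the weak ($d_w$) topology, and the paper is careful to distinguish the two. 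The correct route — the one the paper takes — is Proposition \ref{prop.C1} (second part of Theorem 8.3.1 in \cite{AGS}): a weakly narrowly continuous solution of the continuity equation with $|v_t|_{L^2(\mathcal Z,\tilde\mu_t)}\in L^1([-T,T])$ is automatically absolutely continuous in $(Prob_2(\mathcal Z),W_2)$, with $W_2(\tilde\mu_t,\tilde\mu_{t'})\leq\int_t^{t'}|v_s|_{L^2(\mathcal Z,\tilde\mu_s)}\,ds$. This is not a cosmetic point: the $W_2$-continuity is a \emph{hypothesis} of the uniqueness/identification result (Proposition \ref{prop.C4}) that you invoke to get $\tilde\mu_t=\tilde\Phi(t,0)_*\mu_0$, so it must be established before that identification, not deduced afterwards. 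Once you replace your moment-bound argument by an appeal to Proposition \ref{prop.C1}, the proof closes exactly as in the paper.
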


\begin{proof}
Take $\tilde{\varrho_{\varepsilon}}(t)$ as in  \eqref{eq.deftilderho}
and consider  the Hartree equation \eqref{eq.hartree0} with the flow $\Phi(t,s)$ corresponding to:
\begin{equation}
i\partial_{t} z_{t} = \partial_{\bar z}h(z,\bar z)=Az_{t}+\partial_{\bar z}Q({z_{t}}),
\end{equation}
on $\mathcal Z$ and the flow $\tilde{\Phi}(t,s)$ associated with
\begin{equation*}
 \partial_{t}\tilde{z}_{t}=v(t,\tilde{z}_{t}) \quad \mbox{with} \quad v(t,z)=-ie^{itA}[\partial_{\bar z}Q](e^{-itA}z)\,.
\end{equation*}
Proposition \ref{pr.flow} provides the following estimate
\begin{equation*}
|v(t,z)|_{\mathcal Z} \leq{ M r \sum_{j=2}^{r}|z|_{\mathcal Z}^{2j-1}}\,,
\end{equation*}
with $M=\max_{j=2,...,r}\|\tilde{Q}_{j}\|.$
Recall that $(\tilde{\mu}_{t})$ are the Wigner measures defined for
all times and associated with a subsequence $(\tilde\rho_{\varepsilon_{n_{k}}})_{k
  \in \N}$, hence we obtain
\begin{align*}
|v(t,z)|_{L^{2}(\mathcal Z,\tilde{\mu}_{t})}  &= \sqrt{\int_{\mathcal Z}|v(t,z)|^{2}d\tilde{\mu}_{t}(z)} &\\
&\leq M r\sqrt{\sum_{j=2}^{r}\int_{\mathcal Z}|z|^{2(2j-1)}d\tilde{\mu}_{t}}(z) \in L^{1}([-T,T]).
\end{align*}
This holds since $$\forall j \in \N, \; \int_{\mathcal Z}|z|^{2j}d\tilde{\mu}_{t}(z)
\leq{C_{j}}\,,$$ with $C_{j}$ time independent. \\
Now, using Proposition \ref{prop.liouville} the measure
$\tilde{\mu}_{t}$ satisfies
\begin{equation*}
\partial_{t} \tilde {\mu}_{t} + i \{ Q_{t},\tilde{\mu}_{t} \} = \partial_{t}\tilde{\mu}_{t} + \nabla^{T}(v(t,z)\tilde{\mu}_{t})=0,
\end{equation*}
in the weak sense and the map $t \mapsto \tilde{\mu}_{t}\in Prob_{2}(\mathcal Z)$ is
weakly narrowly continuous. Moreover, the velocity field $v(t,.)$
satisfies the condition  $|v(t,z)|_{L^{2}(\mathcal  Z,\mu_{t})}$ belongs to $L^{1}([-T,T])$.
\\
Thus, $\tilde{\mu}_{t}$ verifies the conditions of Proposition
\ref{prop.C1} (see \cite{AmNi4} for more details) with $I=[-T,T]$ and
then $\tilde{\mu}_{t}$ is continuous with respect to the Wasserstein
distance $W_{2}$. So now the measures $\tilde{\mu}_{t}$ satisfy  all the hypotheses
of  Proposition \ref{prop.C4}, i.e.:
\begin{itemize}
\item
 $t \mapsto\tilde{\mu}_{t} \in Prob_{2}(\mathcal Z)$ is $W_{2}$-continuous.
\item
For all $T>0$, $|v(t,z)|_{L^{2}(\mathcal
  Z,\mu_{t})}$ belongs to $L^{1}([-T,T])$.
\item
$\tilde{\mu_{t}}$ is the weak solution to:
\begin{equation*}
\partial_{t}\tilde{\mu}_{t}+\nabla^{T}(v(t,z)\tilde{\mu}_{t})=0,
\end{equation*}
\end{itemize}
subsequently $\tilde{\mu}_{t}=\tilde{\phi}(t,0)_{*}\mu_{0}$ and
\begin{equation*}
\mathcal M(\tilde{\varrho}_{\varepsilon}(t), \varepsilon \in (0,\bar \varepsilon)) = \{ \tilde{\mu}_{t}\},
\end{equation*}
for any time $t \in \R.$
By noticing that
$$\varrho_{\varepsilon}(t)=e^{-it\frac{t}{\varepsilon}H^{0}_{\varepsilon}}\varrho_{\varepsilon}e^{it\frac{t}{\varepsilon}H^{0}_{\varepsilon}},$$
we get
\begin{equation*}
\mathcal M(\varrho_{\varepsilon}(t), \varepsilon \in (0,\bar \varepsilon)) = \{ \mu_{t}\},
\end{equation*}
which finish the proof of Theorem \ref{thm.main} for regular data.

\end{proof}
\subsection{Evolution of the Wigner measure for general data}
In this subsection, we shall prove  Theorem \ref{thm.main} for
general data. We used the same truncation scheme used in
\cite{AmNi4}. Hence consider a family
$(\varrho_{\varepsilon})_{\varepsilon \in (0,\bar{\varepsilon})}$
satisfying the assumption of  Theorem \ref{thm.main}, i.e.:
$$
\exists \delta>0\,, \exists C_{\delta}>0\,,  \forall \varepsilon\in
\mathcal{E}\,,\quad
\Tr[\varrho_{\varepsilon}\mathbf{N}^{\delta}] \leq C_{\delta}<\infty\,.
$$
There exists another family $(\varrho_{\varepsilon}^{(m)})_{m \in \N}$
such that $\Tr[\varrho_{\varepsilon}^{(m)}]=1$,
\begin{equation*}
\forall k \in \N\,, \exists C_{k}>0\,,  \forall \varepsilon\in
(0,\overline{\varepsilon})\,,\quad
\Tr[\varrho_{\varepsilon}^{(m)}\mathbf{N}^{k}] \leq C_{k}<\infty
\end{equation*}
and   
\begin{equation}
\label{eq.troncature}
\lim_{m \to +\infty}\sup_{\varepsilon \in (0,\bar{\varepsilon})}\|\varrho_{\varepsilon}-\varrho_{\varepsilon}^{(m)}\|_{\mathcal{L}^{1}}=0.
\end{equation}
Indeed by setting 
\begin{equation*}
\rho_{\varepsilon}^{(m)}=\frac{1}{\Tr[\chi_{m}(\mathbf{N})\rho_{\varepsilon}\chi_{m}(\mathbf{N})]}\chi_{m}(\mathbf{N})\rho_{\varepsilon}\chi_{m}(\mathbf{N}),
\end{equation*} 
with $\chi_{m}(n)=\chi(\frac{n}{m})$ and $0\leq \chi
\leq 1, \chi \in C_{0}^{\infty}(\R)$ and $\chi \equiv 1$ in a
neighborhood of 0, the result \eqref{eq.troncature} holds.

The family $(\varrho_{\varepsilon}^{(m)})_{m \in \N}$ is satisfying the
assumption of  Proposition \ref{pr.uniqueness} and then by extracting a
subsequence $(\varepsilon_{n_{k}})_{k \in \N}$ such that for all $t
\in \R$
$$\mathcal{M}(\varrho_{\varepsilon_{n_{k}}}^{(m)}(t),k \in \N)=\{
\phi(t,0)_{*}\mu_{0}^{(m)}\},$$
with $\phi(t,0)$ the flow of the Hartree equation
\eqref{eq.hartree0}, and $\mu_{0}^{(m)}$ the Wigner measure associated with $\varrho_{\varepsilon}^{(m)}$.
Hence by setting $\mu_{t} \in
\mathcal{M}(\varrho_{\varepsilon}(t),\varepsilon \in
(0,\bar{\varepsilon}))$, there exists a subsequence
$(\varepsilon_{l})_{l \in \N}$ such that
$$\mathcal{M}(\varrho_{\varepsilon_{l}}^{(m)}(t),l \in \N)=\{
\mu_{t}\}.$$
Then a computation of the total variation by the triangle inequality
gives
\begin{align*}
\int_{\mathcal{Z}}|\mu_{t}-\phi(t,0)_{*}\mu_{0}| &\leq
\int_{\mathcal{Z}}|\mu_{t}-\phi(t,0)_{*}\mu_{0}^{(m)}|+\int_{\mathcal{Z}}|\phi(t,0)_{*}\mu_{0}^{(m)}-\phi(t,0)_{*}\mu_{0}|\\
&\leq \int_{\mathcal{Z}}|\mu_{t}-\phi(t,0)_{*}\mu_{0}^{(m)}|+\int_{\mathcal{Z}}|\mu_{0}^{(m)}-\mu_{0}|.
\end{align*}
By taking the limit when $m \to +\infty$ we get
$\int_{\mathcal{Z}}|\mu_{t}-\phi(t,0)_{*}\mu_{0}|=0$, hence
\begin{equation*}
\mathcal{M}({\varrho_{\varepsilon}}(t), \varepsilon \in (0,\bar{\varepsilon}))=\{\phi(t,0)_{*}\mu_{0}\}.
\end{equation*}
\section{Examples}
In this section we will give some examples of states which do not satisfy
the (PI) condition introduced in the paper \cite{AmNi3}. So that the
results in \cite{AmNi2, AmNi3}  can not be applied on these examples
however they satisfy the assumptions of our Theorem \ref{thm.main}.  We recall
the (PI) assertion below.
\begin{dfn}
Let $(\varrho_{\varepsilon})_{\varepsilon \in (0,\bar{\varepsilon})}$
be a family of normal states such that
\begin{equation}
\label{eq.un}
\forall \alpha \in \N, \, \exists C_{\alpha}>0 \,, \forall \varepsilon \in (0,\bar{\varepsilon}), \; \Tr[\varrho_{\varepsilon}\mathbf{N}^{\alpha}] \leq{C_{\alpha}},
\end{equation}
and
\begin{equation*}
\mathcal{M}(\varrho_{\varepsilon}, \varepsilon \in (0,\bar{\varepsilon})) = \{ \mu \}.
\end{equation*}
Then we say that $(\varrho_{\varepsilon})_{\varepsilon \in
  (0,\bar{\varepsilon}) }$ satisfies the (PI) condition if:\\
\begin{equation}
\label{eq.pi}
\lim_{\varepsilon \to 0} \Tr[\varrho_{\varepsilon}\mathbf{N}^{k}] =
\int_{\mathcal Z}|z|^{2k} d\mu(z),\; \, \forall k \in \N\,.
\end{equation}
\end{dfn}

Let us consider two kinds of normal states on the Fock
space $\Gamma_{s}(\mathcal Z)$, namely the coherent and
Hermite states.
\begin{itemize}
\item
The coherent states are given by
\begin{equation}
\label{eq.coh}
\varrho_{\varepsilon}(f)=|E(f)\rangle \langle E(f)|=| W(\frac{\sqrt{2}}{i\varepsilon}f)\Omega
\rangle \; \langle W(\frac{\sqrt{2}}{i\varepsilon}f)\Omega |\,,
\end{equation}
where $f\in \mathcal Z$  and $\Omega$ is the vacuum vector of the Fock space.
\item
The Hermite states are given by
\begin{equation}
\label{def.hermite}
\varrho_{\varepsilon}(f)=|f^{\otimes k}\rangle \langle
f^{\otimes k}|\,, \quad k=[\frac 1 \varepsilon], \quad f   \in
\mathcal Z.
\end{equation}
\end{itemize}
For some coherent states or Hermite states the (PI) property was proved in $\cite{AmNi2}$ by a
simple computation
\begin{equation}
\lim_{\varepsilon \to 0} \Tr[\varrho_{\varepsilon}\mathbf{N}^{k}]=|f|^{2k}=\delta_{f}(|z|^{2k}).
\end{equation}
Thus there is no loss of compactness for those states when $f$ does
not depend on $\varepsilon$. However, for our examples we will consider
coherent  and Hermite states where the  vector $f$ is
$\varepsilon$-dependent. More precisely, let
$(f_{\varepsilon})_{\varepsilon \in (0,\bar{\varepsilon})}$ be a
family of vectors in $\mathcal Z$ such that : \\

\begin{hyp}
\label{assweak}
 $|f_{\varepsilon}|=1$ and  $f_{\varepsilon}$ weakly convergent to
 $f_{0}$ but not strongly, (i.e.  $|f_{0}|<1$).
\end{hyp}

Then we shall prove that the family of normal states
$\rho_{\varepsilon}(f_{\varepsilon})$ given by \eqref{eq.coh} or \eqref{def.hermite} provides
a good example of states which does not satisfy the (PI) property but
only the uniform condition \eqref{eq.un}.

We recall two useful Propositions from \cite{AmNi1}. Consider two families of vectors
$(u_{\varepsilon}) _{\varepsilon \in
  (0,\bar{\varepsilon})}$ and $(v_{\varepsilon}) _{\varepsilon \in
  (0,\bar{\varepsilon})}$ in $\Gamma_{s}(\mathcal Z)$. With
the family of trace class operators, 
 $$\varrho_{\varepsilon}^{(u_{\varepsilon},v_{\varepsilon})}=|u_{\varepsilon}\rangle
\langle v_{\varepsilon}|,$$
complex-valued Wigner measures can be defined by a simple linear
decomposition, specified in \cite{AmNi1}-Proposition 6.4. Recall that for a family of
Hermite states $u_{\varepsilon}=u^{{\otimes [\frac{1}{\varepsilon}]}}$, $$\mathcal{M}(\varrho_{\varepsilon}^{(u_{\varepsilon},u_{\varepsilon})})=\{\delta^{S^{{1}}}_{u}\}=\{\frac{1}{2\pi}\int_{0}^{2\pi}\delta_{e^{i\theta}u}d\theta\}.$$
\begin{prop}
\label{pr.end2}
Assume that the family  $(u_{\varepsilon})_{\varepsilon \in
  (0,\bar{\varepsilon})}$ and $(v_{\varepsilon})_{\varepsilon \in
  (0,\bar{\varepsilon})}$ of vectors in the Fock space satisfy the uniform estimates
\begin{equation*}
|(1+\mathbf{N})^{\frac{\delta}{2}}u_{\varepsilon}|+|(1+\mathbf{N})^{\frac{\delta}{2}}v_{\varepsilon}|
\leq{C},\; |u_{\varepsilon}|=|v_{\varepsilon}|=1\,,
\end{equation*}
for some $\delta>0$ and $C>0$. If additionally any $\mu \in \mathcal{M}(\varrho_{\varepsilon}^{(u_{\varepsilon},u_{\varepsilon})})$
and any $\nu \in \mathcal{M}(\varrho_{\varepsilon}^{(v_{\varepsilon},v_{\varepsilon})})$ are mutually
orthogonal, then $$\mathcal{M}(\varrho_{\varepsilon}^{(u_{\varepsilon},v_{\varepsilon})},\varepsilon \in
(0,\bar{\varepsilon}))=\{ 0 \}.$$ 
This is equivalent to 
\begin{equation*}
\lim_{\varepsilon \to 0} \langle
u_{\varepsilon},b^{Weyl}v_{\varepsilon}\rangle=0,
\end{equation*}
for any $b \in \mathcal{S}_{cyl}(\mathcal{Z}).$
\end{prop}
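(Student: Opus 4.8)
The plan is to work with the complex-valued Wigner measure $\gamma$ of the off-diagonal state $\varrho_\varepsilon^{(u_\varepsilon,v_\varepsilon)}=|u_\varepsilon\rangle\langle v_\varepsilon|$ and to show that, in total variation, it is dominated by the geometric mean of the two diagonal Wigner measures $\mu$ and $\nu$; mutual singularity of $\mu$ and $\nu$ then forces $\gamma=0$. First I would record that the stated equivalence is immediate from Definition \ref{de.wigmeas}: taking $b(z)=e^{2i\pi\Re\langle\xi,z\rangle}$ turns $\langle u_\varepsilon,b^{Weyl}v_\varepsilon\rangle$ into $\Tr[\varrho_\varepsilon^{(u_\varepsilon,v_\varepsilon)}W(\sqrt2\pi\xi)]$, and the family $\{e^{2i\pi\Re\langle\xi,\cdot\rangle}\}_{\xi}$ determines the (complex) measure. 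By Theorem \ref{th.Wigdef} together with \cite{AmNi1}-Proposition 6.4, along any sequence $\varepsilon_n\to0$ I may extract a subsequence on which $|u_\varepsilon\rangle\langle u_\varepsilon|$, $|v_\varepsilon\rangle\langle v_\varepsilon|$ and $|u_\varepsilon\rangle\langle v_\varepsilon|$ admit, respectively, the Wigner measures $\mu$, $\nu$ (probability measures) and a complex measure $\gamma$. The uniform bound $|(1+\mathbf N)^{\delta/2}u_\varepsilon|+|(1+\mathbf N)^{\delta/2}v_\varepsilon|\le C$ guarantees tightness and prevents mass from escaping to infinity, so these limits are genuine bounded measures. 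It thus suffices to prove $\gamma=0$ for every such extraction.

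Next I would replace Weyl quantization by anti-Wick (Husimi) quantization $b^{\mathrm{AW}}$. For a cylindrical symbol $b$ based on a finite-dimensional $p\mathcal Z$, the operator $b^{\mathrm{AW}}$ equals the Weyl quantization of a Gaussian regularization of $b$ at scale $\varepsilon$, so $b^{Weyl}-b^{\mathrm{AW}}$ vanishes in the limit (finite-dimensional calculus applied mode by mode, the remaining modes being controlled by the polynomial number estimate). Hence $\lim_\varepsilon\langle u_\varepsilon,b^{Weyl}v_\varepsilon\rangle=\lim_\varepsilon\langle u_\varepsilon,b^{\mathrm{AW}}v_\varepsilon\rangle=\int b\,d\gamma$. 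The point of the anti-Wick form is that its expectation factorizes through the coherent-state transform $z\mapsto\langle E(z),\cdot\rangle$: setting $h^\varepsilon_{uu}(z)=|\langle E(z),u_\varepsilon\rangle|^2$, $h^\varepsilon_{vv}(z)=|\langle E(z),v_\varepsilon\rangle|^2$ and $h^\varepsilon_{uv}(z)=\langle E(z),u_\varepsilon\rangle\overline{\langle E(z),v_\varepsilon\rangle}$, one has the pointwise identity $|h^\varepsilon_{uv}(z)|=\sqrt{h^\varepsilon_{uu}(z)}\,\sqrt{h^\varepsilon_{vv}(z)}$.

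With this in hand, for any nonnegative bounded continuous cylindrical $\chi$ I apply Cauchy--Schwarz to the finite-$\varepsilon$ Husimi integrals,
\[
\Big|\int \chi(z)\,h^\varepsilon_{uv}(z)\,L_p(dz)\Big|\le\Big(\int \chi\,h^\varepsilon_{uu}\,L_p(dz)\Big)^{1/2}\Big(\int \chi\,h^\varepsilon_{vv}\,L_p(dz)\Big)^{1/2},
\]
and pass to the limit (the Husimi measures converging weakly to $\mu$, $\nu$, $\gamma$) to get
\[
\Big|\int \chi\,d\gamma\Big|\le\Big(\int \chi\,d\mu\Big)^{1/2}\Big(\int \chi\,d\nu\Big)^{1/2}.
\]
A monotone class argument extends this to all Borel $\chi\ge0$, so the total variation obeys $|\gamma|\le\sqrt{\mu}\,\sqrt{\nu}$ as measures. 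Since $\mu\perp\nu$ there is a Borel set $\Delta$ with $\mu(\Delta^{c})=\nu(\Delta)=0$; testing the bound on $\Delta$ and on $\Delta^{c}$ yields $|\gamma|(\mathcal Z)=0$, i.e. $\gamma=0$. As the extraction was arbitrary, $\mathcal M(\varrho_\varepsilon^{(u_\varepsilon,v_\varepsilon)},\varepsilon\in(0,\bar\varepsilon))=\{0\}$.

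The hard part is the infinite-dimensional bookkeeping. One must justify the Weyl/anti-Wick comparison and the weak convergence of the Husimi measures uniformly in the number of active modes, and must promote the mutual singularity of the truly infinite-dimensional $\mu,\nu$ to the cylindrical test functions actually available. Both issues are precisely where the uniform estimate $|(1+\mathbf N)^{\delta/2}u_\varepsilon|+|(1+\mathbf N)^{\delta/2}v_\varepsilon|\le C$ and the measure-theoretic machinery of \cite{AmNi1} intervene: with only this weak, $(PI)$-free control, one must ensure that no mass is lost when comparing the norm and $d_w$ narrow topologies, and the uniform moment bound is exactly what rules this out.
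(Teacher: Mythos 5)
The paper does not actually prove Proposition \ref{pr.end2}: it is explicitly recalled from \cite{AmNi1} (``We recall two useful Propositions from \cite{AmNi1}''), so there is no internal proof to compare against. Judged on its own, your argument has the right skeleton and is in fact the standard route to this kind of statement: decompose $|u_\varepsilon\rangle\langle v_\varepsilon|$ by polarization to extract a complex Wigner measure $\gamma$, dominate $|\gamma|$ by the geometric mean $\sqrt{\mu}\sqrt{\nu}$ via the positivity of the anti-Wick (Husimi) calculus and Cauchy--Schwarz, and conclude from $\mu\perp\nu$. Two technical points deserve correction. First, in the cylindrical anti-Wick calculus the operator attached to a symbol based on $p\mathcal Z$ is built from $|E(\zeta)\rangle\langle E(\zeta)|\otimes \mathrm{Id}_{\Gamma_s((1-p)\mathcal Z)}$, $\zeta\in p\mathcal Z$, not from a rank-one coherent projector on the whole Fock space; hence your pointwise identity $|h^\varepsilon_{uv}|=\sqrt{h^\varepsilon_{uu}}\sqrt{h^\varepsilon_{vv}}$ must be replaced by the inequality $|h^\varepsilon_{uv}(\zeta)|\leq\sqrt{h^\varepsilon_{uu}(\zeta)}\sqrt{h^\varepsilon_{vv}(\zeta)}$, coming from Cauchy--Schwarz in the factor $\Gamma_s((1-p)\mathcal Z)$ --- which is all you need. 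Second, the Husimi measures on $p\mathcal Z$ converge to the finite-dimensional marginals $(p)_*\mu$, $(p)_*\nu$, $(p)_*\gamma$, so the geometric-mean bound is first obtained for marginals and must then be lifted to the measures on $\mathcal Z$ along an exhausting sequence of projections (using that the Borel $\sigma$-algebra of the separable space $\mathcal Z$ is generated by cylinder sets, together with the uniform $\mathbf N^{\delta}$ moment bound which prevents loss of mass); only then does testing against the separating set $\Delta$ of $\mu\perp\nu$ give $\gamma=0$. You flag this bookkeeping as the hard part without carrying it out, so the write-up is an essentially correct outline rather than a complete proof, but the approach itself is sound and consistent with the cited reference.
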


\begin{prop}
\label{pr.end}
Assume the same assumptions as in the Proposition above with the additional
condition $\mathcal{M}(\varrho_{\varepsilon}^{(u_{\varepsilon},u_{\varepsilon})})=\{ \mu\}$ and
$\mathcal{M}(\varrho_{\varepsilon}^{(v_{\varepsilon},v_{\varepsilon})})=\{ \nu \}$. Then the
family of trace class operators
$(\varrho_{\varepsilon}^{(u_{\varepsilon}+v_{\varepsilon},u_{\varepsilon}+v_{\varepsilon})})_{\varepsilon
\in (0,\bar{\varepsilon})}$ satisfies
$$\mathcal{M}(\varrho_{\varepsilon}^{(u_{\varepsilon}+v_{\varepsilon},u_{\varepsilon}+v_{\varepsilon})})=
\{\mu+\nu\}.$$
\end{prop}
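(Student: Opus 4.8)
The plan is to exploit the bilinearity of the rank-one construction. Expanding
$$\varrho_{\varepsilon}^{(u_{\varepsilon}+v_{\varepsilon},u_{\varepsilon}+v_{\varepsilon})}=\varrho_{\varepsilon}^{(u_{\varepsilon},u_{\varepsilon})}+\varrho_{\varepsilon}^{(v_{\varepsilon},v_{\varepsilon})}+\varrho_{\varepsilon}^{(u_{\varepsilon},v_{\varepsilon})}+\varrho_{\varepsilon}^{(v_{\varepsilon},u_{\varepsilon})},$$
the statement reduces, after testing against the Weyl operator $W(\sqrt{2}\pi\xi)$ and using the linearity of the trace, to controlling the four summands separately.

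First I would verify that the family admits a Wigner measure at all. The triangle inequality gives $|(1+\mathbf{N})^{\delta/2}(u_{\varepsilon}+v_{\varepsilon})|\le C$, whence $\Tr[\varrho_{\varepsilon}^{(u_{\varepsilon}+v_{\varepsilon},u_{\varepsilon}+v_{\varepsilon})}\mathbf{N}^{\delta}]\le C^{2}$ uniformly in $\varepsilon$; this is exactly the bound required by Definition \ref{de.wigmeas}, so $\mathcal{M}(\varrho_{\varepsilon}^{(u_{\varepsilon}+v_{\varepsilon},u_{\varepsilon}+v_{\varepsilon})})$ is non-empty. For the two diagonal terms the hypotheses $\mathcal{M}(\varrho_{\varepsilon}^{(u_{\varepsilon},u_{\varepsilon})})=\{\mu\}$ and $\mathcal{M}(\varrho_{\varepsilon}^{(v_{\varepsilon},v_{\varepsilon})})=\{\nu\}$ force, by uniqueness, the full limits $\Tr[\varrho_{\varepsilon}^{(u_{\varepsilon},u_{\varepsilon})}W(\sqrt{2}\pi\xi)]\to\int_{\mathcal Z}e^{2i\pi\Real\langle\xi,z\rangle}d\mu(z)$ and its analogue for $\nu$. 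For the two off-diagonal terms I would invoke Proposition \ref{pr.end2}: the mutual orthogonality of $\mu$ and $\nu$ inherited from the assumptions yields $\mathcal{M}(\varrho_{\varepsilon}^{(u_{\varepsilon},v_{\varepsilon})})=\{0\}$, and the same proposition applied to the pair $(v_{\varepsilon},u_{\varepsilon})$ gives $\mathcal{M}(\varrho_{\varepsilon}^{(v_{\varepsilon},u_{\varepsilon})})=\{0\}$; hence both $\Tr[\varrho_{\varepsilon}^{(u_{\varepsilon},v_{\varepsilon})}W(\sqrt{2}\pi\xi)]$ and $\Tr[\varrho_{\varepsilon}^{(v_{\varepsilon},u_{\varepsilon})}W(\sqrt{2}\pi\xi)]$ tend to $0$. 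Summing the four limits gives, for every $\xi\in\mathcal Z$,
$$\lim_{\varepsilon\to 0}\Tr[\varrho_{\varepsilon}^{(u_{\varepsilon}+v_{\varepsilon},u_{\varepsilon}+v_{\varepsilon})}W(\sqrt{2}\pi\xi)]=\int_{\mathcal Z}e^{2i\pi\Real\langle\xi,z\rangle}d(\mu+\nu)(z),$$
and since in this framework the characteristic function determines the Wigner measure uniquely, every element of $\mathcal{M}(\varrho_{\varepsilon}^{(u_{\varepsilon}+v_{\varepsilon},u_{\varepsilon}+v_{\varepsilon})})$ must coincide with $\mu+\nu$.

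The genuine analytic content, namely the cancellation of the interference terms, is already packaged in Proposition \ref{pr.end2}, so the remainder is essentially linearity and bookkeeping; I expect the only delicate point to be the reconciliation of the two testing classes. Proposition \ref{pr.end2} is phrased with $b^{Weyl}$ for $b\in\mathcal{S}_{cyl}(\mathcal Z)$, whereas Definition \ref{de.wigmeas} tests against $W(\sqrt{2}\pi\xi)$; I would bridge this by using that $\mathcal{M}(\varrho_{\varepsilon}^{(u_{\varepsilon},v_{\varepsilon})})=\{0\}$ is precisely the statement that the limiting characteristic function vanishes identically. As a consistency check, evaluating at $\xi=0$ yields $\langle u_{\varepsilon},v_{\varepsilon}\rangle\to 0$, so that $|u_{\varepsilon}+v_{\varepsilon}|^{2}\to 2$, matching the total mass $(\mu+\nu)(\mathcal Z)=2$ and confirming that no mass escapes in the limit.
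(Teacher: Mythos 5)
Your argument is correct, and in fact the paper offers no proof of its own here: Propositions \ref{pr.end2} and \ref{pr.end} are simply recalled from \cite{AmNi1}, so there is nothing internal to compare against. Your sesquilinear expansion of $|u_{\varepsilon}+v_{\varepsilon}\rangle\langle u_{\varepsilon}+v_{\varepsilon}|$ into two diagonal and two cross terms, with the diagonal limits supplied by the purity hypotheses and the cross terms killed by Proposition \ref{pr.end2}, is exactly the intended mechanism. You also correctly identify the two points that need care: the phrase ``the same assumptions as in the Proposition above'' must be read as including the mutual orthogonality of $\mu$ and $\nu$ (without it the statement fails, e.g.\ for $v_{\varepsilon}=u_{\varepsilon}$), and the conclusion $\mathcal{M}(\varrho_{\varepsilon}^{(u_{\varepsilon},v_{\varepsilon})})=\{0\}$ must be transferred from the cylindrical Weyl symbols of Proposition \ref{pr.end2} to the characteristic functions $\xi\mapsto\Tr[\,\cdot\,W(\sqrt{2}\pi\xi)]$ used in Definition \ref{de.wigmeas}; both are handled adequately, the latter being precisely the equivalence stated in \cite{AmNi1} for families with uniform $\mathbf{N}^{\delta}$ bounds. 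The only cosmetic caveat is that Theorem \ref{th.Wigdef} is stated for normalized states, whereas $\Tr[\varrho_{\varepsilon}^{(u_{\varepsilon}+v_{\varepsilon},u_{\varepsilon}+v_{\varepsilon})}]=|u_{\varepsilon}+v_{\varepsilon}|^{2}\to 2$; your $\xi=0$ consistency check already shows this is harmless, since one may either renormalize or note that the pointwise convergence of the characteristic functions to $\mathcal{F}^{-1}(\mu+\nu)$ identifies every limit point directly.
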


\begin{cor}
Let $\varrho_{\varepsilon}(f_{\varepsilon})$ be either the family of
coherent states \eqref{eq.coh} or Hermite states \eqref{def.hermite} with
$f_{\varepsilon}$ satisfying {\bf(A3)}. Then the (PI)
condition fails for
$\varrho_{\varepsilon}(f_{\varepsilon})$. However, we get
for all $t>0$, $\mathcal M(\varrho_{\varepsilon}(f_{\varepsilon}) (t),\varepsilon \in (0,\bar
\varepsilon))=\{\mu_{t}\}$
with
$\mu_{t}=\delta_{\Phi(t,0)f_{0}}$ in the case of coherent states and
$\mu_{t}=\delta^{S^{1}}_{\Phi(t,0)f_{0}}$  in the case of Hermite
states, where $\Phi(t,0)$ is the flow associated with
\begin{equation*}
      i  \partial_{t}z_{t}=Az_{t}+\partial_{\bar z}Q(z_{t}).
\end{equation*}
Furthermore, set $u_{\varepsilon}=u^{{\otimes
    [\frac{1}{\varepsilon}]}}$, $u\in \mathcal Z$, then for all $t \in \R$
\begin{equation*}
\mathcal{M}(\varrho_{\varepsilon}^{(u_{\varepsilon}+E(f_{\varepsilon}),
u_{\varepsilon}+E(f_{\varepsilon}))},
\varepsilon \in (0,\bar{\varepsilon}))=\{ \delta_{u_{t}}^{S^{1}}+\delta_{f_{t}} \}\,,
\end{equation*}
with $u_{t}=\Phi(t,0)u$ and $f_{t}=\Phi(t,0)f_{0}$.
\end{cor}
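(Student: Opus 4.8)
The plan is to treat the three assertions in turn: the first two reduce to an explicit computation of characteristic functions followed by an application of Theorem~\ref{thm.main}, while the third is handled by the bilinear machinery of Propositions~\ref{pr.end2} and~\ref{pr.end}.

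First I would identify the initial Wigner measures. For the coherent states, using the Weyl relations together with $\langle \Omega, W(\eta)\Omega\rangle = e^{-\frac{\varepsilon}{4}|\eta|^{2}}$, one computes explicitly
\[
\Tr[\varrho_{\varepsilon}(f_{\varepsilon})W(\sqrt{2}\pi\xi)] = e^{2i\pi\Real\langle\xi,f_{\varepsilon}\rangle}\,e^{-\frac{\varepsilon\pi^{2}}{2}|\xi|^{2}}.
\]
Since {\bf(A3)} gives $\langle\xi,f_{\varepsilon}\rangle\to\langle\xi,f_{0}\rangle$ for every fixed $\xi$ while the Gaussian factor tends to $1$, the limit is $e^{2i\pi\Real\langle\xi,f_{0}\rangle}$, whence $\mathcal{M}(\varrho_{\varepsilon}(f_{\varepsilon}))=\{\delta_{f_{0}}\}$. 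For the Hermite states the analogous computation (or the formula recalled before Proposition~\ref{pr.end2}) yields $\mathcal{M}(\varrho_{\varepsilon}(f_{\varepsilon}))=\{\delta^{S^{1}}_{f_{0}}\}$; the decisive point is again that only the weak limit $f_{0}$ survives in the characteristic function. The failure of the (PI) condition is then immediate: using $|f_{\varepsilon}|=1$ one checks $\lim_{\varepsilon\to0}\Tr[\varrho_{\varepsilon}(f_{\varepsilon})\mathbf{N}]=1$ for both families (recall $\mathbf{N}=\mathrm{d}\Gamma(\Id)$ acts as $\varepsilon n$ on the $n$-sector), whereas $\int_{\mathcal Z}|z|^{2}\,d\mu(z)=|f_{0}|^{2}<1$; hence \eqref{eq.pi} already breaks down for $k=1$.

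Next I would obtain the propagation. Both families satisfy the uniform bound \eqref{eq.di} (in fact all number moments are bounded) and possess a single Wigner measure, so Theorem~\ref{thm.main} applies and gives $\mathcal{M}(\varrho_{\varepsilon}(f_{\varepsilon})(t))=\{\Phi(t,0)_{*}\mu_{0}\}$, which is $\delta_{\Phi(t,0)f_{0}}$ in the coherent case. In the Hermite case I must compute $\Phi(t,0)_{*}\delta^{S^{1}}_{f_{0}}$, and here I would invoke the gauge invariance of the Hartree flow: since each $Q_{\ell}(z)=\langle z^{\otimes\ell},\tilde{Q}_{\ell}z^{\otimes\ell}\rangle$ satisfies $Q_{\ell}(e^{i\theta}z)=Q_{\ell}(z)$, one gets $\partial_{\bar z}Q(e^{i\theta}z)=e^{i\theta}\partial_{\bar z}Q(z)$, so $e^{i\theta}z_{t}$ solves \eqref{eq.hartree2} with datum $e^{i\theta}z_{0}$ and therefore $\Phi(t,0)(e^{i\theta}z)=e^{i\theta}\Phi(t,0)(z)$. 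Pushing the uniform average over $\theta$ through the flow then gives $\Phi(t,0)_{*}\delta^{S^{1}}_{f_{0}}=\delta^{S^{1}}_{\Phi(t,0)f_{0}}$, as claimed.

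Finally, for the superposition state I would work at the level of vectors rather than invoke Theorem~\ref{thm.main} directly, since $u_{\varepsilon}+E(f_{\varepsilon})$ carries a total mass tending to $2$ and does not define a probability measure. Writing $(\cdot)(t)=e^{-i\frac{t}{\varepsilon}H_{\varepsilon}}(\cdot)$ and expanding, for $b\in\mathcal{S}_{cyl}(\mathcal Z)$,
\[
\langle (u_{\varepsilon}+E(f_{\varepsilon}))(t),\,b^{Weyl}\,(u_{\varepsilon}+E(f_{\varepsilon}))(t)\rangle
\]
into its four bilinear pieces, the two diagonal pieces converge to $\int b\,d\delta^{S^{1}}_{u_{t}}$ and $\int b\,d\delta_{f_{t}}$ by the propagation just established (applied to the Hermite family $u^{\otimes[1/\varepsilon]}$ and to the coherent family), with $u_{t}=\Phi(t,0)u$ and $f_{t}=\Phi(t,0)f_{0}$. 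The cross pieces are controlled by Proposition~\ref{pr.end2} for the evolved vectors, so Proposition~\ref{pr.end} directly packages the result into $\delta^{S^{1}}_{u_{t}}+\delta_{f_{t}}$. The main obstacle is precisely this last step: the propagation of the Wigner measure of the superposition has to be reconstructed from that of the two pure pieces plus a quantitative vanishing of the interference terms along the flow. The clean resolution is that the two hypotheses of Proposition~\ref{pr.end2} survive the dynamics, namely the number-moment bounds are conserved because $[\mathbf{N},H_{\varepsilon}]=0$ makes $e^{-i\frac{t}{\varepsilon}H_{\varepsilon}}$ commute with $(1+\mathbf{N})^{\delta/2}$, and the mutual singularity holds automatically for all $t$ since $\delta^{S^{1}}_{u_{t}}$ is non-atomic while $\delta_{f_{t}}$ is a point mass, irrespective of whether $f_{t}$ meets the orbit $\{e^{i\theta}u_{t}\}$.
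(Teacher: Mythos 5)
Your proposal is correct, and for the identification of the initial measures and the failure of (PI) it coincides with the paper's proof: the same explicit characteristic-function computation for coherent states, the same Wick-symbol limit for Hermite states, and the same $k=1$ moment comparison $\lim_{\varepsilon\to0}\Tr[\varrho_{\varepsilon}(f_{\varepsilon})\mathbf{N}]=1>|f_{0}|^{2}$. Where you diverge is in the last assertion. The paper simply declares that the superposition family "satisfies the assumptions of Theorem \ref{thm.main}" and pushes its initial measure $\delta^{S^{1}}_{u}+\delta_{f_{0}}$ through the flow, glossing over the fact that $|u_{\varepsilon}+E(f_{\varepsilon})\rangle\langle u_{\varepsilon}+E(f_{\varepsilon})|$ has trace tending to $2$ and so is not literally a normal state in the sense of the theorem (one must normalize, or extend the theorem by linearity). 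You instead work at the vector level at time $t$: you expand the evolved quadratic form into four bilinear pieces, propagate the two diagonal pieces via Theorem \ref{thm.main} applied separately to the Hermite and coherent families, and kill the cross terms by re-applying Proposition \ref{pr.end2} to the evolved vectors, checking that its hypotheses survive the dynamics ($[\mathbf{N},H_{\varepsilon}]=0$ for the moment bounds, non-atomic versus atomic for the mutual singularity). This is more work but it is the honest justification of the step the paper elides. You also supply the gauge-invariance identity $\Phi(t,0)(e^{i\theta}z)=e^{i\theta}\Phi(t,0)(z)$ needed to identify $\Phi(t,0)_{*}\delta^{S^{1}}_{f_{0}}$ with $\delta^{S^{1}}_{\Phi(t,0)f_{0}}$, which the paper leaves implicit. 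The only caveat, shared with the paper, is the degenerate case $u=0$ or $f_{t}\in\{e^{i\theta}u_{t}\}$ with $u=0$, where mutual singularity could fail; this is harmless but worth a parenthetical remark.
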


\begin{proof}
The proof splits on several steps.
\begin{itemize}
\item {\it Identification of the Wigner measure:}\\
\underline{For coherent states}:\\
 In order to identify the Wigner measure associated with
  $\varrho_{\varepsilon}(f_{\varepsilon})$, at time $t=0$, use
  the well known formula
\begin{equation*}
\Tr(\varrho_{\varepsilon}(f_{\varepsilon})W(\sqrt{2}\pi\xi))=e^{2i\pi\Re\langle\xi,
f_{\varepsilon}\rangle}e^{-\varepsilon\frac{|\sqrt{2}\pi\xi|^{2}}{4}}\,,
\end{equation*}
here the right hand side converges to $e^{2i\pi\Re\langle\xi,f_0\rangle}=\mathcal
F^{-1}(\delta_{f_{0}})$ when $\varepsilon$ goes to $0$.
Therefore $\mathcal M(\varrho_{\varepsilon}(f_{\varepsilon}),\varepsilon \in (0,\bar
\varepsilon))=\{\delta_{f_{0}}\}$.\\

\underline{For Hermite states}:\\
A simple computation yields for any $b\in
\mathcal{P}^{\infty}_{p,q}(\mathcal Z)$,
\begin{equation*}
\lim_{\varepsilon \to 0}\Tr[\varrho_{\varepsilon}b^{Wick}]= \lim_{\varepsilon \to 0}\langle
f_{\varepsilon}^{\otimes n},b^{Wick}f_{\varepsilon}^{\otimes
  n}\rangle = \langle f_{0}^{\otimes
  q},\tilde{b}f_{0}^{\otimes p}\rangle=\int_{\mathcal Z}b(z)d\delta^{S^1}_{f_{0}}(z).
\end{equation*}
Then by applying Proposition 6.15 in \cite{AmNi1}, one proves that $\mathcal{M}(\varrho_{\varepsilon}(f_\varepsilon),\varepsilon \in
(0,\bar{\varepsilon}))=\{ \delta^{S^1}_{f_{0}} \}$.

\underline{For the superposition of orthogonal states}:\\
Recall
that $\varrho_{\varepsilon}^{(u+E(f_{\varepsilon}),u+E(f_{\varepsilon}))}=|u+E(f_{\varepsilon})\rangle
\langle u+E(f_{\varepsilon})|$ and according to Proposition
\ref{pr.end} and \ref{pr.end2}
\begin{align*}
\mathcal{M}(\varrho_{\varepsilon}^{(u_{\varepsilon}+E(f_{\varepsilon}),u_{\varepsilon}+E(f_{\varepsilon}))},
\varepsilon \in (0,\bar{\varepsilon}))&=\mathcal{M}(\varrho_{\varepsilon}^{(u_{\varepsilon},u_{\varepsilon})})+\mathcal{M}(\varrho_{\varepsilon}^{(E(f_{\varepsilon}),E(f_{\varepsilon}))})+\{0\}&\\
&=\{\delta_{u}^{S^{1}}+\delta_{f_{0}}\}.&
\end{align*}

\item {\it Uniform estimates:}\\
Let  $k \in \mathbb N$, the following uniform estimate holds\\
\underline{For the coherent states}:\\
\begin{equation*}
\Tr(\varrho_{\varepsilon}(f_{\varepsilon})\mathbf{N}^{k})=\langle \Omega,
W^{*}(\frac{\sqrt{2}}{i\varepsilon}f_{\varepsilon})\mathbf{N}^{k}W(\frac{\sqrt{2}}
{i\varepsilon}f_{\varepsilon})\Omega
\rangle \leq C_{k}|(\mathbf{N}+1)^{k/2}\Omega|^{2} \leq C_{k}.
\end{equation*}\\
\underline{For Hermite states}:\\
In this case $\varrho_{\varepsilon}(f_{\varepsilon})=|f_{\varepsilon}^{\otimes N}\rangle \langle f_{\varepsilon}^{\otimes N}|$
with $N=[\frac{1}{\varepsilon} ]$ \; \text{is the number of
  particles.}\\ Notice that for all $k \in \N$
\begin{equation}
\label{eq.good}
\Tr[\varrho_{\varepsilon}\mathbf{N}^{k}]=(\varepsilon
n)^{k}|f_{\varepsilon}|^{2}=(\varepsilon n)^{k}.
\end{equation}
\underline{For the superposition of orthogonal states}:\\
In this case
$\varrho_{\varepsilon}^{(u_{\varepsilon}+E(f_{\varepsilon}),u_{\varepsilon}+E(f_{\varepsilon}))}=|u_{\varepsilon}+E(f_{\varepsilon})\rangle
\langle u_{\varepsilon}+E(f_{\varepsilon})|$, 
\begin{align*}
\forall k \in \N,\;\Tr[(|u_{\varepsilon}+E(f_{\varepsilon})\rangle
\langle u_{\varepsilon}+E(f_{\varepsilon})|)\mathbf{N}^{k}] \leq C_{k}.
\end{align*}

\item {\it The condition (PI) fails:}\\
\underline{For coherent state}s:\\
 A simple computation of
  $\Tr(\varrho_{\varepsilon}(f_{\varepsilon})\mathbf{N}^{{k}})$ when $k=1$ gives
  the following equality
$$\Tr(\varrho_{\varepsilon}(f_{\varepsilon})\mathbf{N})=\langle E(f_{\varepsilon}),(|z|^{2})^{Wick}
E(f_{\varepsilon})\rangle=|f_{\varepsilon}|_{\mathcal Z}^{2}=1,$$
and
$$\int_{\mathcal Z}|z|^{2}d\delta_{f_{0}}(z)=|f_{0}|_{\mathcal Z}^{2}\,.$$
But $|f_{\varepsilon}|_{\mathcal Z}^{2}$ does not converge to
$|f_0|_{\mathcal Z}^{2}$ because $f_{\varepsilon}$ does not
converge strongly to $f_{0}$. Hence the quantity
$\Tr(\varrho_{\varepsilon}(f_{\varepsilon})\mathbf{N})$ does not converge to $\int_{\mathcal Z}|z|^{2}d\delta_{f_{0}}(z)$.
Then the (PI) condition is not satisfied.

\underline{For the Hermite states}:\\
It is easy to see that the (PI) condition fails. Indeed on the one hand
$\lim_{\varepsilon \to 0}\Tr[\varrho_{\varepsilon}\mathbf{N}^{k}]=1$ but on the
other hand $\int_{\mathcal Z}|z|^{2k}d\delta_{f_{0}}=|f_{0}|^{2k}<1.$\\

\underline{For the superposition of orthogonal states}:\\
Assume that the family
$(\varrho_{\varepsilon}^{(u_{\varepsilon}+E(f_{\varepsilon}),u_{\varepsilon}+E(f_{\varepsilon}))})_{\varepsilon
\in (0,\bar{\varepsilon})}$ satisfies the (PI) condition. Fix
$k=1$ and compute:
\begin{align*}
\Tr[\varrho_{\varepsilon}^{(u_{\varepsilon}+E(f_{\varepsilon}),u_{\varepsilon}+E(f_{\varepsilon}))}\mathbf{N}^{k}]
&=&\langle
u_{\varepsilon},\mathbf{N}u_{\varepsilon}\rangle + \langle
E(f_{\varepsilon}),\mathbf{N}E(f_{\varepsilon})\rangle + \langle
u_{\varepsilon},\mathbf{N}E(f_{\varepsilon})\rangle \\&&+ \langle
E(f_{\varepsilon}),\mathbf{N}u_{\varepsilon}\rangle
\end{align*}
The two last terms converge to $0$ when $\varepsilon \to 0$ since 
\begin{align*}
\lim_{\varepsilon \to 0, \, \varepsilon n \to 1} |\langle
E(f_{\varepsilon}),\mathbf{N}u_{\varepsilon}\rangle|&=\lim_{\varepsilon
  \to 0, \, \varepsilon n \to 1} |\langle
u_{\varepsilon},\mathbf{N}E(f_{\varepsilon})\rangle| \\
&= \lim_{\varepsilon \to 0, \, \varepsilon n \to 1} \varepsilon n|\langle u_{\varepsilon},\frac{\varepsilon^{-\frac{n}{2}}e^{-\frac{1}{2\varepsilon}}}{\sqrt{n!}}f_{\varepsilon}\rangle|=0,&
\end{align*}
according to formula $$E(f_{\varepsilon})=\sum_{n=0}^{\infty}\frac{\varepsilon^{-\frac{n}{2}}e^{-\frac{|f_{\varepsilon}|^{2}}{2\varepsilon}}}{\sqrt{n!}}f_{\varepsilon}^{\otimes
n}.$$
Besides, the family of Hermite states $(\varrho_{\varepsilon}^{(u_{\varepsilon},u_{\varepsilon})})_{\varepsilon
\in (0,\bar{\varepsilon})}$ satisfies the (PI) condition for $k=1$, hence 
the family $(\varrho_{\varepsilon}^{(E(f_{\varepsilon}),E(f_{\varepsilon}))})_{\varepsilon
\in (0,\bar{\varepsilon})}$ satisfies the (PI) condition for $k=1$ which is wrong.

\item {\it Propagation}:\\
All the examples (coherent, Hermite and orthogonal states) satisfy the assumptions of Theorem \ref{thm.main}. Hence, for all $t>0$
 $$\mathcal M(\varrho_{\varepsilon}(f_{\varepsilon})(t),\varepsilon \in (0,\bar
\varepsilon))=\{\mu_{t}\},$$ with  
$\mu_{t}=\Phi(t,0)_*\mu_0$ and where $\mu_0$ is the initial Wigner
measure of coherent, Hermite and orthogonal states previously computed. 
\end{itemize}
\end{proof}

\appendix

\section{Second quantization}
\subsection{Fock space}
Let $\mathcal Z$ be a Hilbert space with the inner product $\langle.\,,.\rangle$
antilinear on the left-hand side associated with a norm
$|z|=\sqrt{\langle z\,,z\rangle}$.
\begin{dfn}
 The bosonic Fock space on $\mathcal Z$ is
 given by:
\begin{equation*}
\Gamma_{s}(\mathcal Z)=\bigoplus_{n=0}^{\infty}\bigvee^{n}\mathcal
Z,
\end{equation*}
where $\bigvee^{n}\mathcal Z$ denotes the n-fold symmetric tensor
product. For all n $\in \N$ the orthogonal projection of
$\bigotimes^{n}\mathcal Z$ on the subspace $\bigvee^{n}\mathcal
Z$ is denoted by $\mathcal{S}_{n}$. Moreover $\mathcal{S}_{n}$ have the explicit writing, for all $\xi_{1},...,\xi_{n} \in \mathcal Z$:
\begin{equation*}
\xi_{1} \vee \xi_{2} \vee ... \vee \xi_{n}= S_{n}(\xi_{1}\otimes
\xi_{2} \otimes ... \otimes \xi_{n})=\frac{1}{n!}\sum_{\sigma \in
  \Sigma_{n}}\xi_{\sigma(1)} \otimes \xi_{\sigma(2)} \otimes ... \otimes
\xi_{\sigma(n)},
\end{equation*}
where $\Sigma_{n}$ is the n-th fold symmetric group.
\end{dfn}
The algebraic direct sum is $\Gamma_{s}^{fin}(\mathcal Z)=\bigoplus_{n=0}^{alg}\bigvee^{n}\mathcal Z$.
\begin{prop}
The family $(\xi_{1} \vee \xi_{2} \vee ... \vee \xi_{n})_{\xi_{i} \in
  \mathcal Z, i=1,...,n}$ spans $\bigvee^{n,alg}\mathcal Z$ and is a
total family of $\bigvee^{n}\mathcal Z$. The same property holds
for $(z^{\otimes n})_{z \in \mathcal Z,n \in \N}$.
\end{prop}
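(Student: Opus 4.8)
The plan is to dispose of the two assertions separately: the first (the family $\xi_{1}\vee\cdots\vee\xi_{n}$) is essentially a consequence of the definitions, while the second (the diagonal tensors $z^{\otimes n}$) requires a polarization identity and is the only step with genuine content.

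For the first assertion, I would begin from the defining relation $\bigvee^{n,alg}\mathcal{Z}=S_{n}\bigl(\bigotimes^{n,alg}\mathcal{Z}\bigr)$. Since the algebraic tensor power $\bigotimes^{n,alg}\mathcal{Z}$ is, by construction, the linear span of the elementary tensors $\xi_{1}\otimes\cdots\otimes\xi_{n}$, and since $S_{n}$ is linear, its image $\bigvee^{n,alg}\mathcal{Z}$ is spanned by the vectors $S_{n}(\xi_{1}\otimes\cdots\otimes\xi_{n})=\xi_{1}\vee\cdots\vee\xi_{n}$; this settles the spanning claim. For totality in the completed space $\bigvee^{n}\mathcal{Z}$, I would invoke that $\bigotimes^{n,alg}\mathcal{Z}$ is dense in $\bigotimes^{n}\mathcal{Z}$ and that $S_{n}$ is an orthogonal projection, hence contractive with closed range $S_{n}(\bigotimes^{n}\mathcal{Z})=\bigvee^{n}\mathcal{Z}$. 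Continuity then gives $\overline{\bigvee^{n,alg}\mathcal{Z}}=\bigvee^{n}\mathcal{Z}$.

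The substantive step is the second assertion, for which I would establish the polarization identity
\begin{equation*}
\xi_{1}\vee\cdots\vee\xi_{n}=\frac{1}{2^{n}n!}\sum_{\epsilon\in\{-1,1\}^{n}}\epsilon_{1}\cdots\epsilon_{n}\,\Bigl(\sum_{i=1}^{n}\epsilon_{i}\xi_{i}\Bigr)^{\otimes n}.
\end{equation*}
To verify it I would expand each power $\bigl(\sum_{i}\epsilon_{i}\xi_{i}\bigr)^{\otimes n}$ by multilinearity and, after multiplying by $\epsilon_{1}\cdots\epsilon_{n}$ and summing over all sign vectors, compute the coefficient of a fixed elementary tensor $\xi_{j_{1}}\otimes\cdots\otimes\xi_{j_{n}}$. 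The sum factorizes over the indices $k\in\{1,\ldots,n\}$, the factor attached to $k$ being $\sum_{\epsilon_{k}=\pm1}\epsilon_{k}^{\,1+m_{k}}$ with $m_{k}=\#\{\ell:j_{\ell}=k\}$; this vanishes unless every $m_{k}$ is odd, which, because the $m_{k}$ sum to $n$, forces each $m_{k}=1$. Hence only the terms where $(j_{1},\ldots,j_{n})$ is a permutation of $(1,\ldots,n)$ survive, each carrying weight $2^{n}$, and one recovers $2^{n}n!\,(\xi_{1}\vee\cdots\vee\xi_{n})$.

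With the identity in hand the conclusion is immediate: every symmetrized elementary tensor is a finite linear combination of diagonal tensors $w^{\otimes n}$ with $w=\sum_{i}\epsilon_{i}\xi_{i}\in\mathcal{Z}$, so the family $(z^{\otimes n})_{z\in\mathcal{Z}}$ spans the same subspace $\bigvee^{n,alg}\mathcal{Z}$ as the $\xi_{1}\vee\cdots\vee\xi_{n}$ by the first part, and is therefore total in $\bigvee^{n}\mathcal{Z}$; this holds for every $n\in\N$. The main obstacle is the combinatorial parity argument isolating exactly the permutation terms in the sign sum, and once that is settled the remainder is purely formal.
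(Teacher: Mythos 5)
Your proposal is correct and follows the same route as the paper: the spanning claim via linearity of the projection $\mathcal{S}_{n}$ and density of the algebraic tensor product, and the reduction of symmetrized elementary tensors to diagonal tensors via the polarization identity $\xi_{1}\vee\cdots\vee\xi_{n}=\frac{1}{2^{n}n!}\sum_{\epsilon_{i}=\pm1}\epsilon_{1}\cdots\epsilon_{n}\bigl(\sum_{j}\epsilon_{j}\xi_{j}\bigr)^{\otimes n}$. The only difference is that the paper states this identity as straightforward, whereas you supply the sign-parity verification (correctly: $m_{k}$ odd for all $k$ with $\sum_{k}m_{k}=n$ forces $m_{k}=1$), which is a welcome completion rather than a departure.
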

\begin{proof}
$\mathcal{S}_{n}$ is an orthogonal projection since
$\bigvee^{n}\mathcal Z$ is a closed subspace, then the family
$(\xi_{1} \vee \xi_{2} \vee ... \vee \xi_{n})_{\xi_{i} \in \mathcal Z}$ spans $\bigvee^{n,alg}\mathcal Z$ and is a total family of $\bigvee^{n}\mathcal Z$.
The last result is straightforward from the equality:
\begin{equation*}
\xi_{1} \vee \xi_{2} \vee ... \vee \xi_{n} = \frac{1}{2^{n}n!}\sum_{\varepsilon_{i}=\pm 1}\varepsilon_{1}...\varepsilon_{n}\big(\sum_{j=1}^{n}\varepsilon_{j}\xi_{j}\big)^{\otimes n}.
\end{equation*}
\end{proof}

\subsection{Main operators}
\label{se.main}
For $k=1,2$ and any operators $A_{k}: \bigvee^{i_{k}}\mathcal Z
\rightarrow \bigvee^{j_{k}}\mathcal Z$ we can define the symmetric
tensor product of operators:

\begin{equation*}
A_{1}\bigvee A_{2}= S_{j_{1}+j_{2}} \circ (A_{1}\otimes A_{2}) \circ S_{i_{1}+i_{2}} \in \mathcal L(\bigvee^{i_{1}+i_{2}} \mathcal Z, \bigvee^{j_{1}+j_{2}}\mathcal Z).
\end{equation*}

For all $z \in \mathcal Z$, let denote $|z\rangle$ the operator: $\lambda
\in \C \mapsto \lambda z \in \mathcal Z $ and $\langle z|$ the
linear functional: $\xi \mapsto \langle z,\xi\rangle \in \C$. Now let introduce the annihilation and creation operators.
\begin{dfn}
For $z \in \mathcal Z$, $n \in \N^{*}$ and $\varepsilon>0$ a
parameter. The $\varepsilon$-dependent annihilation and creation operators are defined by:
\begin{align*}
a(z)_{|\vee^{n+1}\mathcal Z}&= \sqrt{\varepsilon (n+1)} \langle z| \otimes \Id_{\vee^{n}\mathcal Z},&\\
a^{*}(z)_{|\vee^{n}\mathcal Z}&=\sqrt{\varepsilon (n+1)} S_{n+1} \circ
( |z\rangle \otimes \Id_{\vee^{n}\mathcal Z})=\sqrt{\varepsilon (n+1)}
\;|z \rangle \bigvee \Id_{\vee^{n}\mathcal Z}&.
\end{align*}
\end{dfn}
The families $(a(z))_{z \in \mathcal Z}$ and $(a^{*}(z))_{z \in
  \mathcal Z}$  satisfy the canonical commutation relations for all $z_{1},z_{2} \in \mathcal Z$:
\begin{equation*}
[a(z_{1}),a^{*}(z_{2})]=\varepsilon \langle z_{1},z_{2}\rangle \Id, \quad 
[a(z_{1}),a(z_{2})]=0, \quad [a^{*}(z_{1}),a^{*}(z_{2})]=0. 
\end{equation*}
We also consider another important operator, namely the field
operator $$\Phi(z)=\frac{1}{\sqrt{2}}(a^{*}(z)+a(z))$$ generator of
the unitary group $W(z)=e^{i\Phi(z)}$ which satisfies the Weyl
commutation relations for all $z_{1}$,$z_{2}$ $\in \mathcal Z$,
\begin{equation*}
W(z_{1})W(z_{2})=e^{-\frac{i\varepsilon}{2}\Im \langle z_{1},z_{2} \rangle}W(z_{1}+z_{2})\;.
\end{equation*}
The number operator $\mathbf N$, parametrized by $\varepsilon>0$, is
defined according to
\begin{equation*}
\mathbf{N}_{|\vee^{n}\mathcal Z}=\varepsilon n \Id_{\vee^{n}\mathcal Z}.
\end{equation*}
Finally, we recall that $\mathrm{d}\Gamma(A)$ is given by:
\begin{equation*}
\mathrm{d}\Gamma(A)_{|\vee^{n,alg}D(A)}=
\varepsilon  \sum_{k=1}^{n}\Id^{\otimes(k-1)}\otimes A \otimes \Id^{\otimes(n-k)}
\end{equation*}
In particular, $\mathbf{N}=\mathrm{d}\Gamma(\Id)$.

\subsection{Wick quantization}
\label{se.wickq}
For all p,q $\in \N$, we denote $\mathcal P_{p,q}(\mathcal Z)$ the space of complex-valued polynomials on $\mathcal Z$, defined by the following continuity condition
\begin{equation}
b \in \mathcal P_{p,q}(\mathcal Z) \Leftrightarrow \exists \tilde{b} \in
\mathcal{L}(\bigvee^{p}\mathcal Z,\bigvee^{q}\mathcal Z),
b(z)=\langle z^{\otimes q},\tilde{b}z^{\otimes p}\rangle.
\end{equation}
These spaces are equipped with norms $|.|_{\mathcal P_{p,q}}$: $$|b|_{\mathcal P_{p,q}}=\|\tilde{b}\|_{\mathcal{L}(\bigvee^{p}\mathcal Z\bigvee^{q}\mathcal Z)}.$$ The subspace of $\mathcal P_{p,q}(\mathcal Z)$ polynomials b such that $\tilde{b}$ is a compact operator is denoted by $\mathcal P^{\infty}_{p,q}(\mathcal Z)$.
The Wick quantization corresponds to each symbol $b(z) \in \mathcal
P_{p,q}(\mathcal Z)$ a linear operator $b^{Wick}:$  $\Gamma_{s}^{fin}(\mathcal Z)$ $\longrightarrow$ $\Gamma_{s}^{fin}(\mathcal Z)$.
\begin{dfn}
For each symbol $b(z) \in \mathcal P_{p,q}(\mathcal Z)$, is associated
an operator:
$\Gamma_{s}^{fin}(\mathcal Z) \longrightarrow
\Gamma_{s}^{fin}(\mathcal Z)$, given by
\begin{equation*}
b_{|\bigvee^{n}\mathcal Z}^{Wick}=1_{[p,+\infty)}(n) \frac{\sqrt{n!(n+q-p)!}}{(n-p)!}\varepsilon^{\frac{p+q}{2}}S_{n-p+q}(\tilde{b}\otimes \Id^{\otimes (n-p)})\;.
\end{equation*}
\end{dfn}
We recall some well-known number estimate:
\begin{prop}
For $b \in \mathcal P_{p,q}(\mathcal Z)$, the following estimate:
\begin{equation*}
\|\langle\mathbf{N}\rangle^{-\frac{q}{2}}b^{Wick}\langle\mathbf{N}\rangle^{\frac{p}{2}}\|_{\mathcal
  L(\mathcal Z)} \leq
|b|_{\mathcal{P}_{p,q}}.
\end{equation*}
holds with $\langle\mathbf{N}\rangle=(1+\mathbf{N}^{2})^{\frac{1}{2}}$.
\end{prop}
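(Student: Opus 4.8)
The plan is to diagonalize everything over the particle-number grading of the Fock space. The operator $\langle\mathbf N\rangle$ acts on each sector $\bigvee^{n}\mathcal Z$ as the scalar $(1+\varepsilon^{2}n^{2})^{1/2}$, while the explicit Wick formula shows that $b^{Wick}$ maps $\bigvee^{n}\mathcal Z$ rigidly into $\bigvee^{n-p+q}\mathcal Z$ and vanishes for $n<p$. Consequently the sandwiched operator $T=\langle\mathbf N\rangle^{-q/2}b^{Wick}\langle\mathbf N\rangle^{-p/2}$ is an orthogonal direct sum $\bigoplus_{n\geq p}T_{n}$ of maps $T_{n}\colon\bigvee^{n}\mathcal Z\to\bigvee^{n-p+q}\mathcal Z$, whose source spaces are mutually orthogonal and whose target spaces are mutually orthogonal. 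For such a direct sum $\|T\|=\sup_{n\geq p}\|T_{n}\|$, so the whole estimate reduces to a bound on a single sector that is uniform in $n$. The structural point, and the place where the two factors of $\langle\mathbf N\rangle$ do their work, is that they enter with negative powers precisely so as to compensate the polynomial-in-$n$ growth of $b^{Wick}$.

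On the sector $\bigvee^{n}\mathcal Z$ I would plug in the Wick formula to write $T_{n}$ as a scalar prefactor times $S_{n-p+q}(\tilde b\otimes\Id^{\otimes(n-p)})$ restricted to $\bigvee^{n}\mathcal Z$. The prefactor is $(1+\varepsilon^{2}(n-p+q)^{2})^{-q/4}\,\tfrac{\sqrt{n!(n-p+q)!}}{(n-p)!}\,\varepsilon^{(p+q)/2}\,(1+\varepsilon^{2}n^{2})^{-p/4}$. Since each symmetrization $S_{m}$ is an orthogonal projection and tensoring with the identity does not increase the operator norm, one has $\|S_{n-p+q}(\tilde b\otimes\Id^{\otimes(n-p)})_{|\bigvee^{n}\mathcal Z}\|\leq\|\tilde b\|=|b|_{\mathcal P_{p,q}}$. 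It then suffices to check that the scalar prefactor is at most $1$.

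For the prefactor I would use the elementary bounds $\tfrac{n!}{(n-p)!}\leq n^{p}$ and $\tfrac{(n-p+q)!}{(n-p)!}\leq(n-p+q)^{q}$, giving $\tfrac{\sqrt{n!(n-p+q)!}}{(n-p)!}\leq n^{p/2}(n-p+q)^{q/2}$, together with $(1+x^{2})^{-s}\leq x^{-2s}$ for $x>0$, giving $(1+\varepsilon^{2}n^{2})^{-p/4}\leq(\varepsilon n)^{-p/2}$ and $(1+\varepsilon^{2}(n-p+q)^{2})^{-q/4}\leq(\varepsilon(n-p+q))^{-q/2}$. Multiplying the four factors, all powers of $\varepsilon$, of $n$ and of $n-p+q$ cancel exactly and the prefactor is $\leq 1$; taking the supremum over $n\geq p$ yields $\|T\|\leq|b|_{\mathcal P_{p,q}}$.

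The bookkeeping of combinatorial and $\varepsilon$-powers is routine, and the symmetrizer bound is standard. I expect the only genuine points to be, first, the reduction $\|T\|=\sup_{n}\|T_{n}\|$, which relies on $b^{Wick}$ shifting the particle number by the fixed amount $q-p$ so that $T$ is truly an orthogonal direct sum (and its norm a supremum, not merely a sum); and second, the exact cancellation of all growing factors, which is what produces the sharp constant $1$ on the right-hand side rather than an $n$-dependent or merely finite bound.
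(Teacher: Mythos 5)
Your proof is correct, and in fact the paper offers no proof of this proposition at all --- it is recalled as a ``well-known number estimate'' (it goes back to the Ammari--Nier papers cited there) --- so your sector-by-sector computation is exactly the standard derivation one would supply. Two remarks. First, you have silently corrected a sign typo in the statement: as printed, the right-hand factor is $\langle\mathbf N\rangle^{+p/2}$, for which the claim is false whenever $p\geq 1$ (on $\bigvee^{n}\mathcal Z$ the three factors then contribute roughly $(\varepsilon(n-p+q))^{-q/2}\cdot(\varepsilon n)^{p/2}(\varepsilon(n-p+q))^{q/2}\cdot(\varepsilon n)^{p/2}\sim(\varepsilon n)^{p}$, unbounded in $n$ for fixed $\varepsilon$); the operator you actually bound, $\langle\mathbf N\rangle^{-q/2}b^{Wick}\langle\mathbf N\rangle^{-p/2}$, is the intended one, and it is the version the paper uses elsewhere (e.g.\ in Proposition \ref{number}). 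Second, the two points you single out are indeed the only ones requiring care and you handle both correctly: $\|T\|=\sup_{n\geq p}\|T_{n}\|$ holds because the shift $n\mapsto n-p+q$ is injective, so both the source sectors and the target sectors are pairwise orthogonal and $\|Tu\|^{2}=\sum_{n}\|T_{n}u_{n}\|^{2}$; and the prefactor bound follows from $\tfrac{n!}{(n-p)!}\leq n^{p}$, $\tfrac{(n-p+q)!}{(n-p)!}\leq(n-p+q)^{q}$ and $(1+x^{2})^{-1/2}\leq x^{-1}$ with exact cancellation of all powers of $\varepsilon$, $n$ and $n-p+q$ (the degenerate cases $n=p=0$ or $q=0$ are harmless since the corresponding exponents vanish).
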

\begin{prop}
\label{number}
Let $b \in \mathcal P_{p,q}(\mathcal Z)$,
$\langle\mathbf{N}\rangle^{-\frac{p+q}{2}}b^{Wick}$ and 
$b^{Wick}\langle\mathbf{N}\rangle^{-\frac{p+q}{2}}$ extend to bounded
operators on $\mathcal Z$ with norms smaller than
$C_{p,q}|b|_{\mathcal{P}_{p,q}}$.
\end{prop}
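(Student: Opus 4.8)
The plan is to exploit that $\langle\mathbf N\rangle$ is diagonal with respect to the orthogonal grading $\Gamma_{s}(\mathcal Z)=\bigoplus_{n}\bigvee^{n}\mathcal Z$, acting on $\bigvee^{n}\mathcal Z$ as the scalar $M_{n}=(1+(\varepsilon n)^{2})^{1/2}$, together with the fact that $b^{Wick}$ shifts each sector by $q-p$, i.e. it maps $\bigvee^{n}\mathcal Z$ into $\bigvee^{n-p+q}\mathcal Z$ (and vanishes for $n<p$). Since both operators respect this grading, it suffices to establish the two bounds sector by sector, uniformly in $n$, on the dense algebraic domain $\Gamma_{s}^{fin}(\mathcal Z)$ where $b^{Wick}$ is defined, and then extend by density. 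The starting point is the preceding (balanced) number estimate $\|\langle\mathbf N\rangle^{-q/2}b^{Wick}\langle\mathbf N\rangle^{-p/2}\|\le|b|_{\mathcal P_{p,q}}$, which distributes the total weight $(p+q)/2$ symmetrically on the two sides of $b^{Wick}$.

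The difficulty is that the target estimates place the whole weight $\langle\mathbf N\rangle^{-(p+q)/2}$ on a single side, so one must transfer a factor $\langle\mathbf N\rangle^{p/2}$ (resp. $\langle\mathbf N\rangle^{q/2}$) across $b^{Wick}$. Because of the shift by $q-p$, this transfer is not free: on $\bigvee^{n}\mathcal Z$ it produces a power of the scalar ratio $M_{n-p+q}/M_{n}$. Concretely I would use the exact identity $\langle\mathbf N\rangle^{-(p+q)/2}b^{Wick}=\langle\mathbf N\rangle^{-p/2}\,[\langle\mathbf N\rangle^{-q/2}b^{Wick}\langle\mathbf N\rangle^{-p/2}]\,\langle\mathbf N\rangle^{p/2}$, in which the outer leftmost factor acts after $b^{Wick}$ as $M_{n-p+q}^{-p/2}$ and the outer rightmost factor acts first as $M_{n}^{p/2}$, while the bracket has norm at most $|b|_{\mathcal P_{p,q}}$. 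Thus on each sector the surviving scalar is $(M_{n}/M_{n-p+q})^{p/2}$.

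Everything therefore reduces to a comparison lemma: there is a constant $c=c_{p,q,\bar\varepsilon}\ge1$ with $c^{-1}M_{n}\le M_{n-p+q}\le c\,M_{n}$ for all $n\ge p$ and all $\varepsilon\in(0,\bar\varepsilon)$. This is elementary: $\varepsilon n$ and $\varepsilon(n-p+q)$ differ by at most $\bar\varepsilon|q-p|$, a fixed quantity, while $M_{n}\ge1$, so squaring and using $(a+b)^{2}\le2a^{2}+2b^{2}$ gives $M_{n-p+q}^{2}\le2M_{n}^{2}+2(\bar\varepsilon|q-p|)^{2}\le(2+2(\bar\varepsilon|q-p|)^{2})M_{n}^{2}$, and symmetrically. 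Combining this with the factorization yields $\|\langle\mathbf N\rangle^{-(p+q)/2}b^{Wick}\|\le c^{p/2}|b|_{\mathcal P_{p,q}}$.

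For the right‑sided bound I would avoid repeating the computation by passing to adjoints: $(b^{Wick})^{*}=(\bar b)^{Wick}$ with $\bar b\in\mathcal P_{q,p}$ and $|\bar b|_{\mathcal P_{q,p}}=|b|_{\mathcal P_{p,q}}$ (same total degree $p+q$), whence $b^{Wick}\langle\mathbf N\rangle^{-(p+q)/2}=(\langle\mathbf N\rangle^{-(p+q)/2}(\bar b)^{Wick})^{*}$ and the already‑proved left‑sided estimate applied to $\bar b$ gives the claim; one may take $C_{p,q}=\max(c^{p/2},c^{q/2})$. The only genuine obstacle is the sector shift by $q-p$: for $p=q$ the balanced estimate would yield the result at once with constant one, and it is precisely the mismatch between the input and output particle numbers, controlled by the comparison lemma above, that is responsible for the constant $C_{p,q}$.
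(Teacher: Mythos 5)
Your argument is correct. Note that the paper itself offers no proof of this proposition --- it is merely ``recalled'' as a well-known number estimate (it goes back to the second-quantization estimates of Ammari--Nier), so there is nothing to compare against line by line; your write-up is a legitimate, self-contained way to fill that gap. The key points all check out: the grading argument is valid because $b^{Wick}$ maps $\bigvee^{n}\mathcal Z$ into $\bigvee^{n-p+q}\mathcal Z$ and these images are mutually orthogonal, so the operator norm is the supremum of the sector norms; the factorization $\langle\mathbf N\rangle^{-\frac{p+q}{2}}b^{Wick}=\langle\mathbf N\rangle^{-\frac{p}{2}}\bigl[\langle\mathbf N\rangle^{-\frac{q}{2}}b^{Wick}\langle\mathbf N\rangle^{-\frac{p}{2}}\bigr]\langle\mathbf N\rangle^{\frac{p}{2}}$ holds exactly on $\Gamma_{s}^{fin}(\mathcal Z)$; the comparison $c^{-1}M_{n}\le M_{n-p+q}\le c\,M_{n}$ is elementary as you show; and the adjoint reduction $(b^{Wick})^{*}=(\bar b)^{Wick}$ with $\bar b\in\mathcal P_{q,p}$ and $|\bar b|_{\mathcal P_{q,p}}=|b|_{\mathcal P_{p,q}}$ correctly delivers the right-sided bound. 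Two remarks. First, you tacitly corrected the balanced estimate you start from: the paper prints $\|\langle\mathbf N\rangle^{-\frac{q}{2}}b^{Wick}\langle\mathbf N\rangle^{+\frac{p}{2}}\|\le|b|_{\mathcal P_{p,q}}$, which is a typo (as stated it is false; the correct exponent is $-\frac{p}{2}$, which is what you use and what follows from the explicit formula for $b^{Wick}_{|\vee^{n}\mathcal Z}$). Second, your constant depends on $\bar\varepsilon$ through the comparison lemma; this is not a defect but a necessity --- for $q<p$ the one-sided bound genuinely fails uniformly in large $\varepsilon$ --- and it is consistent with the paper's standing convention $\varepsilon\in(0,\bar\varepsilon)$ with $\bar\varepsilon$ fixed.
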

An important operation with the Wick symbols is  the composition:
$b_{1}^{Wick} \circ \; b_{2}^{Wick}$ with $b_{1}$,$b_{2} \in
\bigoplus_{p,q}^{alg} \mathcal P_{p,q}(\mathcal Z)$ which is a Wick symbol
in $\bigoplus_{p,q}^{alg}\mathcal P_{p,q}(\mathcal Z)$. Now we introduce the useful notations for the formula about the composition.\\
Let $b \in \mathcal P_{p,q}(\mathcal Z)$, the k-th differential of  b is well defined and
\begin{equation*}
\partial_{z}^{k}b(z) \in (\bigvee^{k}\mathcal Z)^{*},\quad \mbox{ and } \quad 
\partial_{\bar z}^{k}b(z) \in \bigvee^{k}\mathcal Z.
\end{equation*}
Furthermore, we can define any differential
$\partial_{\bar{z}}^{{j}}\partial_{z}^{{k}}b(z)$ for $j,k \in \N$ at
the point $z \in \mathcal{Z}$:
\begin{equation}
\partial_{\bar{z}}^{{j}}\partial_{z}^{{k}}b(z)=\frac{p!}{(p-k)!}\frac{q!}{(q-j)!}
(\langle z^{{\otimes q-j}}| \bigvee \Id_{\vee^{{j}}\mathcal{Z}})
\tilde{b} (|z^{\otimes p-k}\rangle\bigvee \Id_{\vee^{k}\mathcal{Z}}) \in \mathcal{L}
(\vee^{{k}}\mathcal{Z},\vee^{{j}}\mathcal{Z}).
\end{equation}
We use the following notations about the Poisson brackets:
\begin{equation*}
\{ b_{1},b_{2} \} ^{(k)}(z) = \partial_{z}^{k}b_{1}(z).\partial_{\bar z}^{k}b_{2}(z)-\partial_{z}^{k}b_{2}(z).\partial_{\bar z}^{k}b_{1}(z).
\end{equation*}
with the $\C-$bilinear duality product
$\partial_{z}^{k}b_{1}(z).\partial_{\bar
  z}^{k}b_{2}(z)=<\partial_{z}^{k}b_{1}(z),\partial_{\bar
  z}^{k}b_{2}(z)>_{((\bigvee^{k}\mathcal
  Z)^{*},\bigvee^{k}\mathcal Z)}$,\\which defines a function of
$z \in \mathcal Z$ simply denoted by $\partial_{z}^{k}b_{1}.\partial_{\bar
  z}^{k}b_{2}.$
\begin{prop}
Let $b_{1} \in \mathcal P_{p_{1},q_{1}}(\mathcal Z)$ et $b_{2} \in \mathcal P_{p_{2},q_{2}}(\mathcal Z)$. For all $k \in \{ 0,..., \min{(p_{1},q_{2})} \}$, $\partial_{z}^{k}b_{1}.\partial_{\bar z}^{k}b_{2}$ belongs to $\mathcal P_{p_{1}+p_{2}-k,q_{1}+q_{2}-k}(\mathcal Z)$, we have the estimate:
\begin{equation*}
|\partial_{z}^{k}b_{1}.\partial_{\bar z}^{k}b_{2}|_{\mathcal P_{p_{1}+p_{2},q_{1}+q_{2}}} \leq \frac{p_{1}!}{(p_{1}-k)!}\frac{q_{2}!}{(q_{2}-k)!}|b_{1}|_{\mathcal P_{p_{1},q_{1}}}|b_{2}|_{\mathcal P_{p_{2},q_{2}}},
\end{equation*}
and the following formulas hold true on $\Gamma_{s}^{fin}(\mathcal Z)$:
\begin{itemize}
\item
\begin{equation*}
b_{1}^{Wick} \circ b_{2}^{Wick} =\Big[ \sum_{k=0}^{\min(p_{1},q_{2})}\frac{\varepsilon^{k}}{k!} \partial_{z}^{k}b_{1}.\partial_{\bar z}^{k}b_{2} \Big]^{Wick}.
\end{equation*}
\item
\begin{equation*}
[b_{1}^{Wick},b_{2}^{Wick}]=\sum_{k=1}^{\max(\min(p_{1},q_{2}),\min{(p_{2},q_{1})})}\frac{\varepsilon^{k}}{k!} \Big[\{b_{1},b_{2} \}^{(k)} \Big]^{Wick}.
\end{equation*}
\end{itemize}
\end{prop}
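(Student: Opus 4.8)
The plan is to establish the three assertions in order: first that $\partial_{z}^{k}b_{1}.\partial_{\bar z}^{k}b_{2}$ is again a Wick symbol of the announced type with the stated norm bound, then the composition formula, and finally the commutator formula as a corollary. I would begin with the norm estimate, which is purely a matter of reading off symbols. With $j=0$ in the differential formula of Appendix~\ref{se.wickq},
\[
\partial_{z}^{k}b_{1}(z)=\frac{p_{1}!}{(p_{1}-k)!}\,(\langle z^{\otimes q_{1}}|)\,\tilde b_{1}\,(|z^{\otimes p_{1}-k}\rangle\vee \Id_{\vee^{k}\mathcal Z})\in(\vee^{k}\mathcal Z)^{*},
\]
and symmetrically
\[
\partial_{\bar z}^{k}b_{2}(z)=\frac{q_{2}!}{(q_{2}-k)!}\,(\langle z^{\otimes q_{2}-k}|\vee \Id_{\vee^{k}\mathcal Z})\,\tilde b_{2}\,(|z^{\otimes p_{2}}\rangle)\in\vee^{k}\mathcal Z.
\]
The $\C$-bilinear duality product pairs the free $\vee^{k}\mathcal Z$ leg of $b_{2}$ with the dual leg of $b_{1}$, so reading off the coefficient of $\langle z^{\otimes q_{1}+q_{2}-k}|(\cdot)|z^{\otimes p_{1}+p_{2}-k}\rangle$ exhibits the symbol $\widetilde{\partial_{z}^{k}b_{1}.\partial_{\bar z}^{k}b_{2}}$ as a $k$-fold contraction of $\tilde b_{1}$ and $\tilde b_{2}$ followed by symmetrization onto $\vee^{p_{1}+p_{2}-k}\mathcal Z$ and $\vee^{q_{1}+q_{2}-k}\mathcal Z$. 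In particular this symbol lies in $\mathcal{L}(\vee^{p_{1}+p_{2}-k}\mathcal Z,\vee^{q_{1}+q_{2}-k}\mathcal Z)$, so $\partial_{z}^{k}b_{1}.\partial_{\bar z}^{k}b_{2}\in\mathcal P_{p_{1}+p_{2}-k,q_{1}+q_{2}-k}(\mathcal Z)$. Since the symmetrizations have norm $\le 1$ and contracting $k$ legs is norm-non-increasing, submultiplicativity of the operator norm gives $\tfrac{p_{1}!}{(p_{1}-k)!}\tfrac{q_{2}!}{(q_{2}-k)!}\|\tilde b_{1}\|\,\|\tilde b_{2}\|$, which is exactly the claimed inequality.

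For the composition formula I would first reduce by linearity to monomial symbols $b_{i}(z)=\langle z^{\otimes q_{i}},\tilde b_{i}z^{\otimes p_{i}}\rangle$, and then check the operator identity on the total family $(z^{\otimes n})_{z\in\mathcal Z,\,n\in\N}$ of $\Gamma_{s}^{fin}(\mathcal Z)$, so that equality of the actions on these vectors suffices. Applying $b_{2}^{Wick}$ and then $b_{1}^{Wick}$ to $z^{\otimes n}$ through
\[
b^{Wick}_{|\vee^{n}\mathcal Z}=1_{[p,+\infty)}(n)\,\frac{\sqrt{n!\,(n+q-p)!}}{(n-p)!}\,\varepsilon^{\frac{p+q}{2}}\,S_{n-p+q}(\tilde b\otimes \Id^{\otimes(n-p)}),
\]
one obtains a symmetrized product of $\tilde b_{1}\otimes\Id$ and $\tilde b_{2}\otimes\Id$ with a symmetrizer inserted in the middle. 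The heart of the matter is to expand this middle symmetrizer according to the number $k$ of contracted (``annihilation'') legs of $\tilde b_{1}$ that meet one of the $q_{2}$ ``creation'' legs of $\tilde b_{2}$, as opposed to landing on the passive $\Id$ legs. There are $\binom{p_{1}}{k}\binom{q_{2}}{k}k!=\tfrac{1}{k!}\tfrac{p_{1}!}{(p_{1}-k)!}\tfrac{q_{2}!}{(q_{2}-k)!}$ such pairings, each producing a factor $\varepsilon^{k}$ (balancing the $\varepsilon^{(p+q)/2}$ prefactors, equivalently arising from the $\varepsilon$-dependent CCR), and the associated contracted, symmetrized core symbol is precisely the one isolated in the first step. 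Summing over $k$ from $0$ to $\min(p_{1},q_{2})$ produces exactly $\sum_{k}\tfrac{\varepsilon^{k}}{k!}\partial_{z}^{k}b_{1}.\partial_{\bar z}^{k}b_{2}$. Conceptually this is nothing but the normal-ordering (Wick) theorem: pushing the annihilation operators of $b_{1}^{Wick}$ to the right through the creation operators of $b_{2}^{Wick}$ via $[a(z_{1}),a^{*}(z_{2})]=\varepsilon\langle z_{1},z_{2}\rangle$ generates one contraction, weighted by $\varepsilon$, for each such crossing.

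The commutator formula is then immediate. Writing the analogous expansion for $b_{2}^{Wick}\circ b_{1}^{Wick}$ and subtracting, the $k=0$ terms coincide — both equal the ordinary product of symbols $b_{1}b_{2}=b_{2}b_{1}$ — and cancel, while for $k\ge 1$ the difference of the two contracted symbols is $\partial_{z}^{k}b_{1}.\partial_{\bar z}^{k}b_{2}-\partial_{z}^{k}b_{2}.\partial_{\bar z}^{k}b_{1}=\{b_{1},b_{2}\}^{(k)}$ by definition. The index $k$ runs up to $\min(p_{1},q_{2})$ in the first composition and up to $\min(p_{2},q_{1})$ in the second, hence up to $\max(\min(p_{1},q_{2}),\min(p_{2},q_{1}))$ in the difference, which is precisely the stated summation range. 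I expect the only real obstacle to lie in the second step: keeping exact track of the nested symmetrization projections and of the combinatorial multiplicities in the middle-symmetrizer expansion, since the prefactors, the powers of $\varepsilon$, and the binomial counts must conspire to reproduce the coefficient $\varepsilon^{k}/k!$ in front of $\partial_{z}^{k}b_{1}.\partial_{\bar z}^{k}b_{2}$ with no residual factor. The first and third steps are essentially bookkeeping once the contraction structure of the symbol has been identified.
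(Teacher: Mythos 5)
The paper does not prove this proposition: it is recalled in the appendix as background, the statement and its proof being taken from the earlier work of Ammari and Nier (\cite{AmNi1}, and \cite{AmNi3}), so there is no in-paper argument to compare against. Your outline reproduces the standard proof from that reference: identify $\widetilde{\partial_{z}^{k}b_{1}.\partial_{\bar z}^{k}b_{2}}$ as a symmetrized $k$-fold contraction of $\tilde b_{1}$ and $\tilde b_{2}$ (whence membership in $\mathcal P_{p_{1}+p_{2}-k,q_{1}+q_{2}-k}$ and the norm bound, since the symmetrizers are orthogonal projections), then obtain the composition formula by normal ordering, and the commutator formula by antisymmetrizing. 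The $\varepsilon$-bookkeeping and the pairing count $\binom{p_{1}}{k}\binom{q_{2}}{k}k!$, which after division by the prefactor $\tfrac{p_{1}!}{(p_{1}-k)!}\tfrac{q_{2}!}{(q_{2}-k)!}$ already built into the derivatives leaves exactly $\tfrac{1}{k!}$, are correctly identified; these are indeed the only delicate points.

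Two small remarks. First, the reduction ``by linearity to monomial symbols'' is vacuous as stated, since every $b\in\mathcal P_{p,q}(\mathcal Z)$ is already of the form $\langle z^{\otimes q},\tilde b z^{\otimes p}\rangle$; the reduction that actually simplifies the middle-symmetrizer expansion is either to test matrix elements $\langle w^{\otimes m}, b_{1}^{Wick}b_{2}^{Wick}z^{\otimes n}\rangle$ directly, or to pass to symbols with $\tilde b_{i}$ of the rank-one form $|\eta^{\vee q_{i}}\rangle\langle\xi^{\vee p_{i}}|$, for which $b_{i}^{Wick}$ is a product of creation and annihilation operators and the expansion is literally Wick's theorem via the CCR; general $\tilde b_{i}$ are then recovered by a density/continuity argument on each $\bigvee^{n}\mathcal Z$. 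Second, the crux you isolate (the expansion of the middle symmetrizer according to the number of contracted legs, with exact multiplicities) is asserted rather than carried out, so as written this is a correct and complete plan but still a sketch at the one step where all the work lies. Testing against $(z^{\otimes n})_{z\in\mathcal Z,\,n\in\N}$ is legitimate since both sides are bounded on each $\bigvee^{n}\mathcal Z$ by the number estimates and this family is total.
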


The following lemma specifies the behaviour of the Wick observables
conjugated by the Weyl operators and unitary groups generated by free Hamiltonians (see \cite{AmNi3} for 
more details).
\begin{lm}
Let $b \in \mathcal{P}_{alg}(\mathcal Z)$:\\
\noindent\textbf{a)}  The operator $b^{Wick}$ is closable and the domain of his closure contains:
\begin{equation*}
\mathcal H_{0}=Span \{ W(\phi)\psi, \psi \in \Gamma_{s}^{fin}(\mathcal Z),\phi \in \mathcal Z \}.
\end{equation*}
\noindent\textbf{b)}  For all $\xi \in \mathcal Z$, the equality
\begin{equation}
\label{eq.toto11}
W(\sqrt{2}\pi\xi)^{*}b^{Wick}W(\sqrt{2}\pi\xi)=\{ b ( z + i\pi\varepsilon\xi ) \}^{Wick}.
\end{equation}
holds on $\mathcal H_{0}.$\\
\noindent\textbf{c)} Let A be a self-adjoint operator on $\mathcal Z$ then for all $t\in\mathbb{R}$,
\begin{equation}
\label{eq.toto7}
e^{i\frac{t}{\varepsilon}d\Gamma(A)}b^{Wick}e^{-i\frac{t}{\varepsilon}d\Gamma(A)}=
(b(e^{-itA}z))^{Wick}.
\end{equation}
\end{lm}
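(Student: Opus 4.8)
The plan is to establish each of the three assertions by reducing it to an algebraic identity on the finite-particle space $\Gamma_{s}^{fin}(\mathcal Z)$ and then extending it to $\mathcal H_{0}$ by means of the number estimates of Proposition \ref{number}. For \textbf{a)}, I would first observe that for $b\in\mathcal P_{p,q}(\mathcal Z)$ the conjugate symbol $b^{*}(z)=\langle z^{\otimes p},\tilde b^{*}z^{\otimes q}\rangle$ belongs to $\mathcal P_{q,p}(\mathcal Z)$ and that $\scal{b^{Wick}\Psi}{\Theta}=\scal{\Psi}{(b^{*})^{Wick}\Theta}$ for all $\Psi,\Theta\in\Gamma_{s}^{fin}(\mathcal Z)$. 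As $\Gamma_{s}^{fin}(\mathcal Z)$ is dense, this gives $(b^{*})^{Wick}\subseteq(b^{Wick})^{*}$, so $(b^{Wick})^{*}$ is densely defined and $b^{Wick}$ is closable. To place $\mathcal H_{0}$ inside the domain of the closure, I would use that $W(\phi)\psi$, for $\psi\in\Gamma_{s}^{fin}(\mathcal Z)$, has super-exponentially decaying $n$-particle components and hence lies in $D(\langle\mathbf N\rangle^{s})$ for every $s$. Truncating $W(\phi)\psi$ to its first $N$ particle sectors produces $\Psi_{N}\in\Gamma_{s}^{fin}(\mathcal Z)$ with $\Psi_{N}\to W(\phi)\psi$, and the estimate $\norm{b^{Wick}\Theta}\le C\norm{\langle\mathbf N\rangle^{(p+q)/2}\Theta}$ makes $(b^{Wick}\Psi_{N})_{N}$ Cauchy; its limit exhibits $W(\phi)\psi$ as an element of $D(\overline{b^{Wick}})$.

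For \textbf{b)}, the crucial computation is the Weyl shift of the elementary operators. Since $[a(z),\Phi(\eta)]$ is a scalar, the Hadamard series truncates after one term and yields, with $\eta=\sqrt{2}\pi\xi$,
$$W(\sqrt{2}\pi\xi)^{*}a(z)W(\sqrt{2}\pi\xi)=a(z)+i\pi\varepsilon\scal{z}{\xi},\qquad W(\sqrt{2}\pi\xi)^{*}a^{*}(z)W(\sqrt{2}\pi\xi)=a^{*}(z)-i\pi\varepsilon\scal{\xi}{z}.$$
Writing $b^{Wick}$ in its normal-ordered form as a finite combination of monomials in $a^{*}$ and $a$, I would substitute these translated operators and, after reassembling the scalar corrections, recognize the outcome as the Wick quantization of the displaced symbol $z\mapsto b(z+i\pi\varepsilon\xi)$; the antilinearity of the inner product is precisely what makes the creation shift carry the correct sign, so that the reassembled expression is exactly the Taylor expansion of $b$ in the direction $i\pi\varepsilon\xi$. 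All these manipulations are licit on $\mathcal H_{0}$, which is stable under the Weyl operators and, by part a), is contained in $D(\overline{b^{Wick}})$.

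For \textbf{c)}, I would begin by identifying $e^{i\frac{t}{\varepsilon}\mathrm{d}\Gamma(A)}=\Gamma(e^{itA})$: on $\bigvee^{n}\mathcal Z$ the factor $\varepsilon$ in $\mathrm{d}\Gamma(A)$ cancels the prefactor $\tfrac{t}{\varepsilon}$, leaving $(e^{itA})^{\otimes n}$. The intertwining relations $\Gamma(U)a(z)\Gamma(U)^{*}=a(Uz)$ and $\Gamma(U)a^{*}(z)\Gamma(U)^{*}=a^{*}(Uz)$, applied with $U=e^{itA}$, replace the argument $z$ of every creation and annihilation operator by $e^{itA}z$; at the level of the symbol, using $A^{*}=A$, this is the substitution $z\mapsto e^{-itA}z$, so the normal-ordered monomials composing $b^{Wick}$ are carried onto those of $(b(e^{-itA}z))^{Wick}$. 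The resulting identity holds on $\Gamma_{s}^{fin}(\mathcal Z)$ and extends to $\mathcal H_{0}$ just as in parts a) and b).

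The main obstacle is the rigorous justification of the symbol-level manipulations in part b): one must control the unbounded operator $b^{Wick}$ along the Weyl-translated vectors and check that the truncated Hadamard computation and its resummation into $b(z+i\pi\varepsilon\xi)$ are valid operator identities on $\mathcal H_{0}$ rather than merely formal ones. The number estimates of Proposition \ref{number}, combined with the super-exponential decay of the components of $W(\phi)\psi$, are exactly the tools that turn these formal equalities into genuine identities of closed operators.
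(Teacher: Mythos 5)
Your proposal is correct, and it follows the standard route: the paper itself gives no proof of this lemma but quotes it from \cite{AmNi3} (Ammari--Nier), where the argument is exactly the one you sketch --- closability via $(b^{Wick})^{*}\supseteq(\bar{b})^{Wick}$ on $\Gamma_{s}^{fin}(\mathcal Z)$ together with the number estimates and the rapid decay of the particle components of $W(\phi)\psi$, the truncated Hadamard formula $W(\sqrt{2}\pi\xi)^{*}a(z)W(\sqrt{2}\pi\xi)=a(z)+i\pi\varepsilon\langle z,\xi\rangle$ and its adjoint for part b), and $e^{i\frac{t}{\varepsilon}\mathrm{d}\Gamma(A)}=\Gamma(e^{itA})$ with the intertwining relations $\Gamma(U)a^{\sharp}(z)\Gamma(U)^{*}=a^{\sharp}(Uz)$ for part c). Your shift formulas have the correct signs (the antilinear slot indeed produces the $-i\pi\varepsilon\langle\xi,z\rangle$ in the creation shift, matching the substitution $z\mapsto z+i\pi\varepsilon\xi$ in the symbol), and you correctly identify the only delicate point, namely justifying the formal normal-ordered manipulations on $\mathcal H_{0}$ via Proposition \ref{number}.
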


\begin{lm}
Any $b(z) \in \bigoplus_{j=0}^{r}\mathcal{P}_{j,j}(\mathcal Z)$
satisfies the following properties:
\begin{enumerate}
\item The equality
\begin{equation}
\label{eq.toto10}
b(z+i\pi\varepsilon \xi)=\sum_{j=0}^{r}\frac{(i\varepsilon\pi)^{j}}{j!}\mathrm{D}^{j}[b(z)][\xi],
\end{equation}
where $\mathrm{D}^{j}[b(z)][\xi]$ the j-th differential of $b$ with respect to
$(z,\bar{z})$ evaluated at $\xi$, i.e:
\begin{equation*}
\mathrm{D}^{j}[b(z)][\xi]=\sum_{|\alpha|+|\beta|=j}\frac{j!}{\alpha!\beta!}\langle
\xi^{\otimes \beta}\; , \;\partial_{z}^{\alpha}\partial_{\bar
z}^{\beta}b(z){\xi}^{\otimes \alpha}\rangle.
\end{equation*}
\item
There exists a $\varepsilon$-independent constant $C_{r}>0$ such that
\begin{equation*}
\|\langle\mathbf{N}\rangle^{-\frac{r}{2}}\sum_{j=0}^{r}\frac{(i\varepsilon
  \pi)^{j}}{j!}(\mathrm{D}^{j}[b(z)][\xi])^{Wick}\langle\mathbf{N}\rangle^{-\frac{r}{2}}\|_{\mathcal
  L(\Gamma_{s}(\mathcal Z))} \leq C_{r}\langle\xi\rangle^{r}.
\end{equation*}
\end{enumerate}
\end{lm}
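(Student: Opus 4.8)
The plan is to treat the two assertions separately: the first is a finite Taylor expansion of a polynomial in a complexified translation parameter, and the second is a bookkeeping exercise with the Wick number estimates already recorded in Appendix \ref{se.wickq}. By linearity it suffices to treat a single homogeneous symbol $b\in\mathcal{P}_{\ell,\ell}$, and since the vectors $z^{\otimes\ell}$ form a total family one may even reduce to generating symbols $b(z)=\langle z,\eta\rangle^{\ell}\langle\zeta,z\rangle^{\ell}$, for which every step below is transparent.

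For the first identity the crucial observation is the meaning of the shifted symbol: ``$b(z+i\pi\varepsilon\xi)$'' must be read as the value at $s=i\pi\varepsilon$ of the entire function $s\mapsto F(s):=b(z+s\xi)$, where the translation acts \emph{holomorphically} in the parameter $s$, so that the antiholomorphic slot becomes $\bar z+s\bar\xi$ (and not $\bar z+\bar s\bar\xi$). With this reading $F$ is literally a polynomial in $s$, of degree equal to the joint degree of $b$ in $(z,\bar z)$, and this is exactly the convention forced by the Weyl conjugation formula \eqref{eq.toto11}. I would then apply Taylor's theorem, $F(s)=\sum_{j}\frac{s^{j}}{j!}F^{(j)}(0)$, and evaluate $F^{(j)}(0)$ by differentiating $s\mapsto b(z+s\xi)$: since $\partial_{z}$ and $\partial_{\bar z}$ commute, the Leibniz rule gives $F^{(j)}(0)=\sum_{\alpha+\beta=j}\frac{j!}{\alpha!\beta!}\langle\xi^{\otimes\beta},\partial_{z}^{\alpha}\partial_{\bar z}^{\beta}b(z)\,\xi^{\otimes\alpha}\rangle=\mathrm{D}^{j}[b(z)][\xi]$. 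Because $s$ is not conjugated, the holomorphic parameter contributes a uniform factor $(i\pi\varepsilon)^{j}$ rather than sign-alternating coefficients, which is precisely what produces \eqref{eq.toto10}; the sum is finite since $F$ is a polynomial.

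For the second assertion the point is that each $\mathrm{D}^{j}[b(z)][\xi]$ is itself a Wick symbol of controlled bidegree. The appendix formula for $\partial_{z}^{\alpha}\partial_{\bar z}^{\beta}b$ shows that the $(\alpha,\beta)$ term equals $\frac{\ell!}{(\ell-\alpha)!}\frac{\ell!}{(\ell-\beta)!}\langle z^{\otimes(\ell-\beta)}\vee\xi^{\otimes\beta},\,\tilde b\,(z^{\otimes(\ell-\alpha)}\vee\xi^{\otimes\alpha})\rangle$, a symbol in $\mathcal{P}_{\ell-\alpha,\ell-\beta}$ with $\mathcal{P}$-norm at most $C_{\ell}\,\|\tilde b\|\,|\xi|^{\alpha+\beta}=C_{\ell}\,\|\tilde b\|\,|\xi|^{j}$. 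Writing $p'=\ell-\alpha$ and $q'=\ell-\beta$, both $\le\ell\le r$, I would split $\langle\mathbf N\rangle^{-r/2}(\cdot)^{Wick}\langle\mathbf N\rangle^{-r/2}=\langle\mathbf N\rangle^{-(r-q')/2}\big[\langle\mathbf N\rangle^{-q'/2}(\cdot)^{Wick}\langle\mathbf N\rangle^{-p'/2}\big]\langle\mathbf N\rangle^{-(r-p')/2}$; the two outer factors have norm $\le 1$, and the bracket is bounded by $C\,|\cdot|_{\mathcal{P}_{p',q'}}$, uniformly in $\varepsilon$, by the number estimates of Appendix \ref{se.wickq} and Proposition \ref{number}. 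Summing the finite series and using $\varepsilon\le 1$ together with $|\xi|^{j}\le\langle\xi\rangle^{r}$ for $j\le r$ yields the claimed bound $C_{r}\langle\xi\rangle^{r}$.

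The genuinely delicate point here is conceptual rather than computational: correctly interpreting the shifted symbol in the first part, namely recognizing that the translation acts holomorphically in the parameter so that the $\bar z$–slot picks up $+i\pi\varepsilon\bar\xi$; once this is settled, the first part is Taylor's theorem for a polynomial and the second is the already-established Wick number estimate. I note in passing that, taken literally, the top index of the sums is the joint degree of $b$ (so $2r$ rather than $r$: a symbol in $\mathcal{P}_{1,1}$ already has a nonzero $\mathrm{D}^{2}$), but this does not affect the later applications, where only the $j=1$ term survives in the limit $\varepsilon\to 0$.
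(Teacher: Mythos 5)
Your proof is correct and follows the same route as the paper's (very terse) argument: part (1) is Taylor's theorem for the polynomial $s\mapsto b(z+s\xi)$ with the holomorphic reading of the shift forced by \eqref{eq.toto11}, and part (2) writes each term of $\mathrm{D}^{j}[b(z)][\xi]$ as a symbol of bidegree at most $(r,r)$ with $\mathcal{P}$-norm $O(\|\tilde b\|\,|\xi|^{j})$ and invokes the number estimate of Proposition \ref{number}. Your side remark that the Taylor sum should literally run to $2r$ (the joint degree in $(z,\bar z)$) rather than $r$ is accurate and, as you note, harmless for the later applications.
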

\begin{proof}
The first statement follows by Taylor expansion. Then notice that for all $j \in
[0,r]$, $\mathrm{D}^{j}[b(z)][\xi] \in \bigoplus_{m,n}^{r-j}\mathcal{P}_{m,n}(\mathcal
Z)$. The number estimate of Proposition \ref{number} implies the existence of $C_{r}>0$ such that
\begin{equation}
 \|\langle\mathbf{N}\rangle^{-\frac{r}{2}}\sum_{j=0}^{r}\frac{(i\varepsilon
  \pi)^{j}}{j!}(\mathrm{D}^{j}[b(z)][\xi]) ^{Wick}\langle\mathbf{N}\rangle^{-\frac{r}{2}}\|_{\mathcal
  L(\Gamma_{s}(\mathcal Z))} \leq C_{r}\langle\xi\rangle^{r}.
\end{equation}
\end{proof}

\section{Wigner measures}
\label{se.Wigdef}
The Wigner measures are defined in \cite[theorem 6.4]{AmNi1}. We
recall here the main result.
\begin{thm}
\label{th.Wigdef}
Let $(\varrho_{\varepsilon})_{\varepsilon \in (0,\bar{\varepsilon})}$ be
a family of normal states on $\Gamma_{s}(\mathcal Z)$. Assume that
$\Tr[\varrho_{\varepsilon}\mathbf{N}^{\delta}] \leq{C_{\delta}}< +\infty$ uniformly with respect to $\varepsilon \in (0,\bar{\varepsilon})$ for some $\delta>0$ fixed and $C_{\delta} \in \R_{+}^{*}$. Then for any sequence $(\varepsilon_{n})_{n \in \N}$ with $\lim_{n \to +\infty}\varepsilon_{n}=0$, there is an extracted subsequence $(\varepsilon_{n_{k}})_{k \in \N}$ and a Borel probability measure on $\mathcal Z$ such that:
\begin{equation*}
\lim_{k \to +\infty}\Tr[\varrho_{\varepsilon_{n_{k}}}b^{Weyl}]=\int_{\mathcal Z}b(z) d\mu(z),
\end{equation*}
for all $b \in S_{cyl}(\mathcal Z)$. Moreover the probability measure $\mu$ fulfills
\begin{equation*}
\int_{\mathcal Z} |z|^{2\delta}d\mu(z) <C_\delta< \infty.
\end{equation*}
\end{thm}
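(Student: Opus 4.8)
The plan is to study the family of quantum characteristic functions $G_\varepsilon(\xi)=\Tr[\varrho_\varepsilon W(\sqrt2\pi\xi)]$ on $\mathcal Z$ and to show that, along a subsequence, they converge pointwise to the Fourier transform of a Borel probability measure. Each $G_\varepsilon$ enjoys two structural properties: since $W$ is unitary and $\varrho_\varepsilon$ is a state one has $G_\varepsilon(0)=1$ and $|G_\varepsilon(\xi)|\leq1$; and $G_\varepsilon$ is of positive type up to the Weyl twist, namely for any finite families $(\xi_j)_j$ in $\mathcal Z$ and $(c_j)_j$ in $\C$ the Weyl relations give $\sum_{j,k}\bar c_j c_k\,e^{i\varepsilon\pi^2\Im\langle\xi_j,\xi_k\rangle}\,G_\varepsilon(\xi_k-\xi_j)=\Tr[\varrho_\varepsilon\,|\sum_j c_j W(\sqrt2\pi\xi_j)|^2]\geq0$. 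As $\varepsilon\to0$ the twisting phase tends to $1$, so any pointwise limit of the $G_\varepsilon$ will be of positive type in the ordinary sense.

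First I would prove a uniform-in-$\varepsilon$ equicontinuity estimate for $G_\varepsilon$. Writing the difference and using that the field operator is relatively $\langle\mathbf N\rangle^{1/2}$-bounded, one gets $\|(W(\sqrt2\pi\xi)-W(\sqrt2\pi\xi'))\langle\mathbf N\rangle^{-s}\|\leq C|\xi-\xi'|$ for a suitable $s$; combined with the hypothesis $\Tr[\varrho_\varepsilon\mathbf N^\delta]\leq C_\delta$ (through Hölder against the available moment) this yields a modulus of continuity for $G_\varepsilon$, of Hölder type with exponent governed by $\delta$, that is uniform in $\varepsilon$ on bounded sets. A diagonal extraction then applies: fixing an increasing sequence of finite-rank projections $(p_n)_n$ exhausting $\mathcal Z$, I would use Ascoli--Arzelà on each $p_n\mathcal Z$ and intersect the resulting subsequences to produce $(\varepsilon_{n_k})_k$ along which $G_{\varepsilon_{n_k}}\to G$ pointwise on $\mathcal Z$, with $G$ continuous, of positive type and $G(0)=1$.

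The heart of the proof is to realise $G$ as a characteristic function. For every finite-rank projection $p$ the restriction $G|_{p\mathcal Z}$ is a continuous, positive-type function on the finite-dimensional space $p\mathcal Z$, so Bochner's theorem produces a Borel probability measure $\mu^{(p)}$ on $p\mathcal Z$ with $G|_{p\mathcal Z}=\mathcal F^{-1}[\mu^{(p)}]$; consistency under coordinate projections makes $(\mu^{(p)})_p$ a projective family. To assemble these marginals into a single measure on the infinite-dimensional $\mathcal Z$ I would establish the uniform moment bound $\int_{p\mathcal Z}|z|^{2\delta}\,d\mu^{(p)}\leq C_\delta$: since $(|z|^2)^{Wick}=\mathbf N$ and $\mathrm d\Gamma(p_n)\leq\mathbf N$, a lower-semicontinuity argument identical to the one in the proof of Proposition \ref{prop.existence} gives $\int|p_nz|^{2\delta}\,d\mu^{(p_n)}\leq\liminf_\varepsilon\Tr[\varrho_\varepsilon\mathbf N^\delta]\leq C_\delta$. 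This uniform bound furnishes tightness of the projective family, whence Prokhorov's theorem together with a Kolmogorov-type extension delivers a Borel probability measure $\mu$ on $\mathcal Z$ whose finite-dimensional marginals are the $\mu^{(p)}$, and monotone convergence passes the bound to the limit: $\int_{\mathcal Z}|z|^{2\delta}\,d\mu=\sup_n\int|p_nz|^{2\delta}\,d\mu^{(p_n)}\leq C_\delta$.

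It then remains to identify the limit of $\Tr[\varrho_{\varepsilon_{n_k}}b^{Weyl}]$. For $b\in\mathcal S_{cyl}(\mathcal Z)$ based on $p\mathcal Z$ with $b(z)=g(pz)$, the Fourier representation $b^{Weyl}=\int_{p\mathcal Z}\mathcal F[g](\xi)\,W(\sqrt2\pi\xi)\,dL_p(\xi)$ gives $\Tr[\varrho_{\varepsilon_{n_k}}b^{Weyl}]=\int_{p\mathcal Z}\mathcal F[g](\xi)\,G_{\varepsilon_{n_k}}(\xi)\,dL_p(\xi)$, and dominated convergence (using $|G_{\varepsilon_{n_k}}|\leq1$ and $\mathcal F[g]\in\mathcal S(p\mathcal Z)$) lets me pass to the limit $\int\mathcal F[g]\,G\,dL_p=\int_{\mathcal Z}b\,d\mu$ by Fourier inversion against $\mu^{(p)}$. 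The main obstacle is exactly the infinite-dimensional assembly step: unlike the finite-dimensional Bochner statement, turning the consistent family $(\mu^{(p)})_p$ into an honest Radon probability measure on $\mathcal Z$ requires tightness, and since norm-bounded balls of a Hilbert space fail to be norm-compact, one must exploit the uniform moment bound in the weak topology --- where balls are compact --- to obtain the Prokhorov compactness needed for the extension.
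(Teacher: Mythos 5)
The paper does not prove this theorem: it is quoted verbatim from \cite{AmNi1} (Theorem~6.2/6.4 there), and the proof in that reference proceeds exactly as you do --- equicontinuity of $\xi\mapsto\Tr[\varrho_\varepsilon W(\sqrt2\pi\xi)]$ from the $\mathbf N^\delta$-moment bound, diagonal extraction over an exhausting sequence of finite-rank projections, finite-dimensional Bochner giving a consistent family of marginals (the $\varepsilon$-dependent Weyl twist disappearing in the limit), and a Prokhorov-type extension made possible by the uniform $2\delta$-moment estimate together with the weak compactness of balls. Your reconstruction is correct and correctly isolates the genuine difficulty (countable additivity of the cylindrical limit), so it matches the cited proof in all essentials.
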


\begin{dfn}
The set of Wigner measures associated with a family $(\varrho_{\varepsilon})_{\varepsilon \in (0,\bar{\varepsilon})} $ ( respectively a sequence $(\varrho_{\varepsilon_{n}})_{n \in \N})$ which follows the hypotheses of the previous theorem is denoted:
$\mathcal{M}(\varrho_{\varepsilon},\varepsilon \in (0,\bar{\varepsilon}))$, (respectively $\mathcal{M}(\varrho_{\varepsilon_{n},n \in \N}))$
The expression $\mathcal M(\varrho_{\varepsilon},\varepsilon \in
(0,\bar{\varepsilon}))=\{ \mu \}$ means that the family $(\varrho_{\varepsilon})_{\varepsilon \in (0,\bar{\varepsilon})}$ is pure in the sense :
\begin{equation*}
\lim_{\varepsilon \to 0}\Tr[\varrho_{\varepsilon}b^{Weyl}]=\int_{\mathcal Z}b(z) d\mu(z),
\end{equation*}
for all cylindrical symbols b $\in \mathcal S_{cyl}(\mathcal Z)$
without extracting a subsequence.
\end{dfn}
We can assume without loss of generality that $\mathcal M(\varrho_{\varepsilon},\varepsilon
\in (0,\bar{\varepsilon}))=\{ \mu \}$ to prove properties of $\mathcal
M(\varrho_{\varepsilon},\varepsilon \in (0,\bar{\varepsilon}))$.
\\
In practice the Wigner measures  are identified though their characteristic functions with the relation:
\begin{align*}
\mathcal M(\varrho_{\varepsilon},\varepsilon \in (0,\bar{\varepsilon}))&=\{ \mu \} \Leftrightarrow \lim_{\varepsilon \to 0}\Tr[\varrho_{\varepsilon}W(\sqrt{2}\pi\xi)]=\mathcal{F}^{-1}(\mu)(\xi),&\\
&\Leftrightarrow \lim_{\varepsilon \to 0}\Tr[\varrho_{\varepsilon}W(\xi)]=\int_{\mathcal Z}e^{i\sqrt{2}\Real \langle\xi,z \rangle}d\mu(z).
\end{align*}
An a
priori estimate argument allows to extend the previous definition to Wick symbols with compact kernels :

\begin{lm} \label{eq.wigcompact}
Let $(\varrho_{\varepsilon})_{\varepsilon \in (0,\bar{\varepsilon})}$ be
a family of normal states on $\mathcal{L}(\mathcal Z)$ depending on $\varepsilon$ such that
\begin{equation*}
\forall \alpha \in \N, \exists C_\alpha>0, \,\forall \varepsilon\in(0,\bar\varepsilon), \quad \Tr[\varrho_{\varepsilon}\mathbf{N}^{\alpha}]\leq{C_{\alpha}},
\end{equation*}
and $\mathcal M(\varrho_{\varepsilon},\varepsilon
\in (0,\bar{\varepsilon}))=\{ \mu \}$. Then, for any $ b \in \mathcal{P}_{alg}^{\infty}(\mathcal{Z})$,
\begin{equation*}
\lim\limits_{\varepsilon \to 0}
\Tr[\varrho_{\varepsilon}b^{Wick}]=\int_{\mathcal{Z}}b(z)d\mu(z)\;.
\end{equation*}
\end{lm}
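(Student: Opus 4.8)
The plan is to prove the statement first for symbols with a \emph{finite-rank} kernel, where it reduces to the defining convergence on $\mathcal S_{cyl}(\mathcal Z)$ recalled in Theorem~\ref{th.Wigdef}, and then to extend it to arbitrary compact kernels by a uniform a priori estimate together with a density argument. By linearity and the decomposition $\mathcal{P}_{alg}^{\infty}(\mathcal Z)=\bigoplus_{p,q}\mathcal{P}^{\infty}_{p,q}(\mathcal Z)$, it suffices to treat a single homogeneous symbol $b\in\mathcal{P}^{\infty}_{p,q}(\mathcal Z)$, that is $b(z)=\langle z^{\otimes q},\tilde b\,z^{\otimes p}\rangle$ with $\tilde b$ compact, and to view both $\Tr[\varrho_{\varepsilon}b^{Wick}]$ and $\int_{\mathcal Z}b\,d\mu$ as functionals of the kernel $\tilde b$.

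The core of the argument is the following uniform a priori bound: the quantum functional $\tilde b\mapsto \Tr[\varrho_{\varepsilon}b^{Wick}]$ (uniformly in $\varepsilon$) and the classical functional $\tilde b\mapsto \int_{\mathcal Z}b(z)\,d\mu(z)$ are both continuous for the operator norm on $\mathcal L(\bigvee^{p}\mathcal Z,\bigvee^{q}\mathcal Z)$, with
\begin{equation*}
|\Tr[\varrho_{\varepsilon}b^{Wick}]|\leq C_{p,q}\,\|\tilde b\|,\qquad \Big|\int_{\mathcal Z}b(z)\,d\mu(z)\Big|\leq C'_{p,q}\,\|\tilde b\|.
\end{equation*}
For the first bound I would factor $\Tr[\varrho_{\varepsilon}b^{Wick}]=\Tr\big[(\varrho_{\varepsilon}\langle\mathbf N\rangle^{(p+q)/2})\,(\langle\mathbf N\rangle^{-(p+q)/2}b^{Wick})\big]$; the second factor is bounded by $C_{p,q}\|\tilde b\|$ thanks to the number estimate of Proposition~\ref{number}, while the trace norm of the first factor is controlled by Cauchy--Schwarz, $\|\varrho_{\varepsilon}\langle\mathbf N\rangle^{(p+q)/2}\|_{\mathcal L^{1}}\leq \Tr[\varrho_{\varepsilon}]^{1/2}\,\Tr[\varrho_{\varepsilon}\langle\mathbf N\rangle^{p+q}]^{1/2}$, which is uniformly finite by the moment hypothesis $\Tr[\varrho_{\varepsilon}\mathbf N^{\alpha}]\leq C_{\alpha}$. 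The second bound follows from $|b(z)|\leq\|\tilde b\|\,|z|^{p+q}$ together with the finiteness of all the moments $\int_{\mathcal Z}|z|^{2k}\,d\mu(z)<\infty$, granted by Theorem~\ref{th.Wigdef} and the argument leading to \eqref{eq.kmasse}.

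With these bounds in hand I would conclude by density. Since $\tilde b$ is compact it is the operator-norm limit of finite-rank operators $\tilde b_{n}$, whose symbols $b_{n}$ are cylindrical polynomials, depending only on finitely many coordinates. For each such $b_{n}$ the convergence $\Tr[\varrho_{\varepsilon}b_{n}^{Wick}]\to\int b_{n}\,d\mu$ holds by the finite-dimensional phase-space analysis of \cite{AmNi1}: one applies the defining convergence on $\mathcal S_{cyl}(\mathcal Z)$ after a smooth cylindrical cut-off $\chi_{R}\uparrow 1$, whose tail is controlled uniformly in $\varepsilon$ by the same moment estimates, and invokes the Wick--Weyl correspondence, whose correction terms carry positive powers of $\varepsilon$ and hence drop out in the limit. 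A standard triangle-inequality argument, splitting $\Tr[\varrho_{\varepsilon}(b-b_{n})^{Wick}]$ and $\int(b-b_{n})\,d\mu$ by the uniform bounds above, then upgrades the convergence from $b_{n}$ to $b$. The main obstacle is exactly this uniformity in $\varepsilon$ of the a priori estimate: $b$ is an \emph{unbounded} symbol and $b^{Wick}$ an unbounded operator, so without the number estimates of Proposition~\ref{number}, which trade the $\langle\mathbf N\rangle^{(p+q)/2}$ growth of $b^{Wick}$ against the uniformly bounded moments of $\varrho_{\varepsilon}$, neither the density reduction nor the cut-off control would go through.
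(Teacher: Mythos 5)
Your argument is correct and follows essentially the route the paper itself intends: the paper states this lemma without proof, remarking only that ``an a priori estimate argument'' extends the Weyl/cylindrical convergence to compact Wick kernels (the reference being Proposition 6.15 of \cite{AmNi1}), and your combination of the number estimate of Proposition \ref{number} with the uniform moment bounds, followed by finite-rank density and the finite-dimensional Wick--Weyl comparison, is exactly that argument. One cosmetic fix: an arbitrary finite-rank approximant of $\tilde b$ need not be cylindrical, so you should take $\tilde b_n = p_n^{\otimes q}\,\tilde b\,p_n^{\otimes p}$ with $p_n$ finite-rank orthogonal projections converging strongly to $\Id$, which converges to $\tilde b$ in operator norm precisely because $\tilde b$ is compact.
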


\section{Results in infinite dimension for a transport equation}
Recall that the Wasserstein distance is given by the formula
\begin{equation}
W_{2}(\mu_{1},\mu_{2})=\sqrt{\inf_{\mu \in \Gamma(\mu_{1},\mu_{2})}
\int_{\mathcal Z} | z_{1} - z_{2} |_{\mathcal Z}^{2} d\mu(z_{1},z_{2})},
\end{equation}
with $\Gamma(\mu_{1},\mu_{2})$ is the set of probability measures $\mu$ on
$\mathcal Z\times \mathcal Z$ such that the marginals
$(\Pi_{1})_{*}\mu=\mu_{1}$ and $(\Pi_{2})_{*}\mu=\mu_{2}$.
The following result is the second part of Theorem 8.3.1 in \cite{AGS}
with p=2. 

\begin{prop}
\label{prop.C1}
Let $I$ be an open interval in $\R$. If a weakly narrowly continuous curve
$\mu_{t}$: $I\to Prob_{2}(\mathcal Z)$ satisfies the continuity equation
\begin{equation*}
\partial_{t}\mu_{t}+\nabla^{T}(v_{t}\mu_{t})=0,
\end{equation*}
in the weak sense:
\begin{equation*}
  \int_{\R}\int_{\mathcal
    Z}(\partial_{t}\phi(z,t)+<v_{t}(z),\nabla_{z}\phi(z,t)>_{\mathcal
    Z})d\mu_{t}(z)dt=0, \forall \phi \in
  \mathcal{C}_{0,cyl}^{{\infty}} (\R \times \mathcal Z),
\end{equation*}
for some Borel velocity field $v_{t}$, with
$|v_{t}(z)|_{L^{2}(\mathcal Z,\mu_{t})} \in L^{1}(I)$, then $\mu_{t}$
is absolutely continuous with $W_{2}(\mu_{t},\mu_{t'}) \leq
\int_{t}^{t'}|v_{s}|_{L^{2}(\mathcal Z,\mu_{s})} ds$. Moreover for
Lebesgue almost every $t \in I$, $v_{t}$ belongs to the closure in
$L^{2}(\mathcal Z,\mu_{t})$ of the subspace spanned by $\{\nabla
\varphi, \varphi \in \mathcal C_{0,cyl}(\mathcal Z) \}.$
\end{prop}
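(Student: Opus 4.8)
The plan is to recognize this statement as the case $p=2$ of the second part of Theorem 8.3.1 in \cite{AGS}, transported to the separable Hilbert space $\mathcal Z$, and to obtain it by reduction to the finite dimensional Euclidean situation where the result is classical. First I would fix the Hilbert basis $(e_{n})_{n\in\N^{*}}$ and the finite rank projections $p_{n}$ onto $\C e_{1}\oplus\cdots\oplus\C e_{n}\cong\R^{2n}$ already used in Proposition~\ref{prop.existence}, and push the curve forward by setting $\mu_{t}^{n}=(p_{n})_{*}\mu_{t}$. Testing the weak continuity equation against cylindrical functions $\phi(t,z)=\chi(t)\,\psi(p_{n}z)$, with $\chi\in C_{0}^{\infty}(\R)$ and $\psi\in C_{0}^{\infty}(\R^{2n})$, and using $\nabla_{z}[\psi(p_{n}z)]=p_{n}^{*}\nabla\psi(p_{n}z)$, shows that $\mu_{t}^{n}$ solves in the weak sense on $\R^{2n}$ a continuity equation $\partial_{t}\mu_{t}^{n}+\nabla^{T}(v_{t}^{n}\mu_{t}^{n})=0$ whose velocity $v_{t}^{n}$ is the fiberwise average $\mathbb E_{\mu_{t}}[\,p_{n}v_{t}\mid p_{n}z\,]$ obtained by disintegrating $\mu_{t}$ along $p_{n}$. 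By Jensen's inequality this averaging does not increase the norm, so $|v_{t}^{n}|_{L^{2}(\R^{2n},\mu_{t}^{n})}\le|v_{t}|_{L^{2}(\mathcal Z,\mu_{t})}\in L^{1}(I)$.

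Next I would invoke the finite dimensional version of the theorem on $\R^{2n}$ to get that $t\mapsto\mu_{t}^{n}$ is absolutely continuous with
$$
W_{2}(\mu_{t}^{n},\mu_{t'}^{n})\le\int_{t}^{t'}|v_{s}^{n}|_{L^{2}(\mu_{s}^{n})}\,ds\le\int_{t}^{t'}|v_{s}|_{L^{2}(\mu_{s})}\,ds,
$$
the right hand side being independent of $n$. To return to $\mathcal Z$ I would let $n\to\infty$: since $|(\Id-p_{n})z|\le|z|$ pointwise and $\int_{\mathcal Z}|z|^{2}\,d\mu_{t}(z)<\infty$ because $\mu_{t}\in Prob_{2}(\mathcal Z)$, the coupling $(p_{n},\Id)_{*}\mu_{t}$ gives $W_{2}(\mu_{t}^{n},\mu_{t})^{2}\le\int_{\mathcal Z}|(\Id-p_{n})z|^{2}\,d\mu_{t}(z)\to0$ by dominated convergence, for each fixed $t$. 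As $W_{2}$ is continuous along $W_{2}$-convergent sequences, passing to the limit in the displayed inequality yields $W_{2}(\mu_{t},\mu_{t'})\le\int_{t}^{t'}|v_{s}|_{L^{2}(\mu_{s})}\,ds$, which is exactly the claimed absolute continuity with the stated bound on the metric derivative; no uniformity in $t$ is needed here.

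For the tangent space characterization I would use the variational identification of the driving field of minimal norm. Among all Borel fields producing the same continuity equation for $\mu_{t}$, the metric derivative of the curve equals, at almost every $t$, the smallest $L^{2}(\mathcal Z,\mu_{t})$ norm of such a field, and this minimizer is the orthogonal projection of $v_{t}$ onto the closed subspace $T_{\mu_{t}}=\overline{\{\nabla\varphi:\varphi\in\mathcal C_{0,cyl}(\mathcal Z)\}}^{L^{2}(\mathcal Z,\mu_{t})}$. Matching this minimal norm with the metric derivative obtained above forces the selected velocity to lie in $T_{\mu_{t}}$ for Lebesgue almost every $t$, the density of cylindrical gradients being what makes $T_{\mu_{t}}$ the tangent space adapted to the cylindrical weak formulation used in the statement.

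I expect the main obstacle to be the construction of the projected field $v_{t}^{n}$ and the verification that it drives the reduced equation, rather than the finite dimensional input or the limit $n\to\infty$. Concretely one must produce $v_{t}^{n}$ as a field that is jointly Borel in $(t,z)$ through the disintegration of $\mu_{t}$ along the fibers of $p_{n}$, and confirm both its integrability via Jensen and the weak form of the projected equation. The tangent statement is a separate variational fact about $Prob_{2}(\mathcal Z)$ whose proof rests on the duality characterization of the metric derivative. Since \cite{AGS} already carries out this entire program in a separable Hilbert space, one may alternatively just verify that our hypotheses — weak narrow continuity, the cylindrical weak form of the continuity equation, and $|v_{t}|_{L^{2}(\mu_{t})}\in L^{1}(I)$ — coincide with theirs and quote Theorem 8.3.1 directly.
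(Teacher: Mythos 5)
Your proposal is correct and matches the paper's treatment: the paper gives no independent proof of Proposition \ref{prop.C1}, it simply identifies the statement as the second part of Theorem 8.3.1 in \cite{AGS} with $p=2$, and your finite-dimensional projection, Jensen, and limiting argument is precisely the proof carried out there. The only point to watch is that the tangent-space clause is established in \cite{AGS} for the velocity field of minimal $L^{2}(\mathcal Z,\mu_{t})$-norm rather than for an arbitrary admissible $v_{t}$, a nuance your last paragraph already acknowledges.
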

\begin{prop}
\label{prop.C4}
Let $\mu_{t}: \R \mapsto Prob_{2}(\mathcal Z)$ be a $W_{2}$-continuous solution to the equation:
\begin{equation*}
\partial_{t}\mu_{t}+\nabla^{T}(v_{t}\mu_{t})=0,
\end{equation*}
in the weak sense:
\begin{equation*}
  \int_{\R}\int_{\mathcal Z}(\partial_{t}\phi(z,t)+\langle
  v_{t}(z),\nabla_{z}\phi(z,t)\rangle_{\mathcal Z})d\mu_{t}(z)dt=0,
  \forall \phi \in \mathcal{C}_{0,cyl}^{\infty}(\R \times \mathcal{Z}),
\end{equation*}
for a suitable $v(t,z)=v_{t}(z)$ such that $|v_{t}(z)|_{L^{2}(\mathcal Z,\mu_{t})} \in L^{1}([-T,T])$ for all $T>0$. Assume additionally that the Cauchy problem
\begin{equation*}
\partial_{t}\gamma(t)=v_{t}(\gamma(t)), \gamma(s)=x,
\end{equation*}
admits a unique global continuous solution on $\R$ for all $s \in \R$, and for all $z \in \mathcal Z$ such that $\gamma(t,s)=\phi(t,s)\gamma(s)$ defines a Borel flow on $\mathcal Z$. Then the measure $\mu_{t}$ satisfies $$\mu_{t}=\phi(t,s)_{*}\mu_{s}.$$
\end{prop}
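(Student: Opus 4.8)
The plan is to obtain $\mu_t=\phi(t,s)_*\mu_s$ from the superposition principle: one lifts the given solution of the continuity equation to a probability measure on the space of integral curves of $v_t$, and the assumed uniqueness of those curves forces the lift to be carried by the graph of the flow $\phi$.

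First I note that $t\mapsto\mu_t$ is $W_2$-continuous, hence weakly narrowly continuous, and that the hypothesis $|v_t(z)|_{L^2(\mathcal Z,\mu_t)}\in L^1([-T,T])$ together with the Cauchy--Schwarz inequality yields, for every $T>0$,
$$
\int_{-T}^{T}\int_{\mathcal Z}|v_\tau(z)|\,d\mu_\tau(z)\,d\tau
\le\int_{-T}^{T}|v_\tau|_{L^2(\mathcal Z,\mu_\tau)}\,d\tau<\infty,
$$
using that each $\mu_\tau$ is a probability measure. These are exactly the hypotheses under which the superposition principle (Theorem 8.2.1 in \cite{AGS}, in the form adapted to the separable Hilbert space $\mathcal Z$) applies. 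It furnishes a Borel probability measure $\eta$ on the path space $C(\R;\mathcal Z)$, concentrated on absolutely continuous curves $\gamma$ solving $\dot\gamma(\tau)=v_\tau(\gamma(\tau))$, and satisfying $\mu_\tau=(e_\tau)_*\eta$ for all $\tau\in\R$, where $e_\tau(\gamma)=\gamma(\tau)$ denotes the evaluation map.

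Next I would exploit the standing uniqueness assumption on the Cauchy problem $\dot\gamma=v_t(\gamma)$, $\gamma(s)=x$. Since $\eta$ is carried by solutions of this ODE, for $\eta$-almost every curve one has $\gamma(\tau)=\phi(\tau,s)\gamma(s)$, i.e. $e_t=\phi(t,s)\circ e_s$ on $\mathrm{supp}\,\eta$. Pushing $\eta$ forward therefore gives
$$
\mu_t=(e_t)_*\eta=(\phi(t,s)\circ e_s)_*\eta=\phi(t,s)_*(e_s)_*\eta=\phi(t,s)_*\mu_s,
$$
the Borel measurability of $\phi(t,s)$ ensuring that the last push-forward is well defined. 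As a consistency check one verifies directly by change of variables and the flow identity $\partial_t\phi(t,s)x=v_t(\phi(t,s)x)$ that $\phi(t,s)_*\mu_s$ solves the continuity equation, but this is not needed for the argument above.

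The delicate point is the superposition principle itself, stated in $\R^d$ in \cite{AGS}. Transferring it to infinite dimensions requires working with the weaker metrizable topology $d_w$, under which bounded balls of $\mathcal Z$ are compact and the uniform moment bounds \eqref{eq.kmasse} supply the tightness of $\{\mu_t\}$ needed to build $\eta$; one must also check that the cylindrical test functions in $C_{0,cyl}^{\infty}(\R\times\mathcal Z)$ entering the weak formulation suffice to identify the time-marginals $(e_\tau)_*\eta$ with $\mu_\tau$. I would handle this either by reducing to finite dimensions through the finite-rank projections $p_n$ and passing to the limit, or by invoking the infinite-dimensional version of the superposition principle already established in \cite{AmNi4}; the remaining steps are then routine.
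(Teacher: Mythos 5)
Your argument is the standard superposition-principle proof, and it is in substance the argument this paper relies on: Proposition~\ref{prop.C4} is stated here without any proof, as a result imported from the transport-equation appendix of \cite{AmNi4}, which in turn adapts Theorem~8.2.1 of \cite{AGS} to the Hilbert-space setting. Your skeleton --- lift $\mu_{t}$ to a Borel probability measure $\eta$ on path space concentrated on integral curves of $v_{t}$, obtain the bound $\int\int|v_{\tau}|\,d\mu_{\tau}\,d\tau<\infty$ from Cauchy--Schwarz, and then use the assumed uniqueness of the Cauchy problem to force $e_{t}=\phi(t,s)\circ e_{s}$ $\eta$-almost everywhere, whence $\mu_{t}=(e_{t})_{*}\eta=\phi(t,s)_{*}\mu_{s}$ --- is correct and complete modulo the superposition principle itself.

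The one place where you are too quick is exactly the point you flag, and one of your two proposed remedies does not work as stated. The reduction \emph{through the finite-rank projections $p_{n}$} is not routine: $(p_{n})_{*}\mu_{t}$ solves a continuity equation on $p_{n}\mathcal Z$ whose velocity field is the conditional expectation of $p_{n}v_{t}$ given $p_{n}z$, not $p_{n}v_{t}$ itself, so the projected ODE is neither the projection of the original one nor guaranteed to have a unique flow; one cannot apply the finite-dimensional Theorem~8.2.1 level by level and pass to the limit. What does work is to rerun the construction of $\eta$ on $(\mathcal Z,d_{w})$, where closed bounded sets are compact, so that a uniform second-moment bound on compact time intervals supplies the required tightness --- and note that this bound already follows from $\mu_{t}\in Prob_{2}(\mathcal Z)$ together with $W_{2}$-continuity, so you do not need (and, within the hypotheses of this proposition, do not have) the higher moments \eqref{eq.kmasse}, which belong to the application rather than to the statement. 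This adaptation is precisely what is carried out in \cite{AmNi4}, so your alternative of invoking that reference is the right move; with it in place the proof is correct.
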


\bibliographystyle{plain}
\bibliography{article1}

\end{document}